\newtheorem{theorem}{Theorem}
\theoremstyle{definition}
\newtheorem{remark}{Remark}
\newtheorem{corollary}{Corollary}
\newtheorem{lemma}{Lemma}
\newtheorem{example}{Example}
\newtheorem{claim}{Claim}
\newtheorem{definition}{Definition}
\newcommand{\reg}{\mathsf{Reg}}
\newcommand{\pare}{\eta}
\newcommand{\g}{g(\underline{z})}
\newcommand{\gdot}{g(.)}
\newcommand{\acc}{\mathcal{A}_{\reg}}
\newcommand{\acce}{\mathcal{A}_{\pare}}
\newcommand{\E}{\mathbb{E}}
\newcommand{\bu}{\mathbf{u}}
\newcommand{\by}{\mathbf{y}}
\newcommand{\bn}{\mathbf{n}}
\newcommand{\mT}{\mathcal{T}}
\newcommand{\mH}{\mathcal{H}}
\newcommand{\mA}{\mathcal{A}}
\newcommand{\est}{\mathsf{est}}
\newcommand{\DC}{\mathsf{DC}}
\newcommand{\AD}{\mathsf{AD}}
\newcommand{\MMSE}{\mathsf{MMSE}}
\newcommand{\PA}{\mathsf{PA}}
\begin{document}
\title{Game of Coding: Beyond Honest-Majority Assumptions
\thanks{
  The work of Mohammad Ali Maddah-Ali has been partially supported by the National Science Foundation under Grant CIF-1908291.
  The work of Mohammad Ali Maddah-Ali and Hanzaleh Akbari Nodehi has been partially supported by the National Science Foundation under Grant CCF-2348638. The work of Viveck R. Cadambe has been  supported by the National Science Foundation under Grants 2506573 and 2211045 .
  }
  \thanks{
  This work has been partially presented in the 2024 IEEE International Symposium on Information Theory (ISIT 2024), Athens, Greece, July 7 to July 12, 2024.  
  }
} 

% %%% Single author, or several authors with same affiliation:
% \author{%
%  \IEEEauthorblockN{Andrew R.~Barron}
%  \IEEEauthorblockA{Department of Statistics and Data Science\\
%                    Yale University\\
%                    New Haven, CT, USA\\
%                    Email: andrew.barron@yale.edu}
% }

%%% Several authors with up to three affiliations:
\author{%
  \IEEEauthorblockN{Hanzaleh Akbari Nodehi$^*$,}
  \and
  \IEEEauthorblockN{Viveck R. Cadambe$^\dagger$,}
  \and
  \IEEEauthorblockN{Mohammad Ali Maddah-Ali$^*$,\\}
$^*$University of Minnesota Twin                    Cities, $^\dagger$The Georgia Institute of Technology}

%%% Many authors with many affiliations:
% \author{%
%   \IEEEauthorblockN{Andrew R.~Barron\IEEEauthorrefmark{1},
%                     Claude E.~Shannon\IEEEauthorrefmark{2},
%                     David Slepian\IEEEauthorrefmark{2},
%                     and Jacob Ziv\IEEEauthorrefmark{2}\IEEEauthorrefmark{3}}
%   \IEEEauthorblockA{\IEEEauthorrefmark{1}%
%                    Department of Statistics and Data Science, Yale University, New Haven, CT, USA,
%                     andrew.barron@yale.edu}
%   \IEEEauthorblockA{\IEEEauthorrefmark{2}%
%                     Bell Telephone Laboratories, Inc.,
%                     Murray Hill, NJ, USA,
%                     \{csh,dsl,jz\}@bell-labs.com}
%   \IEEEauthorblockA{\IEEEauthorrefmark{3}%
%                     Department of Electrical Engineering, Technion---Institute of Technology, Haifa, Israel,
%                     jz@ee.technion.ac.il}
% }

\maketitle
%%%%%%
%% Abstract: 
%% If your paper is eligible for the student paper award, please add
%% the comment "THIS PAPER IS ELIGIBLE FOR THE STUDENT PAPER
%% AWARD." as a first line in the abstract. 
%% For the final version of the accepted paper, please do not forget
%% to remove this comment!
%%
\begin{abstract}
Coding theory revolves around the incorporation of redundancy into transmitted symbols, computation tasks, and stored data to guard against adversarial manipulation. However, error correction in coding theory is contingent upon a strict trust assumption.  In the context of computation and storage, it is required that honest nodes outnumber adversarial ones by a certain margin. However, in several emerging real-world cases, particularly, in decentralized blockchain-oriented applications, such assumptions are often unrealistic. Consequently, despite the important role of coding in addressing significant challenges within decentralized systems, its applications become constrained.  Still, in decentralized platforms, a distinctive characteristic emerges, offering new avenues for secure coding beyond the constraints of conventional methods. In these scenarios,  the adversary benefits when the legitimate decoder recovers the data, and preferably with a high estimation error. This incentive motivates them to act rationally, trying to maximize their gains. In this paper, we propose a game theoretic formulation for coding, called the game of coding, that captures this unique dynamic where each of the adversaries and the data collector (decoder) have respective utility functions to optimize. The utility functions reflect the fact that both the data collector and the adversary are interested in increasing the chance of data being recoverable by the data collector. Moreover, the utility functions express the interest of the data collector to estimate the input with lower estimation error, but the opposite interest of the adversary. As a first, still highly non-trivial step, we characterize the equilibrium of the game for the repetition code with a repetition factor of 2 for a wide class of utility functions with minimal assumptions.  
\end{abstract}
\section{Introduction}
Coding theory is a fundamental approach to ensuring the integrity and reliability of communication, computing, and storage systems, safeguarding them against errors, failures, faults, and adversarial behaviors. 
%\cite{yu2017polynomial, lee2017speeding, reisiz    adeh2019coded, tandon2017gradient, dutta2017coded, lee2017high, ben2019completeness, akbari2021secure,yu2019lagrange, cadambe2023differentially, nodehi2018entangled, chang2018capacity, d2020gasp, jia2021capacity, sun2017capacity, freij2017private, tajeddine2018private, zhang2019private}. 
\begin{comment}
%Specifically, within communication, coding serves by not only efficiently compressing data \cite{huffman1952method, welch1984technique}, but also employing error-correcting codes \cite{gallager1962low, tal2013construct} to mitigate the impact of noise during data transmission. 
In distributed computing, coding can play a role in addressing challenges such as slow nodes \cite{yu2017polynomial, lee2017speeding, reisizadeh2019coded, tandon2017gradient, dutta2017coded, lee2017high}, known as stragglers, and protecting the privacy of sensitive information \cite{ben2019completeness, akbari2021secure,yu2019lagrange, cadambe2023differentially, nodehi2018entangled, chang2018capacity, d2020gasp, jia2021capacity}. Furthermore, in storage systems, coding has been found useful in enabling retrieval even in the presence of slow nodes or adversarial elements, or preserving the privacy of the data \cite{sun2017capacity, freij2017private, tajeddine2018private, zhang2019private}. The application of coding theory across these domains underscores its significance as a fundamental tool for enhancing the robustness and security of information processing in various contexts.  
\end{comment}
While historically focused on discrete data (e.g., over finite fields) and exact recovery~\cite{SudanBook}, coding theory has also been expanded to the analog domain, allowing for approximate recovery~\cite{ZamirCoded, roth2020analog, jahani2018codedsketch,BACC}.

However, there exist some restrictive trust assumptions required for coding to protect the data against the error. For example, consider the ubiquitous Reed-Solomon (RS) code which is maximum distance separable, meaning that it meets the Singleton bound \cite{SudanBook}.
 Assume that we are using RS codes to encode $K$ input symbols to $N$ coded symbols, while $f$ of the coded symbols are delivered by an adversary\footnote{Errors in coding can come from many sources, e.g., hardware errors, etc. In this paper, we are focused on applications to decentralized systems like blockchains,  where the most important source of aberrant behavior is malicious adversarial nodes. } and  $h = N - f$ are delivered by honest parties. To recover data without any error, the condition $f \leq \frac{N - K}{2}$, or equivalently $f \leq h - K$, must hold. 
This requires an honest majority,
which is a significant trust assumption. 
If otherwise $K \leq h < f+K$, the decoder, as the data collector (DC), can detect the error, but not correct it.  Similar constraints hold in analog coding~\cite{roth2020analog}. 

In this paper, we target new applications of coding, such as decentralized (blockchain-oriented) machine learning (DeML) and oracles (See Section~\ref{sec:app} for more details). For such systems, a major limitation of existing coding frameworks is that the aforementioned trust limitation is hard to guarantee  in practical scenarios. Thus, error correction is impossible in regimes of interest. In other words, under conventional coding-theoretic models, the adversary can manipulate the data collector into rejecting the inputs and producing no output (denial of service attack).
Yet, such applications have incentive-centered design and thereby have unique dynamics that offer new avenues. Specifically, in such systems accepted contributions are rewarded, therefore adversaries are often incentivized to ensure that data is recoverable by the DC, rather than performing a denial of service attack. They would prefer that the system remains \emph{live} for the legitimate decoder, but the decoded value has a large error. In other words, the classical worst-case attacks studied in coding theory are too conservative for DeML applications and oracles, and miss the fact that the adversary also pays a penalty for executing such denial of service attacks. The main contribution of this paper is a new game-theoretic framework for coding called \emph{the game of coding} which aims to expand applications of coding techniques to blockchain-oriented applications including DeML and oracles. In this framework, each of the DC and the adversary, as the players, has a utility function to optimize. Importantly, rather than a worst-case adversary, we assume a rational adversary that maximizes its utility. In the context of analog coding, the utility functions of the players are functions of two metrics: (1) the probability that the DC accepts the input, (2) the error of estimating the original data  if inputs are accepted. Focusing on repetition coding with two nodes as the first step\footnote{We admit that a system with 2 nodes appears relatively simple, but even this simple-looking system offers much richness and technical complexity when studied from the game theoretic perspective that we develop.},  we characterize the equilibrium of the game.  Our results show that the game theoretic view opens new avenues of cooperation between the adversary and the data collector that enables liveness of the system even if the conventional trust assumptions (e.g., honest majority in the case of repetition codes) are not satisfied. We determine the optimum strategy of the data collector in accepting and rejecting the inputs, and the optimum noise distribution for the adversary that achieves the equilibrium. The result is broadly applicable as we make very minimal assumptions on the utility functions of the players.

%Adversarial nodes' utility function is defined by the probability of liveness, representing the acceptance of inputs, and the estimation error. Simultaneously, the data collector's utility function is contingent on the probability of liveness and the accuracy of the estimation. 

% However, there exist trust limitations when employing coding in new domains. For example, consider a 
% scenario where coding schemes are utilized for data storage across $n$ nodes, with $t$ nodes potentially acting adversarially, while $h = n - t$ nodes are honest and follow the protocol correctly. To correct possible errors injected by adversarial nodes, the condition $t \leq \frac{t + h - k}{2}$ must hold, resulting in $t \leq h - k$. Even in the simplest case, employing a repetition code $(k=1)$ necessitates $t < h$. This introduces a significant trust assumption, requiring an honest majority. While widely accepted in classical coding theory applications, such as communication, computation and storage, this assumption poses challenges in some new areas, like decentralized systems. 

\subsection{Applications and Motivation: Decentralized Platforms}
\label{sec:app}
Decentralized computing platforms are crafted to function without dependence on a central trusted entity. 
By operating as a form of state machine replication (SMR), blockchain consensus empowers decentralized computing platforms (e.g. Ethereum). Such computing platforms now support various applications beyond cryptocurrencies~\cite{bitcoin2008bitcoin, buterin2013ethereum}, including decentralized finance (DeFi), supply chain management, tokenization of assets, and identity management \cite{ruoti2019sok}. However, expanding the idea of decentralization to a broader set of powerful applications faces fundamental challenges, that the game of coding has the potential to resolve. Here we explain two major applications and their corresponding challenges, and how the game of coding can help to resolve those challenges. 

{\bf Application 1:  Decentralized Machine Learning (DeML) - The Challenge of Verifiable Computing:}
Machine learning is one of the major areas that researchers are actively aspiring to decentralize~\cite{shafay2023blockchain, ding2022survey, kayikci2024blockchain, taherdoost2023blockchain, taherdoost2022blockchain, bhat2023sakshi, tian2022blockchain, salah2019blockchain}. At the same time, many start-ups, e.g., \textsc{FedML}, \textsc{gensyn}, \textsc{KOSEN LABS}, \textsc{Ritual}, \textsc{EZKL}, \textsc{TOGHETHER},  are competing to develop the first realization of DeML. The challenge is that the underlying blockchain-oriented computation platforms are considered very weak. For example, Ethereum 2.0, recently unveiled following a substantial upgrade, is weaker than a standard desktop computer, albeit at a considerably higher cost.

A promising solution to address this issue involves moving the execution layer off-chain ~\cite{zhao2021veriml,thaler2022proofs}, while utilizing the blockchain for consensus (as an SMR), data availability (as a database), and settlement (for verification). In other words, resource-intensive computations are delegated to external entities. These external agents are tasked not only with \emph{executing the computations} but also with \emph{presenting cryptographic validity proofs} for verifying these computations. Because these validity proofs are verified over the blockchains, they must incur minimal computation overhead.  
However, solutions based on verifiable computing face two major challenges:

\begin{enumerate}
    \item \textbf{Resource Intensive Proof Generation:}  In verifiable computing schemes,  the process of generating proof is resource-intensive.  Specifically, in machine learning and artificial intelligence applications, generating proofs can demand more resources than the computation itself, making the approach non-scalable \cite{liu2021zkcnn, xing2023zero, weng2021mystique, mohassel2017secureml, lee2024vcnn}.
    The other issue is that in most of those schemes, in the proof generation process, the memory requirements tend to be exhaustive, which is an additional bottleneck~\cite{garg2023experimenting}.
    
    \item \textbf{Limited to Finite Field Operation and Exact Computation:} 
    Some of the most efficient verifiable computing schemes are typically devised by representing computations through arithmetic circuits defined over a finite field. When it comes to approximate computing over real numbers, especially for certain non-linear computations such as non-linear activation functions in deep neural networks or truncation for quantization, these processes don't naturally and efficiently translate into an arithmetic circuit. Studies like~\cite{weng2021mystique, chen2022interactive, garg2022succinct, setty2012taking} reveal the significant overhead incurred by
generating verifiable proof for computations involving fixed-point and floating-point arithmetic. \hspace*{\fill} $\blacksquare$
\end{enumerate} 

{\bf Application 2: Oracles -- The Problem of Unverifiable Inputs:}
One of the limitations of the blockchain is that smart contracts running on blockchains cannot directly access external information or data -- such as weather conditions or market prices.  Accessing external data is done through entities called \emph{oracles} ~\cite{eskandari2021sok, breidenbach2021chainlink, benligiray2020decentralized}. An oracle receives such information from some external agents, some of which are possibly controlled by the adversary. Moreover, even the data reported by honest agents are noisy, and thus not perfectly consistent. The problem is that honest nodes cannot generate any verifiable proof for the accuracy of their inputs.  \hspace*{\fill} $\blacksquare$

A solution to resolve these challenges is to leverage redundancy using error-correcting codes. For example, a smart contract representing a DeML application deployed on the blockchain can assign a computationally intensive task, such as multiple iterations of training, redundantly to $N$ external nodes. Subsequently, the contract collects the results and estimates the final outcome using the decoding algorithm of analog repetition codes. This approach can be further optimized by substituting repetition coding with more advanced coded computing techniques \cite{yu2017polynomial, jahani2018codedsketch, yu2019lagrange}. We note that in this approach, the external nodes do not need to provide any validity proof for their computation task. Rather, the decoding algorithm can detect and correct the errors. Moreover, the computation does not have to be exact and within a finite field. 
However, as mentioned, the downside of coding is that error correction is feasible only under very strong \emph{trust assumption}, which is often hard to provide at the application level of decentralized platforms~\cite{sliwinski2019blockchains, han2021fact, gans2023zero}. 

As another example, consider a scenario where some agents report a specific market value to an oracle (also referred to as the data collector (DC)). Adversaries can submit values severely contradicting those of honest agents. Under an honest-majority assumption, the oracle could use the \emph{median} to estimate the true value. However, in a trust-minimized scenario, where we don't have an honest majority, the oracle has no alternative but to reject all inputs.  In this case, adversaries lose potential \emph{rewards}, which are provided to encourage the agents to collaborate with the oracle. On the other hand, if the inputs are rejected, the oracle loses its \emph{liveness}. Thus, maintaining liveness is important for \emph{both} the oracle and adversaries. In addition, a live yet distorted oracle creates other opportunities, such as arbitrage opportunities, for adversarial agents. Consequently, the adversaries' utility functions are contingent on both the probability of the oracle's liveness and the estimation error of the oracle. On the other hand, the DC (here the oracle) has its own utility function that prioritizes both the probability of liveness and accuracy of the estimation. In the game of coding setup, the DC and adversarial nodes operate at an equilibrium with - possibly, depending on the utility functions -  a non-zero probability of liveness.

\subsection{Main Contributions of This Paper}
In summary, the main contributions of this paper are as follows:
\begin{enumerate}
    \item  We develop a new paradigm in coding theory by developing a novel game-theoretic framework to the study of code constructions\footnote{To the best of our knowledge, ours is the first to introduce a game-theoretic framework for coding theory.}. This framework leverages the mutual interest of both players—the adversarial nodes and the DC—in maintaining the system's liveness, despite their opposite objectives regarding estimation accuracy.   By incorporating the concepts of liveness and adopting a game-theoretic viewpoint,
    this framework offers a viable solution for trust minimized scenarios, where conventional coding theoretic analysis fails.  This framework expands the scope of coding theory, making it suitable for emerging applications such as decentralized machine learning.
    
    \item We ensure the full alignment and consistency with practical instantiations, like  decentralized machine learning by adopting a leader-follower setting known as the Stackelberg model. Here, DC is the leader and first commits its strategy by deploying a smart contract on the chain, while the adversary is the follower, selecting its best strategy in response to the DC's committed strategy. We characterize the equilibrium of the game   and determine the corresponding optimal strategies of the players, including acceptance rule for the DC and the noise distribution for the adversarial nodes. This is achieved using computationally feasible and straightforward algorithms. Importantly, our approach imposes minimal assumptions on the utility functions and the noise of the honest node, making it applicable to a wide range of practical applications.

    \item We demonstrate that at the equilibrium,  the system operates with an improved probability of liveness in a trust-minimized settings.  Recall that in those scenarios where there is only one honest node and multiple adversarial nodes, conventional coding theory yields a zero likelihood of liveness.  
\end{enumerate}

\subsection{Organization of The Paper}
The subsequent sections of this paper are structured as follows. In Section \ref{Informal Model}, we initiate by providing an informal definition of the problem formulation. This is further detailed formally in Section \ref{Formal Problem Settin}. Moving forward, Section \ref{Main results} delves into the main results, while the proof can be found in Sections \ref{proof:theorem: equivalence_two_problem}, \ref{proof:lemma:mean_is_near_optimal} and \ref{proof:lemma:best_noise}. Finally, in Section \ref{sec:conclusion}, we present our conclusion.

\subsection{Notation}
We use bold notation, for example, $\mathbf{y}$, to represent a random variable, while we don't use any bold symbol for a specific value of this random variable. For a case that a random variable $\mathbf{y}$ follows a uniform distribution on the interval $[-u, u]$, we use the notation $\mathbf{y} \sim \text{unif}[-u,u]$, where $u \in \mathbb{R}$. 
For any real numbers $x, a, b $, where $b > a$, the function $\underset{x}{\text{unif}}[a, b]$ is defined as $\frac{1}{b - a}$ when $a \leq x \leq b$, and it is equal to $0$ otherwise.
 The notation $[a]$ is the set of $\{1,\dots,a\}$, for $a \in \mathbb{N}$. 
 For any countable set $\mathcal{S}$, we denote its size as $|\mathcal{S}|$. 
 Each variable marked with an underline, such as $\underline{z}$, represents a vector. 
For any set $\mathcal{S}$ and an arbitrary function $f: \mathbb{R}^* \to \mathbb{R}$, the output of $\underset{x \in \mathcal{S}}{\arg\max} ~f (x)$ is a set comprising all elements $x$ in $\mathcal{S}$ that maximize $f (x)$. Similarly we define $\underset{x \in \mathcal{S}}{\arg\min} ~f (x)$.

\section{Problem formulation}
In this section, we first present a motivation for our problem set up and describe it informally. Subsequently, we introduce the formal problem formulation of the \emph{game of coding}, focusing on the scenario of a repetition code.  For simplicity of explanation, we present the informal set up for the scenario where there are just two nodes, whereas the formal set up considers the general case of arbitrary number of nodes.

\subsection{Informal Problem Setting}\label{Informal Model}
We consider a system (shown in Fig. \ref{fig:Two_node_model}), with two nodes, where one is honest, and the other is adversarial, and a DC. There is a random variable $\mathbf{u} \sim \text{unif}[-M,M]$, where $M \in \mathbb{R}$. The DC does not have any direct access to $\mathbf{u}$, rather it relies on the set of nodes to obtain an estimation of $\mathbf{u}$.

\begin{figure}[t]
    \centering
    \includegraphics[width=0.65\linewidth]{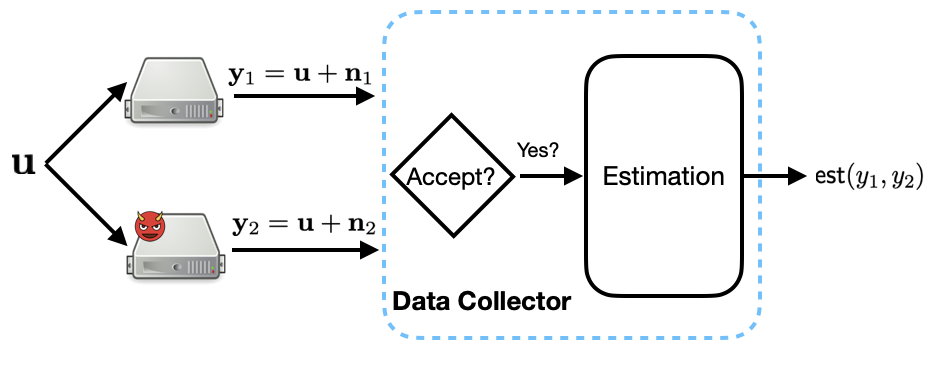}
\caption{Game of Coding for Repetition Coding with $N=2$. Node one is honest and the other one is adversarial. Each of these nodes reports a noisy version of $\mathbf{u}$ to DC. For the honest node, $\bn_1 \sim \text{unif}[-\Delta,\Delta]$, while for the adversarial node, $\mathbf{n}_2$ can have an arbitrary distribution $g(.)$, chosen by the adversary.  Upon receiving the data, the DC decides whether to accept or reject the inputs. If it is accepted, the DC outputs an estimation of $\mathbf{u}$. In this game, the data collector chooses the acceptance rule and the adversary chooses $g(.)$.
}

    \label{fig:Two_node_model}
\end{figure}

As an application, the variable $\mathbf{u}$ can be the result of some computation that the DC cannot run by itself. In this case, the nodes represent two servers, to which the computation has been offloaded. In another application, the variable $\mathbf{u}$ can also be some market value that the DC does not have access to.  The major challenge here is that one of the nodes is an adversary. We show the index of the honest node with $h \in \{1,2\}$ and the adversarial one with $a \in \{1,2\}$. 

The honest node sends $\by_h$ to the DC, where $\by_h = \bu + \bn_h$, $\bn_h \sim \text{unif}[-\Delta,\Delta]$, for some $\Delta \in \mathbb{R}$. In computing use cases, the noise $\bn_h$ represents the noise of rounding in fixed-point and floating-point computing or the noise of approximate computing (e.g. sketching, randomized quantization, random projection, random sampling, etc.), where exact computation is expensive or infeasible (see~\cite{han2023hyperattention} as an example on how approximate computing can reduce the computation load of large language model). In oracles, it represents the fact that honest nodes themselves have access to a noisy version of $\mathbf{u}$.

On the other hand, the adversarial nodes send $\by_a = \bu + \bn_a$ to the DC, where $\bn_a$ is some arbitrary noise independent of $\bu$. Indeed, the adversary has the option to choose the distribution of the noise $\bn_a$ as it wishes. We assume that the values of $M, \Delta$ are known by all players, and we have $\Delta \ll M$.

Upon receiving $(\by_1,\by_2)$, the DC decides to accept or reject the inputs. Then, the fundamental question is, under what conditions should the DC accept the inputs? We note that even when both of the nodes are honest, still $\by_1$ and $\by_2$ would be different, however, due to the distribution of $\bn_h$, $|\by_1 - \by_2| \leq 2\Delta$. This observation suggests an initial rule of acceptance in which the inputs are accepted if  $|\by_1 - \by_2| \leq 2\Delta$.  Although this narrow acceptance region seems reasonable at first sight, it may not well align with the interests of the DC. The reason is that it rejects any pair of inputs, even if that pair yields an \emph{accurate-enough} estimation of $\mathbf{u}$. 
Why does the DC kill the liveness of the system in this case?  
There can be a situation in which the adversary is unwilling to provide $\mathbf{u}$ within $\pm \Delta$ accuracy, and yet, its input can help the DC to estimate $\bu$ with enough accuracy. Specifically, the DC may be willing to trade the accuracy of the estimation to increase the liveness of the system. Liveness is also very important for the adversary for at least two reasons. First, in systems that incentivize good behavior, the adversary will be rewarded if the inputs are accepted. Second, the adversary can influence the estimated value only if the inputs are accepted and the system is live. If the system is not live, the adversary may lose the opportunity to influence the accepted value in the system.  This makes the adversary to act \emph{rationally} in choosing the distribution of $\bn_a$, considering its overall benefit.
Likewise, the DC is motivated to modify the acceptance region to $|\by_1 - \by_2| \leq \eta \Delta$, for some $\eta \in \mathbb{R}$, larger than 2. This motivation raises two important questions: what is the optimal choice for $\eta$? How the adversary should choose the distribution of $\bn_a$? To comprehensively address these questions, we introduce a game-theoretic formulation, called \emph{game of coding}. This framework can be used for general coding, while here, as the first step, we focus on two-node repetition coding.

In this formulation, the DC and the adversary have utility functions, respectively denoted by $\mathsf{U}_{\DC}(g(.), \eta)$ and $\mathsf{U}_{\AD}(g(.), \eta)$, to optimize, where $g(.)$ is a noise distribution that the adversary is utilizing. Let  $\mathcal{A}_{\eta}$  be the event that inputs are accepted, i.e.,  $\mA_{\eta}=\{ |\by_1-\by_2|\leq \eta \Delta\}$. Thus $\PA(g(.), \eta) \triangleq \Pr(\mathcal{A}_{\eta})$ is the probability of acceptance. 
% $\Pr(\mathcal{A})$ and $\ell\big(\mathbf{u}, \est(\mathbf{\underline{y}})\big)$. 
In addition, let  $\mathsf{MMSE}(g(.), \eta)= \underset{\est}{\min}\E[(\hat{\bu}-\bu)^2 | \mathcal{A}_{\eta}]$, where $\hat{\bu}=\est(\by_1,\by_2)$ is the estimation of $\bu$, using some estimation function $\est(.,.)$, if $(\by_1,\by_2)$ is accepted. The utility function $\mathsf{U}_{\DC}(g(.), \eta)$ of the DC is an increasing function of the probability of acceptance and a decreasing function of $\mathsf{MMSE}$. In addition,  the utility function $\mathsf{U}_{\AD}(g(.), \eta)$ of the adversary is an increasing function of the probability of acceptance and $\mathsf{MMSE}$. For example, $\mathsf{U}_{\DC}(g(.), \eta)=-\log(\MMSE(g(.), \eta))+ \log(\PA(g(.), \eta))$ and  $\mathsf{U}_{\mathsf{AD}}(g(.), \eta)=\log(\MMSE(g(.), \eta))+ 5\log(\PA(g(.), \eta))$.

In this arrangement, players would opt to play at the (mixed) Nash equilibrium,  where each player's strategy is optimal given the strategies of others. However, in a typical application of the game of coding,  the DC represents a smart contract, and thus it commits to a strategy first, as the code is deployed on the blockchain. The adversary observes the committed strategy of the DC, and chooses its strategy based on what the DC has committed to. This scenario falls to a special category of games, known \emph{Stackelberg games} \cite{von2010market}, where one of the players as leader, here the DC,  takes the initiative in committing to a strategy, and then the other player as the follower, here the adversary, responds based on the leader's actions.  

Then, the objectives of the game of coding are (i) to characterize the Stackelberg equilibrium of the game, (ii) determine the optimum choice of $\eta$ for the data collector and (ii) determine the optimum choice of noise distribution $g(.)$ for the adversary. 
\subsection{Formal Problem Setting}\label{Formal Problem Settin}
In this section, we describe the problem formulation formally.
We emphasize that the focus of the main results is on the specific case of two nodes and the region of acceptance is chosen from a particular but important class of regions $\{ |\by_1-\by_2|\leq \eta \Delta\}$. However, since the general problem formulation of the \emph{game of coding} is new, in this section,  we provide a general and formal problem formulation as a reference for further discussion.
% Let $F: \mathbb{R} \to \mathbb{R}$ be a known public function, and $u \triangleq F(x)$, for some $x \in \mathbb{R}$. 

We consider a system including $N\in \mathbb{N}$ nodes and a DC. There is a random variable $\mathbf{u} \sim \text{unif}[-M,M]$, where $M \in \mathbb{R}$. The DC aims to have an estimation of $\mathbf{u}$, however, it does not have any direct access to it. Rather it relies on the set of nodes to receive some information about $\bu$.  The random variable $\bu$ can originate, for example, from outsourced computations involving $N$ nodes (in the setup of distributed computing) or from external information inaccessible to the DC (in an oracle).
 However, not all the nodes are honest.  Indeed, a reason that the DC relies on $N$ nodes to report identical computation or information is to protect the accuracy of estimating $\bu$ in the presence of the adversarial nodes. 
 
The set of nodes $[N]$ is partitioned into two sets, honest nodes $\mH$ and adversarial nodes, $\mT$, where $\mH, \mT \subseteq[N]$, and $\mathcal{H} \cap \mT= \emptyset$. 
We assume that $|\mT| =f$, for some integer $f <N$.  
The subset $\mathcal{T}$ is selected uniformly at random from the set including all subsets of size $f$ of $[N]$. 
Neither the DC nor the set of nodes in $\mathcal{H}$ is aware of $\mT$.

An honest node $h \in \mathcal{H}$  sends $\by_h$ to the DC, where
\begin{align}
    \by_h=\bu+\bn_h, \ \ h \in \mathcal{H},
\end{align}
where the noise $\bn_h$ has a symmetric probability density function (PDF) 
$f_{\bn_h}$ over the bounded range $[-\Delta, \Delta]$, for some $\Delta \in \mathbb{R}$.   This noise represents the noise of approximate computing (e.g. sketching, randomized quantization, random projection, random sampling, etc.), where exact computation is expensive. In oracles, it represents the fact that honest nodes themselves have access to a noisy version of $\bu$. 
For simplicity, we assume that $\{\bn_h \}_{h\in \mathcal{H}}$ are independent and identically distributed (i.i.d.).

% {In general, we can assume that the distribution of $\bn_h$ is sub-Gaussian, indicating that the tail of the distribution decays exponentially fast.}

On the other hand, each adversarial node $a\in \mT$ sends some function, possibly randomized, of $\bu$ to the DC, 
\begin{align}
    \by_a=\bu+\bn_a, \ \ a \in \mT,
\end{align}
where $\{\bn_a\}_{a \in \mT} \sim g(\{n_a\}_{a \in \mT })$, for some joint PDF $g(.)$. The adversary chooses $g(.)$ and   
 the DC is unaware of it.  Here we assume that the adversary has access to the exact value of $\bu$. We assume that the PDFs of $\bu$ and $\bn_h$  are known by all of the players, and $\Delta \ll M$.

The DC receives $\underline{\by} \triangleq (\by_1, \dots, \by_N)$, and evaluates the results according to the following steps (See in Fig.~\ref{fig:General Model}):

\begin{figure}[t]
    \centering
    \includegraphics[width=0.65\linewidth]{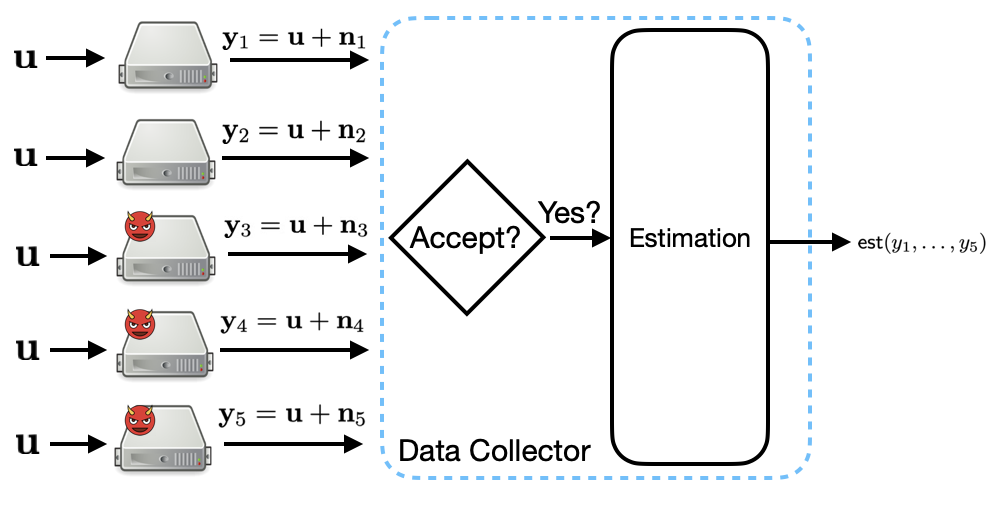}
\caption{
In this figure, there are a total of $N=5$ nodes, where $f=3$ of them are adversaries, depicted by red ones in the figure. Each node's task is to output $\mathbf{u}$. However, this process is affected by noise. The noise for honest nodes is 
given by $\mathbf{n}_h \sim \text{unif}[-\Delta, \Delta]$ for $h=1,2$. Conversely, for adversarial nodes, we have $[\mathbf{n}_a]_{a \in \mathcal{T}} \sim \gdot$, where $\gdot$ is an arbitrary distribution assumed to be 
independent of $\mathbf{u}$, and $\mathcal{T} = \{3,4,5\}$. Upon receiving the data, the DC checks whether $\underline{y}$ falls within the region $\reg$ or not. If not, it rejects it; otherwise, it accepts it and outputs 
$\est(\underline{y})$ as its estimation.}
    \label{fig:General Model}
\end{figure}

\begin{enumerate}
\item \textbf{Accept or Reject}: The DC considers an acceptance region $\reg \subseteq \mathbb{R}^N$, and accepts $\underline{\by}$, if and only if $\underline{\by} \in \reg$.
We denote by $\mA_{\reg}$, the event of the inputs being accepted, that is, the event of  $\underline{\by} \in \reg$. In addition, we denote 
\begin{align}
\mathsf{PA} \left( \gdot, \reg \right)=\Pr(\mA_{\reg}).   
\end{align}

\item \textbf{Estimation}: If the inputs are accepted the DC outputs $\est(\underline{\by})$  
 as its estimate of $\bu$,  where $\est: \mathbb{R}^N \to \mathbb{R}$. We define the cost of estimation as
\begin{align*}
\mathsf{MSE}\left(\est(.,.), \gdot, \reg\right) \triangleq \mathbb{E}\left[\big(\mathbf{u} - \est(\mathbf{\underline{y}})\big)^2 | \acc\right],
\end{align*}
which is the mean square error (MSE) for the estimation function $\est$, given that observed data has been accepted.
The DC chooses $\est$ to minimize the mean square error. We define the 
 minimum mean square estimator  as
 \begin{align}\label{definition_best_estimator}
 \est^*_{\reg,g} \triangleq \underset{\est: \mathbb{R}^N \to \mathbb{R}}{\arg\min} ~\mathsf{MSE}\big(\est(.,.), \gdot, \reg\big),
 \end{align}
 and also the 
 minimum mean square estimation error (MMSE)  as
 \begin{align}\label{MMSE_definition}
\mathsf{MMSE}\left(\gdot, \reg \right) \triangleq \mathsf{MSE}\left( \est^*_{\reg,g}(.,.), \gdot, \reg \right).
 \end{align}
 
\end{enumerate}
 We consider a two-player game with DC and the adversary as the players.   Each player has a utility function that seeks to maximize. The utility function of the DC is denoted by 
\begin{align*}
\mathsf{U}_{\mathsf{DC}}\left( \gdot, \reg \right) \triangleq Q_{\mathsf{DC}} \left( \mathsf{MMSE}\big(\gdot, \reg \big), \mathsf{PA} \left( \gdot, \reg \right)\right),
\end{align*}
where $Q_{\mathsf{DC}}: \mathbb{R}^2 \to \mathbb{R}$, is a non-increasing function with respect to its first argument and a non-decreasing function with respect to the second one.  More precisely, for a fixed value for the second argument, $Q_{\mathsf{DC}}(.,.)$ is monotonically non-increasing in the first argument. Similarly, for a fixed value for the first argument, $Q_{\mathsf{DC}}(.,.)$ is monotonically non-decreasing in the second argument.
The DC chooses its strategy from the action set  $\Lambda_{\mathsf{DC}} = \left\{ \reg | ~ \reg \subseteq \mathbb{R}^N \right\}$. 

The utility function of the adversary is denoted by 
\begin{align*}
\mathsf{U}_{\mathsf{AD}}\left( \gdot, \reg \right) \triangleq Q_{\mathsf{AD}} \left( \mathsf{MMSE}\big(\gdot, \reg \big), \mathsf{PA} \left( \gdot, \reg \right)\right),
\end{align*}
where  $Q_{\mathsf{AD}}: \mathbb{R}^2 \to \mathbb{R}$ is a strictly increasing function with respect to both of its arguments. More precisely, for a fixed value for the second argument, $Q_{\AD}(.,.)$ is strictly increasing in the first argument. Similarly, for a fixed value for the first argument, $Q_{\AD}(.,.)$ is strictly decreasing in the second argument.
The adversary chooses its strategy from its action set,  encompassing all possible noise distributions, i.e., $\Lambda_{\mathsf{AD}} = \left\{ \g |  ~\text{where} ~g:\mathbb{R}^f \to \mathbb{R} ~\text{is a valid PDF} \right\}$.

For this game, we aim to determine a \textbf{Stackelberg equilibrium}. Specifically, we assume that the DC plays the role of the leader, while the adversary acts as the follower. For each acceptance region $\reg \in \Lambda_{\mathsf{DC}}$ to which  the DC is committing, we define the set of best responses of the adversary as
\begin{align}
    \mathcal{B}^{\reg}_{\mathsf{AD}} \triangleq  \underset{\gdot \in \Lambda_{\mathsf{AD}}}{\arg\max} ~ {\mathsf{U}}_\mathsf{AD}\big(\gdot, \reg \big) .
\end{align}
 For every $\reg \in \Lambda_{\mathsf{DC}}$ that DC commits to, the adversary may choose any of $g^*(.) \in \mathcal{B}^{\reg}_{\mathsf{AD}}$ because each element of $\mathcal{B}^{\reg}_{\mathsf{AD}}$  has the same utility for the adversary. However, different elements of  may have different utility for the DC.
 For every $\reg \in \Lambda_{\mathsf{DC}}$ that DC commits to, the adversary may choose any of $g^*(.) \in \mathcal{B}^{\reg}_{\mathsf{AD}}$ because each element of $\mathcal{B}^{\reg}_{\mathsf{AD}}$  has the same utility for the adversary. However, different elements of $\mathcal{B}^{\reg}_{\mathsf{AD}}$  may have different utilities for the DC.
 We define
 \begin{align}\label{chaiveble_best_response}
    \Bar{\mathcal{B}}^{\reg}_{\mathsf{AD}} \triangleq  \underset{\gdot \in \mathcal{B}^{\reg}_{\mathsf{AD}}}{\arg \min} ~ {\mathsf{U}}_\mathsf{DC}\left(\gdot, \reg \right).
 \end{align}

 \begin{definition}
      For any $\gdot \in \Lambda_{\AD}$ and $\reg \in \Lambda_{\DC}$, a pair $(\gdot, \reg)$ is an achievable pair, if and only if we have  $\gdot \in \Bar{\mathcal{B}}^{\reg}_{\mathsf{AD}}$.  
      % We say $\left( \mathsf{MMSE}\big(\gdot, \reg \big), \mathsf{PA} \left( \gdot, \reg \right)\right)$ or interchangeably $(\mathsf{MMSE}, \PA)$ is an achievable pair, if $(\gdot, \reg)$ is achievable.
 \end{definition}
Note that if we consider a $\reg \in \Lambda_{\DC}$ and fix it, then for all $\gdot \in \Bar{\mathcal{B}}^{\reg}_{\mathsf{AD}}$, the value of ${\mathsf{U}}_\mathsf{DC}\left(\gdot, \reg \right)$ remains the same. 
We define
\begin{align}\label{stackleberg-eqili}
    \reg^* = \underset{\reg \in \Lambda_{\mathsf{DC}}}{\arg\max}  ~ {\mathsf{U}}_\mathsf{DC}\left(\gdot, \reg \right),
\end{align}
where in the above $\gdot$ is an arbitrary element in $\Bar{\mathcal{B}}^{\reg}_{\mathsf{AD}}$.
We define the notion of  Stackelberg equilibrium as follows.
\begin{definition}\label{def:stack_equil}
    For any $g^*(.) \in \Bar{\mathcal{B}}^{\reg^*}_{\mathsf{AD}}$, we call the pair of $\left(
{\mathsf{U}}_\mathsf{DC}\left(g^*(.), \reg^* \right), {\mathsf{U}}_\mathsf{AD}\left(g^*(.), \reg^* \right)
    \right)$ as a Stackelberg equilibrium. Interchangeably,  we call  $\left( \mathsf{MMSE}\big(g^*(.), \reg^* \big), \mathsf{PA} \left( g^*(.), \reg^* \right)\right)$ or  $(\mathsf{MMSE}^*, \PA^*)$ as a Stackelberg equilibrium.
\end{definition}
\begin{remark}
        To support limiting cases, we extend Definition \ref{def:stack_equil} as follows.   Let  $\{(g_n(.), \reg_n)\}_{n=1}^\infty$ be a sequence of achievable pairs, where the corresponding 
         pairs $\left(
{\mathsf{U}}_\mathsf{DC}\left(g_n(.), \reg_n \right), {\mathsf{U}}_\mathsf{AD}\left(g_n(.), \reg_n \right)
    \right)$ converges to $(\mathsf{U}^*_\mathsf{DC}, \mathsf{U}^*_\mathsf{AD})$ as $n \rightarrow \infty$.  
        Then, $(\mathsf{U}^*_\mathsf{DC}, \mathsf{U}^*_\mathsf{AD})$ is called a Stackelberg equilibrium of the game, if  
$\mathsf{U}^*_{\mathsf{DC}} \geq \mathsf{U}_{\mathsf{DC}}(\gdot, \reg)$, for any achievable  pair  $(\gdot, \reg)$.

%         $(\mathsf{MMSE}^*, \PA^*)$ is called the Stackelberg equilibrium of the game, if  
% $\mathsf{U}_{\mathsf{DC}}(\mathsf{MMSE}^*, \PA^*) \geq \mathsf{U}_{\mathsf{DC}}(\mathsf{MMSE}, \PA)$, for any achievable  pair  $(\mathsf{MMSE}, \PA)$.
    \end{remark}
    
    The objective is to characterize  Stackelberg equilibrium of the game.

\begin{remark}
    Using game theory to design secure systems and model the interplay between different players has been widely adopted in other areas \cite{han2010physical, saraydar2002efficient, wang2008distributed, bonneau2008non}. An alternative approach is to define the game in a Nash framework, where all players simultaneously select their strategies at equilibrium, with each player’s strategy being the optimal response to the strategies chosen by the others. However, in the game of coding, the leader-follower framework—known as the Stackelberg game—emerges as the natural and more suitable choice. In our setting, the DC first commits to its strategy by deploying a smart contract on the blockchain, thereby revealing its decision rule. The adversary then chooses its strategy in response to the DC’s committed strategy, effectively maximizing its utility with respect to that choice.

\end{remark}

\begin{remark}
    Our problem formulation is based on repetition coding. It is straightforward to generalize this formulation to 
    accommodate general coding schemes, however, the characterization of their Stackelberg equilibrium is an area of future work. Also, in this paper we constrain the acceptance rules to the form of $|\by_1-\by_2| \leq \eta \Delta$.  In a forthcoming paper, we will demonstrate that general acceptance regions can be replaced by a mixed strategy involving  $|\by_1-\by_2| \leq \eta \Delta$, where $\eta$ is chosen according to a distribution, without losing the performance.
\end{remark}    

\begin{remark}
    In this paper, we consider the case with one honest node and one adversarial node. A natural question that arises is: what is the effect of having \(N>2\) nodes? In our subsequent work \cite{akbari2024game}, we generalize the game of coding framework to scenarios with \(N \geq 2\) nodes, exploring critical aspects of system behavior. In particular, we show that the adversary’s utility at equilibrium does not increase with the number of nodes under their control. This result implies that adversaries gain no strategic advantage by duplicating themselves (i.e., launching Sybil attacks), thereby demonstrating that the framework is inherently Sybil-resistant. This property is essential for maintaining robustness in larger decentralized systems where trust assumptions may not hold. Furthermore, note that in this paper, and also in \cite{akbari2024game}, we assume that the game is complete, meaning that both the DC and the adversary know each other’s utility functions. While it is reasonable to assume that the adversary, as a powerful player, is aware of the DC's utility function, the reverse is often unrealistic. In many practical scenarios, the DC typically lacks knowledge of the adversary's utility function. This lack of perfect information, known as the \emph{game of incomplete information}, introduces uncertainty for the DC in choosing the optimal strategy. In \cite{akbarinodehi2025game}, we address this limitation by developing an algorithm that enables the DC to commit to a strategy that approximates the equilibrium, even without full knowledge of the adversary’s utility function.

\end{remark}

\begin{comment}
    Furthermore, note that in this paper—and also in \cite{akbari2024game}—we assume that the game is complete, meaning that both the DC and the adversary know each other’s utility functions. While it is reasonable to assume that the adversary, as a powerful player, is aware of the DC's utility function, the reverse is often unrealistic. In many practical scenarios, the DC typically lacks knowledge of the adversary's utility function. This lack of perfect information, known as the \emph{game of incomplete information}, introduces uncertainty for the DC in choosing the optimal strategy. In \cite{akbarinodehi2025game}, we address this limitation by developing an algorithm that enables the DC to commit to a strategy that approximates the equilibrium, even without full knowledge of the adversary’s utility function.
\end{comment}

\section{Main Results}\label{Main results}
In this section, we present the main results of the paper, alongside an illustrative example for clarity.  In this paper, we restrict ourselves to a scenario where $N=2$, $f=1$, and the action set of the data collector is given by $\Lambda_{\mathsf{DC}} = \big\{ \pare ~| ~ \eta \geq 2 \big\}$.  Define the region $\reg(\eta) \triangleq \{ (y_1,y_2)~| ~|y_1-y_2| \leq \eta \Delta\}$. For notational simplicity, we  use the parameter $\eta$ to represent entire region $\reg(\eta)$, whenever it is clear from the context.
 
   Based on \eqref{chaiveble_best_response} and \eqref{stackleberg-eqili}, we can write
\begin{align}
    \eta^* 
     =\underset{\pare \in \Lambda_{\mathsf{DC}}}{\arg\max} ~ \underset{\gdot \in \mathcal{B}^{\eta}}{\min} ~ Q_{\mathsf{DC}} \left( \mathsf{MMSE}\left(\gdot, \pare \right), \mathsf{PA} \left( \gdot, \pare \right)\right). \label{eq:etastar}
\end{align}
At the outset, (\ref{eq:etastar}) represents a complicated optimization problem because: (i) It depends on the utility functions of the adversary and the DC, and we make no assumptions on these utility functions apart from the fact that the former is strictly increasing in both arguments and the latter is non-increasing in the first and non-decreasing in the second, and (ii) it is an infinite dimensional optimization problem over the space of all possible noise distributions $g(.)$ of the adversary. 

We circumvent issue (i) through an intermediate optimization problem that is independent of the players' utility functions. For each $0 < \alpha \leq 1$, consider the following optimization problem referred to as Opt. 1.
\begin{align}\label{C_definition}
    \textrm{Opt. 1:}   \quad c_{\eta} (\alpha) \triangleq 
    \underset{\gdot \in \Lambda_{\mathsf{AD}}}{\max} ~ \underset{\mathsf{PA} \left( \gdot, \pare \right) \geq \alpha}{\mathsf{MMSE}\left(\gdot, \pare \right)}.
\end{align}

We show that the equilibrium of the Stackelberg game can be readily obtained - using knowledge of $c_{\eta}(\alpha)$ as defined in (\ref{C_definition}) - through a two-dimensional optimization problem outlined in Algorithm \ref{Alg:finding_eta}. Algorithm \ref{Alg:finding_eta} takes utility functions $Q_{\mathsf{AD}}(., .), Q_{\mathsf{DC}}(., .)$, and $c_{\eta}(.)$ - the result of OPT. 1 - as the inputs and outputs $\hat{\eta}$. The following theorem establishes the correctness of Algorithm \ref{Alg:finding_eta}.

\begin{theorem}\label{theorem: equivalence_two_problem}
    Let $\hat{\eta}$ be the output of Algorithm \ref{Alg:finding_eta}. We have
    $\eta^* = \hat{\eta}$.
\end{theorem}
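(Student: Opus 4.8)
The plan is to show the two optimization problems defining $\eta^*$ and $\hat\eta$ have the same optimal value and optimizer, by showing that the inner ``min over best responses of the adversary'' in \eqref{eq:etastar} can be re-expressed purely in terms of the function $c_\eta(\alpha)$ from Opt.~1. First I would fix $\eta$ and analyze the adversary's best-response set $\mathcal B^\eta$. Since $Q_{\mathsf{AD}}$ is strictly increasing in both arguments, any best response must be Pareto-optimal for the adversary in the $(\mathsf{MMSE}, \mathsf{PA})$-plane: increasing either coordinate while not decreasing the other strictly increases $\mathsf U_{\mathsf{AD}}$. The key structural observation is therefore that the set of achievable pairs $(\mathsf{MMSE}(g,\eta),\mathsf{PA}(g,\eta))$ as $g$ ranges over $\Lambda_{\mathsf{AD}}$ has an upper boundary described exactly by the curve $\alpha \mapsto c_\eta(\alpha)$; Opt.~1 computes, for each acceptance-probability level $\alpha$, the largest MMSE the adversary can force. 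I would need to argue (or cite from the later proof that this is the content of Lemma on $c_\eta$) that this region is ``nice'' enough — e.g.\ that $c_\eta$ is monotone/continuous on $(0,1]$ and the supremum is attained or attained in a limit — so that the adversary's optimal utility over $\mathcal B^\eta$ is $\max_{0<\alpha\le 1} Q_{\mathsf{AD}}(c_\eta(\alpha),\alpha)$, and the DC-pessimal selection in $\bar{\mathcal B}^\eta_{\mathsf{AD}}$ corresponds to picking, among maximizers $\alpha$, the one minimizing $Q_{\mathsf{DC}}(c_\eta(\alpha),\alpha)$.

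Next I would translate the DC's outer problem. Plugging the above characterization into \eqref{eq:etastar}, the DC's value at a committed $\eta$ becomes $\min_{\alpha \in A^*(\eta)} Q_{\mathsf{DC}}(c_\eta(\alpha),\alpha)$, where $A^*(\eta) = \arg\max_{0<\alpha\le1} Q_{\mathsf{AD}}(c_\eta(\alpha),\alpha)$, and then $\eta^*$ maximizes this over $\eta \ge 2$. This is precisely the two-dimensional search that Algorithm~\ref{Alg:finding_eta} performs: an inner loop over $\alpha$ (to find the adversary's best responses and then the DC-worst one among them via $Q_{\mathsf{DC}}$), wrapped in an outer loop over $\eta$. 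So the proof reduces to (a) verifying line-by-line that the algorithm's inner computation returns $\min_{\alpha\in A^*(\eta)} Q_{\mathsf{DC}}(c_\eta(\alpha),\alpha)$, and (b) verifying the outer loop returns the $\arg\max$ over $\eta$. I would then separately handle the limiting-case clause of the extended Stackelberg equilibrium definition: when the supremum over $g$ in Opt.~1 is not attained, or when $A^*(\eta)$ is not closed, the equilibrium is defined via a sequence of achievable pairs, and I must check that $Q_{\mathsf{DC}},Q_{\mathsf{AD}}$ evaluated along the $c_\eta$-curve still deliver the claimed sup/inf in the limit — this is where continuity of $c_\eta$ and of the $Q$'s (or at least upper/lower semicontinuity) does the work.

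I expect the main obstacle to be rigorously justifying the reduction from ``optimizing over all noise distributions $g$'' to ``optimizing over the scalar $\alpha$ via $c_\eta(\alpha)$'' — in particular, showing that every Pareto-optimal $(\mathsf{MMSE},\mathsf{PA})$ pair the adversary can achieve lies on the graph of $c_\eta$, and conversely that $c_\eta(\alpha)$ is genuinely achievable (or achievable in the limit) by some $g$ with $\mathsf{PA}(g,\eta)$ exactly $\alpha$ rather than strictly greater. One subtlety is that Opt.~1 uses the constraint $\mathsf{PA}(g,\eta)\ge\alpha$, not equality, so I must check monotonicity of $c_\eta$ in $\alpha$ and argue that the adversary, wanting large MMSE, will never benefit from a strictly larger acceptance probability than needed unless the utility $Q_{\mathsf{AD}}$ rewards it — which is exactly why the $\arg\max$ over $\alpha$ of $Q_{\mathsf{AD}}(c_\eta(\alpha),\alpha)$ correctly captures the best response even though $c_\eta$ is defined with an inequality. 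A second obstacle, more bookkeeping than conceptual, is carefully matching the DC-pessimal tie-breaking rule $\bar{\mathcal B}^\eta_{\mathsf{AD}}$ with the corresponding step in Algorithm~\ref{Alg:finding_eta}, and confirming that the ``arbitrary element of $\bar{\mathcal B}^{\reg}_{\mathsf{AD}}$'' in \eqref{stackleberg-eqili} is well-defined because $\mathsf U_{\mathsf{DC}}$ is constant on that set. I do not expect to need any property of the honest noise beyond symmetry and bounded support at this stage, since all of it is absorbed into $c_\eta(\cdot)$, whose fine structure is the subject of the later lemmas.
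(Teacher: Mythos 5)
Your proposal follows essentially the same route as the paper: the paper formalizes your ``best responses lie on the Pareto frontier $\alpha\mapsto c_\eta(\alpha)$'' observation as two set inclusions (first that the best-response pairs $\mathcal{J}_\eta$ sit inside the graph of $c_\eta$, then that $\mathcal{J}_\eta$ equals exactly the points where $\alpha$ maximizes $Q_{\mathsf{AD}}(c_\eta(\alpha),\alpha)$), each proved by contradiction using strict monotonicity of $Q_{\mathsf{AD}}$ in both arguments — including your point about the $\geq\alpha$ versus $=\alpha$ constraint, which is resolved exactly as you anticipate via strict monotonicity in the second argument. The reduction to the two-dimensional search of Algorithm~\ref{Alg:finding_eta} then follows as you describe, so your plan matches the paper's proof in both structure and the key lemmas needed.
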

    The theorem is proved in  Section \ref{proof:theorem: equivalence_two_problem}.

\begin{algorithm}[t]
\caption{Finding the optimal Decision Region}
\label{Alg:finding_eta}
\begin{algorithmic}[1]
\State \textbf{Input:} Functions $Q_{\mathsf{AD}}(., .), Q_{\mathsf{DC}}(., .)$, and $c_{\eta}(.)$
\State \textbf{Output:} $\hat{\eta}$

\vspace{1em} % Add space between Step 1 and Step 2

\State \textbf{Step 1:}
\State Calculate the set $\mathcal{L}_{\eta} = \underset{0 < \alpha \leq 1 }{\arg\max} ~Q_{\mathsf{AD}}(c_{\eta} (\alpha), \alpha)$

\State \textbf{Step 2:}
\State Calculate $\hat{\eta} = \underset{\pare \in \Lambda_{\mathsf{DC}}}{\arg\max} ~ \underset{\alpha \in \mathcal{L}_{\eta}}{\min} ~ Q_{\mathsf{DC}} \left(c_{\eta} (\alpha), \alpha\right)$
\end{algorithmic}
\end{algorithm}

Algorithm \ref{Alg:finding_eta} is a two-dimensional optimization problem which is computationally feasible for well-behaved utility functions\footnote{For example, if the utility functions are Lipschitz-smooth, then a simple two-dimensional grid search yields an $\epsilon$-approximation of the optimal in $O(1/\epsilon^2)$ complexity.}.  Thus,  much of the technical challenges of finding $\eta^*$ lies in obtaining a characterization of the function $c_{\eta}(.)$. 
Theorem \ref{theorem:Main_Bound_For_Max_Problem} resolves this task. We denote the cumulative distribution function (CDF) of $\bn_h$ by $F_{\bn_h}$ and the PDF of $\bn_h$ by $f_{\bn_h}$. Recall that $\Pr (|\bn_h| > \Delta) = 0$, for some $\Delta \in \mathbb{R}$. This implies that  $F_{\bn_h}(-\Delta) = 0$, and $F_{\bn_h}(\Delta) = 1$. We assume that $F_{\bn_h}(.)$ is a strictly increasing function in $[-\Delta, \Delta]$, i.e., for all $-\Delta \leq a < b \leq \Delta$, we have 
$F_{\bn_h}(a) < F_{\bn_h}(b)$.
\begin{theorem}\label{theorem:Main_Bound_For_Max_Problem}
    For any $\pare \in \Lambda_{DC}$, 
    and $0 < \alpha \leq 1$, we have 
    \begin{align}
      \frac{h^*_{\eta}(\alpha)}{4\alpha} -\frac{(\eta^2+4)(\eta+2)\Delta^3}{M}   \leq c_{\eta}(\alpha) 
       \leq \frac{h^*_{\eta}(\alpha)}{4\alpha},
    \end{align}
    where $h^*_{\eta}(q)$ is the concave envelop of the function\footnote{Note that since we assumed $F_{\bn_h}(.)$ is a strictly increasing function in $[-\Delta, \Delta]$, the inverse function of $k_{\eta}(.)$ exists.} $ h_{\eta}(q) \triangleq \nu_{\eta}(k_{\eta}^{-1} (q))$, $0 \leq q \leq 1$,  for $\nu_{\eta}(z) \triangleq \int_{z-\eta\Delta}^{\Delta} (x+z)^2f_{\bn_h}(x)\,dx$ and $k_{\eta}(z) \triangleq \int_{z-\eta\Delta}^{\Delta} f_{\bn_h}(x)  \,dx$, $(\eta-1)\Delta \leq z \leq  (\eta+1)\Delta$.
\end{theorem}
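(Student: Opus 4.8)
The plan is to solve Opt.~1 by reducing the infinite-dimensional optimization over adversarial noise distributions $g(.)$ to a finite-dimensional one, exploiting the structure of the acceptance event $\mathcal{A}_\eta = \{|\by_1-\by_2|\leq \eta\Delta\}$ and the fact that $\Delta \ll M$. Since $\by_1 - \by_2 = \bn_h - \bn_a$ (when the honest node is node $1$; the roles are symmetric), the event $\mathcal{A}_\eta$ depends only on the difference $\bn_h - \bn_a$, and the adversary effectively controls only the distribution of $\bn_a$ restricted to the band where acceptance can happen. First I would fix a value $n_a = -z$ for the adversarial noise realization and compute, conditioned on $\bn_a = -z$, two quantities: the conditional acceptance probability, which is $\Pr(|\bn_h + z| \le \eta\Delta \mid \bn_a = -z) = \int_{z-\eta\Delta}^{z+\eta\Delta \wedge \Delta} f_{\bn_h}(x)\,dx$ — and because $|\bn_h| \le \Delta$, for the relevant range $(\eta - 1)\Delta \le z \le (\eta+1)\Delta$ the upper limit is $\Delta$, giving exactly $k_\eta(z)$ — and the conditional second-moment contribution to the estimation error. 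The key modeling insight is that since $\Delta \ll M$ and $\bu \sim \text{unif}[-M,M]$, the estimator can essentially recover $\bu$ up to $O(\Delta)$ and the conditional MSE, conditioned on $\bn_a = -z$ and on acceptance, is governed by $\nu_\eta(z)/k_\eta(z)$ up to an additive $O(\Delta^3/M)$ correction coming from boundary effects near $\pm M$. This is where the $(\eta^2+4)(\eta+2)\Delta^3/M$ error term will originate.

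**From conditional to unconditional, and the concave envelope.**

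Next I would write the overall $\mathsf{MMSE}$ and $\mathsf{PA}$ as mixtures over the adversary's choice of the law of $-\bn_a =: \bz$. Let $G$ denote the distribution of $\bz$. Then $\mathsf{PA}(g,\eta) \approx \int k_\eta(z)\,dG(z)$ and $\mathsf{PA}\cdot\mathsf{MMSE} \approx \int \nu_\eta(z)\,dG(z)$ (modulo the $O(\Delta^3/M)$ terms, and using that the MMSE estimator can be taken to be a function achieving the conditional bound). Maximizing $\mathsf{MMSE}$ subject to $\mathsf{PA}\ge\alpha$ then becomes: maximize $\int \nu_\eta(z)\,dG(z)$ subject to $\int k_\eta(z)\,dG(z) \ge \alpha$, over probability measures $G$ on $[(\eta-1)\Delta,(\eta+1)\Delta]$. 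Reparametrizing by $q = k_\eta(z)$ (valid since $F_{\bn_h}$, hence $k_\eta$, is strictly monotincreasing and invertible), this is: maximize $\int h_\eta(q)\,d\tilde G(q)$ subject to $\int q\,d\tilde G(q) \ge \alpha$, where $h_\eta(q) = \nu_\eta(k_\eta^{-1}(q))$. By the standard fact that the supremum of $\int h\,d\tilde G$ over measures with a fixed mean-constraint value equals the concave envelope $h^*_\eta$ evaluated appropriately — and since $\nu_\eta \ge 0$ so larger $\int q\,d\tilde G$ is never forced — the optimum is $\sup\{h^*_\eta(q)/q \cdot \text{(something)}\}$; more precisely, putting mass so that the mean constraint binds at the point(s) where $h^*_\eta$ is supported, one gets $c_\eta(\alpha) \approx h^*_\eta(\alpha)/(4\alpha)$. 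I need to track carefully where the factor $\tfrac14$ enters: it comes from the normalization relating $\mathsf{PA}\cdot\mathsf{MMSE}$ to $\int\nu_\eta\,dG$ — specifically the honest node being equally likely to be node $1$ or node $2$ introduces a factor that, combined with how the conditional estimation error is computed (averaging the two reported values gives a factor $\tfrac14$ on the squared difference), produces the $4\alpha$ denominator.

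**The two-sided bound and the main obstacle.**

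The upper bound $c_\eta(\alpha) \le h^*_\eta(\alpha)/(4\alpha)$ should follow cleanly: for any $G$, $\int h_\eta\,d\tilde G \le \int h^*_\eta\,d\tilde G \le h^*_\eta(\int q\,d\tilde G)$ by concavity (Jensen), and then monotonicity/homogeneity considerations reduce $h^*_\eta(\beta)/\beta$ for $\beta \ge \alpha$ to the value at $\alpha$ — here I would need the ratio $h^*_\eta(\beta)/\beta$ to be non-increasing, or argue directly that the adversary never benefits from exceeding the acceptance floor, which should hold because increasing $\beta$ dilutes the error; the $O(\Delta^3/M)$ slack absorbs the approximation in passing from exact $\mathsf{MMSE}$ to the $\nu_\eta/k_\eta$ surrogate. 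For the lower bound I would exhibit an explicit (near-)optimal adversarial distribution: take $G$ supported on the two points (or one point) at which the concave envelope $h^*_\eta$ touches $h_\eta$ above $\alpha$, chosen so the acceptance constraint is met with equality, and verify it achieves $h^*_\eta(\alpha)/(4\alpha)$ minus the boundary correction. The main obstacle, I expect, is the careful bookkeeping of the $O(\Delta^3/M)$ error term — precisely quantifying how the finite support $[-M,M]$ of $\bu$ perturbs both the conditional MMSE (the MMSE estimator is not exactly the midpoint shifted by a constant near the edges) and the acceptance probability, and showing all such perturbations are uniformly bounded by $(\eta^2+4)(\eta+2)\Delta^3/M$. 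A secondary subtlety is justifying the reduction to product/deterministic structure of the adversarial noise and confirming that the concave-envelope characterization is exactly tight (attainability via two-point mixtures), which requires the strict monotonicity of $F_{\bn_h}$ assumed in the hypothesis.
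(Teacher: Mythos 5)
Your proposal follows essentially the same route as the paper: it replaces the MMSE estimator by the mean up to an $O(\Delta^3/M)$ boundary correction (the paper's Lemma \ref{lemma:mean_is_near_optimal}, proved via symmetrization and an orthogonality argument showing $\est^*_{\pare,g}(\underline{y})=\frac{y_1+y_2}{2}$ whenever $|y_1+y_2|\leq 2M-(\eta+2)\Delta$), and then solves the resulting one-dimensional moment problem via the change of variables $q=k_\eta(z)$, Jensen's inequality against the concave envelope for the converse, and symmetric two-point (or four-point) mixtures at the touching points of the envelope for achievability (the paper's Lemma \ref{lemma:best_noise}). The subtleties you flag — reducing the constraint $\mathsf{PA}\geq\alpha$ to $\mathsf{PA}=\alpha$, restricting the adversarial support to the band $[(\eta-1)\Delta,(\eta+1)\Delta]$, and the exact tightness of the envelope bound — are resolved in the paper essentially along the lines you suggest.
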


    The details of the proof is in Section \ref{proof:theorem:Main_Bound_For_Max_Problem}. Here, we make a couple of remarks.
    \begin{remark}  Notably, Theorem~\ref{theorem:Main_Bound_For_Max_Problem} makes minimal assumptions on the noise distribution of the honest node and therefore characterizes the  $c_{\eta}(\alpha)$ for a broad range of noise distributions. In particular, the theorem only assumes that the  honest node's noise distribution is symmetric and   bounded with a strictly increasing cumulative distribution function (CDF).  
    
    For the specific case of uniform distribution, i.e., $\mathbf{n}_h \sim \text{unif}[-\Delta, \Delta]$, we specialize Theorem \ref{theorem:Main_Bound_For_Max_Problem} in Appendix \ref{Characterizing_h_eta}, where we calculate $h_{\eta}(q)$ and  $h^*_{\eta}(q)$. Specifially, we show that for  $0 \leq q \leq 1$,  
    \begin{align}
    h_{\eta}(q)  = \Delta^2 \frac{(\eta+2-2q)^3 - (\eta+2-4q)^3}{6}.
\end{align}
    Also, if $\eta \geq \frac{8}{3}$ or $0 \leq q \leq \frac{9\eta+4}{28}$, we have $h^*(q) = h(q)$. Otherwise, we have 
    \begin{align}
        h^*_{\eta}(q) = \frac{h_{\eta}(1) - h_{\eta}(\frac{9\eta+4}{28})}{1 - \frac{9\eta+4}{28}} (q - \frac{9\eta+4}{28}) + h_{\eta}(\frac{9\eta+4}{28}).
    \end{align}
\end{remark}

\begin{remark}
    Notably, in Theorems \ref{theorem: equivalence_two_problem} and  \ref{theorem:Main_Bound_For_Max_Problem}, and also Algorithms \ref{Alg:finding_eta}, \ref{Alg:finding_noise}, we make no assumptions on the utility functions of the adversary and the DC, apart from the requirement that the adversary's utility is strictly increasing in both arguments and the DC's utility is non-increasing in the first and non-decreasing in the second. 
\end{remark}

\begin{remark}
    The approximation in Theorem~\ref{theorem:Main_Bound_For_Max_Problem} becomes tight as $\frac{(\eta\Delta)^3}{M} \rightarrow 0$. Therefore, the theorem statement is relevant for the case of $\Delta \ll M.$
\end{remark}

The objective of  Theorem~\ref{theorem:Main_Bound_For_Max_Problem} is to characterize Opt. 1. A highly non-trivial aspect of  Opt. 1 is in computing $\mathsf{MMSE}\left(\gdot, \pare \right)$ -- since we need the $\mathsf{MMSE}$ estimator which itself depends of the adversary's noise distribution $g(.)$, and in principle, we have to explore the space of all possible distributions $g(\cdot)$. Our approach towards resolving this is rooted in a powerful discovery: that, upto an approximation factor that is explained in the discussion that follows Lemma \ref{lemma:mean_is_near_optimal} - the \emph{mean} is universally optimal estimator for all possible symmetric adversarial distributions! 
Specifically towards proving Theorem \ref{theorem:Main_Bound_For_Max_Problem}, we  introduce an alternative optimization problem in which the estimator is simply $\mathsf{mean}(y_1,y_2)=\frac{y_1+y_2}{2}$, as follows:
\begin{align}\label{definition_beta_eta}
 \textrm{Opt. 2:}   \quad \beta_{\eta}(\alpha) = \underset{\gdot \in \Lambda_{\mathsf{AD}}}{\max} ~ \underset{\mathsf{PA} \left( \gdot, \pare \right) \geq \alpha}{\mathsf{MSE}\left(\mathsf{mean}(.,.), \gdot, \pare \right)}. 
\end{align}
Introducing Opt. 2, we prove Theorem~\ref{theorem:Main_Bound_For_Max_Problem} through a two-step process, establishing twin Lemmas~\ref{lemma:mean_is_near_optimal} and \ref{lemma:best_noise}.
In Lemma~\ref{lemma:mean_is_near_optimal}, as follows, we show that Opt. 2 is closely related to Opt. 1. More precisely, in Lemma~\ref{lemma:mean_is_near_optimal}, we show that the result of Opt. 2 sandwiches the result of Opt. 1 within a gap that vanishes as $\frac{\Delta^3}{M} \to 0$.

\begin{lemma}\label{lemma:mean_is_near_optimal}
    For any $0 < \alpha \leq 1$, we have 
    \begin{align}
       \beta_{\eta}(\alpha) - \frac{(\eta^2+4)(\eta+2)\Delta^3}{M} \leq  c_{\pare}(\alpha)  \leq \beta_{\eta}(\alpha)
    \end{align}
\end{lemma}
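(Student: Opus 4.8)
The plan is to prove the two inequalities separately; the right one is essentially free, while the left one is where all the work lies. For the upper bound, observe that for any admissible $\gdot$ with $\PA(\gdot,\pare)\ge\alpha$ the MMSE estimator is by definition no worse than the particular estimator $\mathsf{mean}$, so $\MMSE(\gdot,\pare)\le\mathsf{MSE}(\mathsf{mean},\gdot,\pare)\le\beta_\eta(\alpha)$; taking the supremum over all such $\gdot$ gives $c_\eta(\alpha)\le\beta_\eta(\alpha)$. For the lower bound I would first reduce to \emph{symmetric} adversarial distributions. Fix $\varepsilon>0$ and pick $\gdot$ with $\PA(\gdot,\pare)\ge\alpha$ and $\mathsf{MSE}(\mathsf{mean},\gdot,\pare)\ge\beta_\eta(\alpha)-\varepsilon$, then replace it by its symmetrization $\tilde g(n)=\tfrac12\big(g(n)+g(-n)\big)$. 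Since $\by_1-\by_2=\pm(\bn_h-\bn_a)$ and $\mathsf{mean}(\by_1,\by_2)-\bu=\tfrac{\bn_h+\bn_a}{2}$, applying the joint reflection $(\bn_h,\bn_a)\mapsto(-\bn_h,-\bn_a)$ — which fixes the law of $\bn_h$ and sends the law of $\bn_a$ from $g$ to $g(-\cdot)$, while leaving both $\{|\bn_h-\bn_a|\le\eta\Delta\}$ and $(\tfrac{\bn_h+\bn_a}{2})^2$ unchanged — shows that $\PA(\cdot,\pare)$ and $\mathsf{MSE}(\mathsf{mean},\cdot,\pare)$ take the same value at $g$, at $g(-\cdot)$, and hence at the average $\tilde g$. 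Thus $\tilde g$ is feasible for Opt.~1 and Opt.~2 with $\mathsf{MSE}(\mathsf{mean},\tilde g,\pare)\ge\beta_\eta(\alpha)-\varepsilon$, and it suffices to show, for symmetric $\tilde g$, that $\mathsf{MSE}(\mathsf{mean},\tilde g,\pare)-\MMSE(\tilde g,\pare)\le\frac{(\eta^2+4)(\eta+2)\Delta^3}{M}$; then $c_\eta(\alpha)\ge\MMSE(\tilde g,\pare)\ge\beta_\eta(\alpha)-\varepsilon-\frac{(\eta^2+4)(\eta+2)\Delta^3}{M}$, and $\varepsilon\downarrow0$ finishes.

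Next I would invoke the orthogonality of the estimation error of the Bayes estimator $\est^*(\underline{\by})=\E[\bu\mid\underline{\by},\acce]$ to the space of $\sigma(\underline{\by})$-measurable functions, which yields the Pythagorean identity
\[
\mathsf{MSE}(\mathsf{mean},\tilde g,\pare)-\MMSE(\tilde g,\pare)=\E\!\left[\big(\mathsf{mean}(\by_1,\by_2)-\est^*(\by_1,\by_2)\big)^2\,\middle|\,\acce\right],
\]
and then argue the integrand vanishes away from the prior's boundary. Marginalizing over which of the two nodes is honest, the posterior of $\bu$ given $\underline{\by}$ in the acceptance region is proportional to $\mathbbm{1}_{[-M,M]}(\bu)\big(f_{\bn_h}(\by_1-\bu)\tilde g(\by_2-\bu)+f_{\bn_h}(\by_2-\bu)\tilde g(\by_1-\bu)\big)$. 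The support of the bracketed factor lies in an interval of length $|\by_1-\by_2|+2\Delta\le(\eta+2)\Delta$ centered at $\mathsf{mean}(\by_1,\by_2)$, so whenever $|\mathsf{mean}(\by_1,\by_2)|\le M-(\tfrac{\eta}{2}+1)\Delta$ the cutoff $\mathbbm{1}_{[-M,M]}$ is inactive and the substitution $\bu\mapsto\by_1+\by_2-\bu$ — using that $f_{\bn_h}$ and $\tilde g$ are both even — shows the posterior is symmetric about $\tfrac{\by_1+\by_2}{2}$. Hence on this bulk event $\est^*(\by_1,\by_2)=\mathsf{mean}(\by_1,\by_2)$ exactly, and the integrand above is supported on the boundary layer $\mathcal B=\{|\mathsf{mean}(\by_1,\by_2)|>M-(\tfrac\eta2+1)\Delta\}$.

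It then remains to bound the boundary contribution. On $\acce$ we have $|\bn_a|\le|\bn_h|+\eta\Delta\le(\eta+1)\Delta$, hence $(\mathsf{mean}(\by_1,\by_2)-\bu)^2=(\tfrac{\bn_h+\bn_a}{2})^2\le\tfrac{(\eta+2)^2\Delta^2}{4}$; by Jensen $(\mathsf{mean}-\est^*)^2\le\E[(\mathsf{mean}-\bu)^2\mid\underline{\by},\acce]$, so the integrand is at most $\tfrac{(\eta+2)^2\Delta^2}{4}$ on $\acce$. Moreover, on $\acce$ the event $\mathcal B$ forces $|\bu|>M-(\eta+2)\Delta$ (as $|\mathsf{mean}-\bu|\le\tfrac{(\eta+2)\Delta}{2}$ there), and $\acce$ depends only on the noises, hence is independent of $\bu\sim\mathrm{unif}[-M,M]$; therefore $\Pr(\mathcal B\cap\acce)\le\Pr\!\big(|\bu|>M-(\eta+2)\Delta\big)\,\Pr(\acce)=\tfrac{(\eta+2)\Delta}{M}\,\PA(\tilde g,\pare)$. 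Combining,
\[
\E\!\left[(\mathsf{mean}-\est^*)^2\,\middle|\,\acce\right]=\frac{\E\!\left[(\mathsf{mean}-\est^*)^2\,\mathbbm{1}_{\mathcal B}\,\mathbbm{1}_{\acce}\right]}{\PA(\tilde g,\pare)}\le\frac{(\eta+2)^2\Delta^2}{4}\cdot\frac{(\eta+2)\Delta}{M}=\frac{(\eta+2)^3\Delta^3}{4M}\le\frac{(\eta^2+4)(\eta+2)\Delta^3}{M},
\]
the last step being elementary for $\eta\ge2$ (it reduces to $3\eta^2-4\eta+12\ge0$). Note the crucial cancellation of $\PA(\tilde g,\pare)$, which keeps the error term free of $\alpha$.

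The main obstacle is the middle step: recognizing and proving that, once the adversary's noise has been symmetrized, the arithmetic mean coincides \emph{exactly} with the Bayes estimator except inside an $O(\Delta/M)$-probability layer near $\pm M$. This hinges on the symmetrization being lossless for both $\PA$ and the mean's MSE, and on the near-translation-invariance of the uniform prior, which is precisely what confines the discrepancy to the prior boundary; granted that, the boundary estimate, the $\PA$ cancellation, and the explicit constant $(\eta^2+4)(\eta+2)$ all follow from the elementary bookkeeping above. A minor additional point is the case where the supremum in Opt.~2 is not attained, which the $\varepsilon$-argument already handles.
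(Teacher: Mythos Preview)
Your proof is correct and follows the same three-ingredient structure as the paper: (i) the upper bound is trivial since $\MMSE\le\mathsf{MSE}(\mathsf{mean})$; (ii) symmetrization of the adversary's noise preserves both $\PA$ and $\mathsf{MSE}(\mathsf{mean})$ (the paper's Lemma~\ref{lemma:making_symmetry}); (iii) for symmetric $\tilde g$, the mean coincides with the Bayes estimator except on a boundary layer of conditional probability $O(\Delta/M)$, on which the squared error is $O(\Delta^2)$.

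Where you differ from the paper is in the execution of (iii), and your choices are arguably cleaner. The paper proves $\est^*=\mathsf{mean}$ on the bulk via the orthogonality principle, verifying $\E[(\bu-\mathsf{mean})\nu(\underline{\by})\mid\acce,\text{bulk}]=0$ through a somewhat involved change of variables (Lemma~\ref{MMSE_for_Internal_Part}); you instead observe directly that, once the prior cutoff is inactive, the posterior density is invariant under $\bu\mapsto y_1+y_2-\bu$ by evenness of $f_{\bn_h}$ and $\tilde g$, which is more transparent. For the boundary bound, the paper drops the nonnegative boundary contribution to $\MMSE$ and is left with the boundary piece of $\mathsf{MSE}(\mathsf{mean})$ (Lemma~\ref{bound_mse_symmetric_mean}); you invoke the Pythagorean identity $\mathsf{MSE}(\mathsf{mean})-\MMSE=\E[(\mathsf{mean}-\est^*)^2\mid\acce]$ and control the integrand on the boundary by Jensen. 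Your route yields the sharper intermediate constant $\tfrac{(\eta+2)^3}{4}$ before you relax it to the paper's $(\eta^2+4)(\eta+2)$; the paper arrives at the latter directly via the pointwise bound $\tfrac14(\bn_a+\bn_h)^2\le\tfrac{(\eta^2+4)\Delta^2}{4}$ on $\acce$. Both routes exploit the same cancellation of $\PA$ that keeps the error term independent of $\alpha$.
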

     The details of the proof is in Section \ref{proof:lemma:mean_is_near_optimal}. 
Then, in Lemma~\ref{lemma:best_noise}, as follows, we fully characterize $\beta_{\eta}(\alpha)$. 

 \begin{lemma}\label{lemma:best_noise}
    For any $0 < \alpha \leq 1$, we have
    \begin{align}\label{bound_of_mean}
        \beta_{\eta}(\alpha) =  \frac{h^*_{\eta}(\alpha)}{4\alpha},
\end{align}
where $h^*_{\eta}(.)$ is defined in Theorem \ref{theorem:Main_Bound_For_Max_Problem}.
\end{lemma}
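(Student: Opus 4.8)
The plan is to turn Opt.~2 into a two-dimensional problem over the convex hull of a planar curve, and then to read $\beta_\eta(\alpha)$ off the concave envelope $h^*_\eta$. Write $\bn_h$ for the honest node's noise (density $f_{\bn_h}$) and $\bn_a\sim g(.)$ for the adversary's noise, which is independent of $\bu$. Since $\mathsf{mean}$ is symmetric in its arguments, $\bu-\mathsf{mean}(\by_1,\by_2)=-\tfrac12(\bn_h+\bn_a)$ and $\mA_\eta=\{|\bn_h-\bn_a|\le\eta\Delta\}$ regardless of which node is honest; in particular both are independent of $\bu$, so no truncation effect from $\bu\sim\text{unif}[-M,M]$ enters, which is why Lemma~\ref{lemma:best_noise} is an \emph{exact} identity. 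Conditioning on $\bn_a=z$ and setting $k(z)\triangleq\Pr(|\bn_h-z|\le\eta\Delta)$ and $\nu(z)\triangleq\E[(\bn_h+z)^2\,\mathbf{1}(|\bn_h-z|\le\eta\Delta)]$, independence yields $\PA(g(.),\eta)=\E_g[k(\bn_a)]$ and $4\,\mathsf{MSE}(\mathsf{mean},g(.),\eta)\,\PA(g(.),\eta)=\E_g[\nu(\bn_a)]$, hence
\[
\beta_\eta(\alpha)=\sup\Big\{\ \tfrac{\E_g[\nu(\bn_a)]}{4\,\E_g[k(\bn_a)]}\ :\ g\in\Lambda_{\AD},\ \E_g[k(\bn_a)]\ge\alpha\ \Big\}.
\]

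Next I would describe the planar curve $z\mapsto(k(z),\nu(z))$. By symmetry of $f_{\bn_h}$ one checks $(k(-z),\nu(-z))=(k(z),\nu(z))$, so take $z\ge0$; a short computation then splits the curve into the segment $\{(1,\E[\bn_h^2]+s):0\le s\le(\eta-1)^2\Delta^2\}$ for $z\in[0,(\eta-1)\Delta]$, the arc $(k_\eta(z),\nu_\eta(z))$ for $z\in[(\eta-1)\Delta,(\eta+1)\Delta]$, and the single point $(0,0)$ for $z\ge(\eta+1)\Delta$ (where the acceptance indicator is identically zero). Because $F_{\bn_h}$ is strictly increasing, $k_\eta$ is a continuous strictly decreasing bijection of $[(\eta-1)\Delta,(\eta+1)\Delta]$ onto $[0,1]$, so substituting $q=k_\eta(z)$ turns the arc into $\{(q,h_\eta(q)):q\in[0,1]\}$, with $h_\eta(0)=\nu_\eta((\eta+1)\Delta)=0$ and $h_\eta(1)=\E[\bn_h^2]+(\eta-1)^2\Delta^2$ (the top of the segment). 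As $g$ ranges over all distributions, $(\E_g[k(\bn_a)],\E_g[\nu(\bn_a)])$ ranges over the convex hull $\mC$ of this curve ($\subseteq$ by averaging; $\supseteq$ because any point of a planar convex hull is a mixture of at most two curve points, and $g$ may be taken supported on the compact set $[-(\eta+1)\Delta,(\eta+1)\Delta]$). The $k=1$ segment and the point $(0,0)$ lie in the convex hull of the arc and do not raise its upper boundary, so the upper boundary of $\mC$, as a function of the first coordinate $q\in[0,1]$, is exactly the concave envelope $h^*_\eta$ of $h_\eta$, and it is attained. Consequently the inner supremum at a fixed acceptance level $q$ equals $h^*_\eta(q)/(4q)$, and $\beta_\eta(\alpha)=\max_{\alpha\le q\le 1}h^*_\eta(q)/(4q)$.

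To finish, I would invoke monotonicity: $h^*_\eta$ is concave on $[0,1]$ with $h^*_\eta(0)=h_\eta(0)=0$, and for any such function $q\mapsto h^*_\eta(q)/q$ is non-increasing on $(0,1]$ (for $0<q_1<q_2$, concavity together with $h^*_\eta(0)=0$ gives $h^*_\eta(q_1)\ge\frac{q_1}{q_2}h^*_\eta(q_2)$). Hence the maximum over $q\in[\alpha,1]$ is attained at $q=\alpha$, which gives $\beta_\eta(\alpha)=h^*_\eta(\alpha)/(4\alpha)$; the optimal adversarial distribution is the at-most-two-point mixture realizing the boundary point $(\alpha,h^*_\eta(\alpha))$ of $\mC$, which can be realized as a limit of admissible densities in the sense of the limiting-case convention for achievable pairs.

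The delicate part is the convex-hull step: showing that the attainable region of $(\E_g[k],\E_g[\nu])$ is \emph{precisely} the convex hull of the planar curve, and that its upper boundary coincides with the concave envelope $h^*_\eta$. This is where the strict-monotonicity hypothesis on $F_{\bn_h}$ (needed for the reparametrization $q=k_\eta(z)$), the Carathéodory-type attainability argument, and the careful bookkeeping of the pieces of the curve lying outside $[(\eta-1)\Delta,(\eta+1)\Delta]$ all have to be combined. By contrast, the honest-noise algebra in the first two steps and the concave-envelope monotonicity in the last step are routine once this reduction is established.
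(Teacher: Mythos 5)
Your proposal is correct, and it reaches Lemma~\ref{lemma:best_noise} by a genuinely different route from the paper. The paper proceeds through a chain of reduction lemmas --- symmetrize $g(.)$ (Lemma~\ref{lemma:making_symmetry}), derive the integral formulas for $\PA$ and the MSE of the mean (Lemma~\ref{general_format_symmetric}), push all mass into $\pm[(\eta-1)\Delta,(\eta+1)\Delta]$ without decreasing either quantity (Lemma~\ref{lemma:bounded_noise_existence}), and renormalize so that $\PA$ equals $\alpha$ exactly (Lemma~\ref{lemma:exact_acc_noise_existence}) --- and then obtains the converse via the change of variables $q=k_\eta(z)$ followed by Jensen's inequality against the concave envelope, with achievability established separately by exhibiting explicit two- and four-point symmetric atomic distributions. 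You instead observe that $\PA\left(\gdot,\pare\right)=\E_g[k(\bn_a)]$ and $4\,\mathsf{MSE}\big(\mathsf{mean}(.,.),\gdot,\pare\big)\,\PA\left(\gdot,\pare\right)=\E_g[\nu(\bn_a)]$ are both linear in $g$, so that Opt.~2 becomes a linear-fractional program over the convex hull of the planar curve $z\mapsto(k(z),\nu(z))$; the upper boundary of that hull is $h^*_\eta$, and the monotonicity of $q\mapsto h^*_\eta(q)/q$ (concavity plus $h^*_\eta(0)=0$) places the optimum at $q=\alpha$. This collapses the paper's symmetrization, support-restriction and exact-$\PA$ lemmas into the single observation that $k$ and $\nu$ are even and that the extra pieces of the curve do not raise the upper boundary; your convex-hull step and the paper's Jensen step are the same fact in two guises, and your Carath\'eodory representation of boundary points recovers exactly the paper's two- and four-point optimizers. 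Your approach buys conceptual economy and makes transparent why the concave envelope appears; the paper's buys an explicit, algorithmic description of the optimal adversarial distribution (Algorithm~\ref{Alg:finding_noise}) without invoking convex-hull machinery.

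Two small imprecisions, neither of which affects correctness. First, the lower portion of the vertical segment at $k=1$ is \emph{not} contained in the convex hull of the arc (only its top endpoint $(1,h_\eta(1))$ lies on the arc, and any convex combination of arc points with first coordinate $1$ must concentrate there); what you actually need, and what is true, is that the segment lies on or below the upper boundary and therefore does not change it. Second, the claim that every point of a planar convex hull is a mixture of at most two curve points requires connectedness of the curve (Fenchel--Bunt refinement of Carath\'eodory) or should be restricted to points of the upper boundary, which is all your argument uses.
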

     The detailed proof of this lemma is in Section \ref{proof:lemma:best_noise}. 

%\begin{remark}\label{remark:mean is the best estimator}
One of the main challenges in Lemma \ref{lemma:mean_is_near_optimal} is to find $\mathsf{MMSE}$ estimator. We establish that,  for any symmetric noise distribution of the adversary and honest node, and $|y_1+y_2| \leq 2M - (\eta+2)\Delta$, we have $\est^*_{\pare,g}\big( \underline{y} \big)  = \frac{y_1+y_2}{2}$.
This is surprising result, given that the noise of the honest and adversary nodes have arbitrary  symmetric  distribution.  The approximation in Lemma \ref{lemma:mean_is_near_optimal} comes from the boundary effect, where $|y_1+y_2| > 2M - (\eta+2)\Delta$. The proof of Lemma \ref{lemma:best_noise} is also technically involved, particularly to study $\beta_{\eta}(\alpha)$ for any general noise distribution for the honest node, as long as it is symmetric, bounded, and with a strictly increasing CDF. 
    
%\end{remark}

%\begin{remark}\label{outline_of_proof_of_lemma:best_noise}
    
%  To prove this lemma, we first demonstrate that $\beta_{\eta}(\alpha) \leq  \frac{h^*_{\eta}(\alpha)}{4\alpha}$. To show that we first rely on Lemma~\ref{lemma:making_symmetry} to limit the search space of the noise distributions to symmetric ones. Then
% we take the following steps: (i) We show that if we limit the domain of the noise distributions to $[(\eta-1)\Delta, (\eta+1)\Delta]$, and assume that the probability of acceptance is exactly equal to $\alpha$, we don't lose the optimality. (ii)  We prove that for those noise distributions $\mathsf{MSE}_{\mathsf{mean}}\left( \gdot, \pare \right)
%         = \frac{1}{2\alpha}\int_{(\eta-1)\Delta}^{(\eta+1)\Delta} \nu_{\eta}(z)g(z) \,dz$, where $\nu_{\eta}(z)$ is defined in Theorem \eqref{theorem:Main_Bound_For_Max_Problem}. (iii) By changing the parameters of the integral and  applying Jensen's inequality, we derive the upper bound of $\frac{h^*_{\eta}(\alpha)}{4\alpha}$. To complete the proof of this lemma, we need to prove that $\beta_{\eta}(\alpha) \geq  \frac{h^*_{\eta}(\alpha)}{4\alpha}$, for which we provide a noise distribution in Algorithm \ref{Alg:finding_noise} that achieves \eqref{bound_of_mean}. 
%\end{remark}

In the process of proving Theorem \ref{theorem:Main_Bound_For_Max_Problem}, we in fact characterize an optimal noise distribution $g^*(.)$ as stated in  Algorithm \ref{Alg:finding_noise}. This algorithm takes $Q_{\mathsf{AD}}(., .)$, $f_{\bn_h}(.)$, $\eta^*$, and $c_{\eta^*}(.)$ as inputs and outputs $g^*(z)$, which is the best noise distribution of the adversary.

 In Step $1$ of Algorithm \ref{Alg:finding_noise}, if $|\mathcal{L}_{\eta^*}|\neq 1$,  the adversary have multiple options to choose as the equilibrium, with different noise distributions. Still in all of those options, the utility of the adversary is the same. Thus adversary will choose one of those options arbitrarily. However, the data collector would not be able to guess the adversary's choice, while the DC requires to know the adversary's noise distribution to derive the optimum MMSE estimator accordingly. 
 Surprisingly, the estimator   $\frac{y_1+y_2}{2}$ is universally optimum, within a vanishing gap, for all of the choices of the adversary's noise distributions that achieve an equilibrium, as stated in the following lemma.
 
\begin{lemma}\label{lemma:more_on_remark:mean_is_good_for_all_best_response}
    For any $\eta \in \Lambda_{\DC}$ and $g^*(.) \in \mathcal{B}^{\eta}_{\mathsf{AD}}$ we have 
\begin{align*}
     \frac{-(\eta^2+4)(\eta+2)\Delta^3}{M}  &\leq \\
     \mathsf{MMSE}\big( g^*(.) , \pare\big) &- \mathsf{MSE}\big(\mathsf{mean}(.,.), g^*(.) , \pare\big)\leq 0.
    \end{align*}

\end{lemma}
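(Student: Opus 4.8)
\textbf{Proof proposal for Lemma~\ref{lemma:more_on_remark:mean_is_good_for_all_best_response}.}

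The plan is to reduce the claim to facts already established in the proof of Lemma~\ref{lemma:mean_is_near_optimal} and Lemma~\ref{lemma:best_noise}. First, I would observe that the upper bound $\mathsf{MMSE}\big(g^*(.),\eta\big) - \mathsf{MSE}\big(\mathsf{mean}(.,.),g^*(.),\eta\big) \leq 0$ is immediate and holds for \emph{every} $g(.) \in \Lambda_{\AD}$, not just best responses: by definition $\mathsf{MMSE}$ is the infimum of $\mathsf{MSE}(\est,\cdot)$ over all estimators $\est$, and $\mathsf{mean}(.,.)$ is one particular estimator, so $\mathsf{MMSE} \le \mathsf{MSE}(\mathsf{mean})$. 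So the whole content is the lower bound, i.e. that the gap is at most $(\eta^2+4)(\eta+2)\Delta^3/M$ for any $g^*(.) \in \mathcal{B}^{\eta}_{\AD}$.

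The key step is to recall the structural characterization of best responses obtained while proving Lemma~\ref{lemma:best_noise}: the adversary's optimal noise distributions are (essentially, up to the vanishing boundary correction) \emph{symmetric}, and for symmetric adversary noise the near-optimality of the mean estimator was already shown in the proof of Lemma~\ref{lemma:mean_is_near_optimal} --- namely that whenever $|y_1+y_2| \le 2M-(\eta+2)\Delta$ the conditional mean $\est^*_{\eta,g}(\underline{y})$ equals $\frac{y_1+y_2}{2}$, and the only discrepancy comes from the boundary event $|y_1+y_2| > 2M-(\eta+2)\Delta$, whose contribution to the MSE difference is bounded by $(\eta^2+4)(\eta+2)\Delta^3/M$. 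Concretely I would: (i) invoke the characterization from the proof of Lemma~\ref{lemma:best_noise} that any $g^*(.) \in \mathcal{B}^{\eta}_{\AD}$ can be taken symmetric (or is a limit of symmetric distributions achieving the same utility), so the symmetry hypothesis of the mean-optimality argument applies; (ii) apply the pointwise identity $\est^*_{\eta,g^*}(\underline{y}) = \frac{y_1+y_2}{2}$ on the good region; (iii) bound the contribution of the complementary boundary region exactly as in the proof of Lemma~\ref{lemma:mean_is_near_optimal}, using $\Pr(|\bu| \ge M - O(\eta\Delta)) = O(\eta\Delta/M)$ together with the fact that the squared error of $\mathsf{mean}$ on accepted inputs is $O((\eta\Delta)^2)$, giving the stated $O((\eta\Delta)^3/M)$ bound. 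Summing the two regions and noting the contribution of the good region to the difference is zero yields the lower bound.

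The main obstacle I anticipate is the fine print in step (i): the set $\mathcal{B}^{\eta}_{\AD}$ of \emph{all} best responses is larger than the distinguished symmetric optimizers constructed in the proof of Lemma~\ref{lemma:best_noise}, and an arbitrary element of $\mathcal{B}^{\eta}_{\AD}$ need not be symmetric a priori --- one has to argue that any best response induces the same joint law of $(\by_1,\by_2)$ restricted to $\acc$ (equivalently, the same $\mathsf{PA}$ and $\mathsf{MMSE}$ values, hence the same $\mathsf{MSE}(\mathsf{mean})$ by the argument of Lemma~\ref{lemma:mean_is_near_optimal}) as a symmetric one, or else directly re-run the conditional-mean computation for general (possibly asymmetric) $g^*$ and show the mean is still optimal up to the boundary gap. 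I expect this is handled by the same symmetrization / averaging argument already used to prove Lemma~\ref{lemma:best_noise} (replacing $g^*(z)$ by $\frac{1}{2}(g^*(z)+g^*(-z))$ leaves $\mathsf{PA}$ unchanged and can only help the adversary's $\mathsf{MSE}$, so equality of utilities forces the relevant quantities to coincide), and the remaining estimate is the routine boundary bound from Lemma~\ref{lemma:mean_is_near_optimal}.
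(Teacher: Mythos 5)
Your proposal is correct, and the two main ingredients are exactly the ones the paper uses: the upper bound is the trivial suboptimality of any fixed estimator, and the lower bound is obtained by passing to the symmetrization $g^*_{\textrm{sym}}(z)=\tfrac12\big(g^*(z)+g^*(-z)\big)$ (which, by Lemma~\ref{lemma:making_symmetry}, preserves both $\mathsf{PA}$ and $\mathsf{MSE}(\mathsf{mean}(.,.),\cdot,\pare)$) and then invoking the boundary estimate of Lemma~\ref{bound_mse_symmetric_mean} for symmetric distributions. You also correctly identify the one nontrivial point, namely that an arbitrary element of $\mathcal{B}^{\eta}_{\mathsf{AD}}$ need not be symmetric. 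Where you differ slightly from the paper is in how you dispose of that point: you compare $g^*$ directly with its own symmetrization, arguing that since $\mathsf{PA}$ is unchanged and $Q_{\mathsf{AD}}$ is strictly increasing in its first argument, the best-response property forces $\mathsf{MMSE}(g^*_{\textrm{sym}},\pare)\le \mathsf{MMSE}(g^*,\pare)$, and then chain $\mathsf{MMSE}(g^*,\pare)\ge \mathsf{MMSE}(g^*_{\textrm{sym}},\pare)\ge \mathsf{MSE}(\mathsf{mean},g^*_{\textrm{sym}},\pare)-\tfrac{(\eta^2+4)(\eta+2)\Delta^3}{M}=\mathsf{MSE}(\mathsf{mean},g^*,\pare)-\tfrac{(\eta^2+4)(\eta+2)\Delta^3}{M}$. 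The paper instead first invokes Lemma~\ref{lemmaJSC} to conclude that $g^*$ attains $c_\eta(\alpha)$ for $\alpha=\mathsf{PA}(g^*,\pare)$, and then runs a longer chain through the constrained symmetric maximizer $g^*_s$ of \eqref{best_symmetric_noise_definition_appendix}. Your reduction is a bit more direct and avoids the detour through $c_\eta(\alpha)$; both yield the same constant. If you write it up, just make sure to state the monotonicity step explicitly ($Q_{\mathsf{AD}}$ strictly increasing in the first argument plus equal $\mathsf{PA}$ gives the inequality on $\mathsf{MMSE}$), since the phrase ``can only help the adversary's MSE'' is doing that work implicitly.
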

The  detailed proof is in Appendix \ref{appendix:lemma:more_on_remark:mean_is_good_for_all_best_response}.

\begin{algorithm}[t]
\caption{Characterizing the Optimal Distribution for Adversary}
\label{Alg:finding_noise}
\begin{algorithmic}[1]
\State \textbf{Input:} The function $Q_{\mathsf{AD}}(., .)$, $f_{\bn_h}(n_h)$, $\eta^*$, and $c_{\eta^*}(.)$
\State \textbf{Output:} $g^*(z)$

\State Let  $k_{\eta^*}(z) \triangleq \int_{z-\eta^*\Delta}^{\Delta} f_{\bn_h}(x)  \,dx$ and $\nu_{\eta^*}(z) \triangleq \int_{z-\eta^*\Delta}^{\Delta} (x+z)^2f_{\bn_h}(x)\,dx$, for $z \in [(\eta^*-1)\Delta, (\eta^*+1)\Delta]$.

\State Let $h_{\eta^*}(q) \triangleq \nu_{\eta^*}(k_{\eta}^{-1} (q))$ and $h^*_{\eta^*}(q)$ be the concave envelop of $h_{\eta^*}(q)$, for  $q \in [0,1]$.

\vspace{1em} % Add space 
\State \textbf{Step 1:} 
\State Calculate $\mathcal{L}_{\eta^*} = \underset{0 < \alpha \leq 1 }{\arg\max} ~Q_{\mathsf{AD}}(c_{\eta^*} (\alpha), \alpha)$, and choose $\alpha$ as an arbitrary element of  $ \mathcal{L}_{\eta^*}$.

\vspace{1em} % Add space 
\State \textbf{Step 2:} 
\If {$h^*_{\eta^*}(\alpha) = h_{\eta^*}(\alpha)$}
    \State Let $z_1 \triangleq k^{-1}_{\eta^*}(\alpha)$
    \State Output $g^*(z) = \frac{1}{2}\delta(z+z_1) + \frac{1}{2}\delta(z-z_1)$
\Else
    \State Find $q_1 < \alpha < q_2$, such that $h^*_{\eta^*}(q_1) = h_{\eta^*}(q_1)$ and $h^*_{\eta^*}(q_2) = h_{\eta^*}(q_2)$,
    and for all $q_1 \leq q \leq q_2$, we have
    \begin{align*}
        h^*_{\eta^*}(q) = \frac{h_{\eta^*}(q_2) - h_{\eta^*}(q_1)}{q_2 - q_1} (q - q_1) + h_{\eta^*}(q_1).
    \end{align*}
    \State Let $z_1 \triangleq k^{-1}_{\eta^*}(q_1)$, $z_2 \triangleq  k^{-1}_{\eta^*}(q_2)$, $\beta_1 \triangleq  \frac{q_2 -\alpha }{2(q_2 - q_1)}$, and $\beta_2 \triangleq  \frac{\alpha - q_1}{2(q_2 - q_1)}$.
    \State Output $g^*(z) = \beta_1 \delta(z+z_1) +\beta_2 \delta(z+z_2) +\beta_1 \delta(z-z_1) +\beta_2 \delta(z-z_2)$
\EndIf
\end{algorithmic}
\end{algorithm}

To appreciate the applicability of our main results in a concrete scenario, consider the following examples. In the following two examples, we assume $\mathbf{n}_h \sim \text{unif}[-\Delta, \Delta]$.  
% For this specific case, the characterization of the function $h^*_{\eta}(.)$ is in Appendix \ref{Characterizing_h_eta}

\begin{example}\label{first_example
_equilibrium}
Consider the problem setup with $\Delta = 1$, and  the utility functions
\begin{align}\label{version_1_utilities_example}
    \mathsf{U}_{\mathsf{AD}}\left( \gdot, \pare \right) &= \log \left( \mathsf{MMSE}\left(\gdot, \pare \right) \right) + \frac{3}{4} \log \left( \mathsf{PA} \left( \gdot, \pare \right) \right), \nonumber \\
    \mathsf{U}_{\mathsf{DC}}\left( \gdot, \eta \right) &= - \mathsf{MMSE}\left(\gdot, \pare \right)  + 25 \mathsf{PA} \left( \gdot, \pare \right).
\end{align}
Recall that the objectives are to find (i) the Stackelberg equilibrium,  (ii) $\eta^*,$ the optimum strategy for the DC, and (iii) $g^*(.)$, the optimum strategy for the adversary.
We characterize the function $c_{\eta} (.)$ using Theorem \ref{theorem:Main_Bound_For_Max_Problem}, as depicted in Fig.  \ref{fig:Finding_equilibrium}, for  $\eta \in \{2, 2.25, 2.5, \ldots, 8\}$. Note that this characterization does not depend on the utility functions $\mathsf{U}_{\mathsf{AD}}$ and $\mathsf{U}_{\mathsf{DC}}$. 
Then, we find the best response of the adversary using Algorithm \ref{Alg:finding_eta}. In particular,
for each $\eta$, we find the set $\mathcal{L}_{\eta} = \underset{0 < \alpha \leq 1 }{\arg\max} ~Q_{\mathsf{AD}}(c_{\eta} (\alpha), \alpha)$. 
Here for each $\eta$,  the set $\mathcal{L}_{\eta}$ has only one member, for which the corresponding point $(c_{\eta}(\alpha), \alpha)$ is shown with a green circle on the curve for $c_{\eta}(.)$.  
\begin{figure}
  \centering
\includegraphics[width=0.85\linewidth]{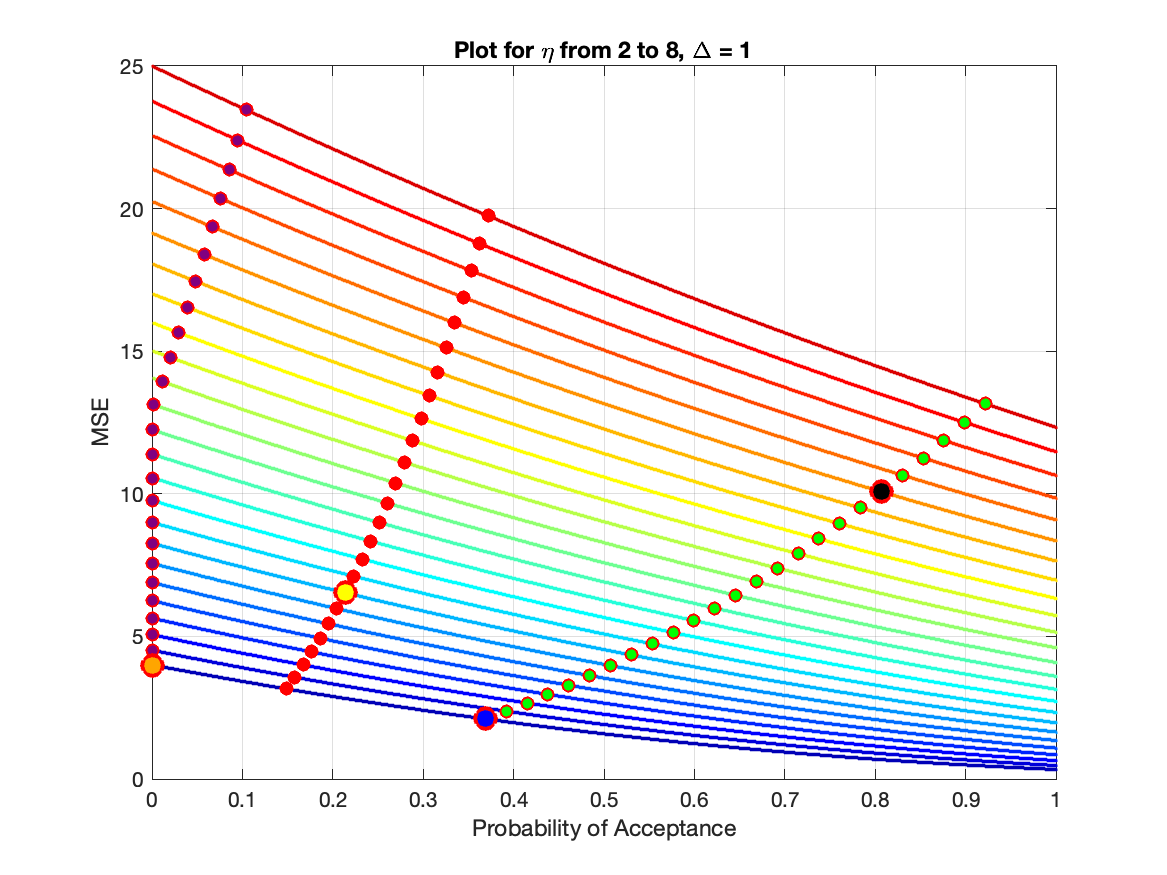}
  \caption{The curves of $c_{\eta}(.)$ for $\eta \in \{2, 2.25, 2.5, \ldots, 8\}$ ($c_{2}(.)$ is the lower-blue curve and $c_{8}(.)$ is the upper-red curve). For each of these $\eta$s,  and the utility functions of Example \ref{first_example
_equilibrium}, the green circles represent $\mathcal{L}_{\eta}$.  Here $|\mathcal{L}_{\eta}|=1$ for each $\eta$. The black circle represents the Stackelberg equilibrium for Example \ref{first_example
_equilibrium}. 
  The red circles represent $\mathcal{L}_{\eta}$, for the utility functions of Example \ref{second_example
_equilibrium}.  Here again $|\mathcal{L}_{\eta}|=1$ for each $\eta$. The yellow circle represents the  Stackelberg equilibrium.
The purple  circles represent $\mathcal{L}_{\eta}$, for the utility functions of Example \ref{example:non_cooperation}.  Here again $|\mathcal{L}_{\eta}|=1$ for each $\eta$. The orange circle represents the  Stackelberg equilibrium.}
  \label{fig:Finding_equilibrium}
\end{figure}
 Finally, having the set $\mathcal{L}_{\eta}$, based on Theorem \ref{theorem: equivalence_two_problem}, we  determine $\eta^*$. For this case, we have
 \begin{align}
     \eta^* = \underset{\pare \in \Lambda_{\mathsf{DC}}}{\arg\max} ~ \underset{\alpha \in \mathcal{L}_{\eta}}{\min}  \left ( -c_{\eta}(\alpha) + 25 \alpha \right).
 \end{align}
 
 The equilibrium is shown by the black circle in Fig. \ref{fig:Finding_equilibrium}. Therefore, for this case, we conclude that $\eta^* = 6.75$, with the probability of acceptance of $ 0.807$ and $\mathsf{MMSE}$ of  $10.07$. The optimum choice for the adversary's noise $g^*(.)$ is determined using Algorithm \ref{Alg:finding_noise}.
We note that in the absence of the framework for the game of coding, the DC commits to $\eta=2$. It is because if both nodes were honest, $|\by_1-\by_2|\leq 2 \Delta$. In response,  the rational adversary chooses its noise according to its utility function. The resulting $(\PA, \MMSE)=(0.37, 2.10)$ has been shown with the blue circle in Fig.     \ref{fig:Finding_equilibrium}. 
At this blue circle,  $\MMSE=2.10$, which is better than what we achieved at the black circle, i.e., $10.97$. However, here the probability of acceptance is just $0.37$, which is small compared to $0.81$. Thus at the black circle, compared to the blue one,  the DC trades accuracy with liveness. 
\end{example}

\begin{example}\label{second_example
_equilibrium}
    For another example, assume that the utility functions of the adversary and the DC are
$\mathsf{U}_{\mathsf{AD}}\left( \gdot, \pare \right) = \log  \mathsf{MMSE}\left(\gdot, \pare \right) + \frac{1}{4} \log  \mathsf{PA}\left( \gdot, \pare \right)$ and 
$\mathsf{U}_{\mathsf{DC}}\left( \gdot, \eta \right) = \frac{\mathsf{PA}\left( \gdot, \pare \right)}{\sqrt{\mathsf{MMSE}\left( \gdot, \pare \right)}}$.
% \begin{align}\label{version_2_utilities_example}
%     \mathsf{U}_{\mathsf{AD}}\left( \gdot, \pare \right) &= \log \left( \mathsf{MMSE}\left(\gdot, \pare \right) \right) + \frac{1}{4} \log \left( \mathsf{PA} \left( \gdot, \pare \right) \right), \nonumber \\
% \mathsf{U}_{\mathsf{DC}}\left( \gdot, \eta \right) &= \frac{\mathsf{PA} \left( \gdot, \pare \right)}{\sqrt{\mathsf{MMSE}\left(\gdot, \pare \right)}}.
% \end{align}
The procedure, similar to  Example \ref{first_example
_equilibrium}, yields $\eta^* =3.75$, along with the Stackelberg equilibrium $(\PA, \MMSE)=(0.214, 6.52)$, as depicted with a yellow circle in Fig.~\ref{fig:Finding_equilibrium}.
For each $\eta$, the single element of $\mathcal{L}_{\eta}$ is shown with a red circle on the corresponding curve of $c_{\eta}(.)$.  
\end{example}

\begin{example}\label{example:non_cooperation}
For another example, assume that the utility functions of the adversary and the DC are
$\mathsf{U}_{\mathsf{AD}}\left( \gdot, \pare \right) = \log  \mathsf{MMSE}\left(\gdot, \pare \right) +  \frac{1}{4}\log (\mathsf{PA}\left( \gdot, \pare \right) + 0.3)$ and 
$\mathsf{U}_{\mathsf{DC}}\left( \gdot, \eta \right) = - \mathsf{MMSE}\left( \gdot, \pare \right) + \mathsf{PA}\left( \gdot, \pare \right)$. The procedure, similar to  Example \ref{first_example
_equilibrium}, yields $\eta^* =2$, along with the Stackelberg equilibrium $(\PA, \MMSE)=(0, 4)$, as depicted with an orange circle in Fig.~\ref{fig:Finding_equilibrium}. This example illustrates a scenario where the utility functions of the DC and adversary are so conflicting that they  effectively preclude any cooperation between the two parties.
As a result, the DC chooses a narrow region with $\eta = 2$, and in response, the adversary selects a noise distribution that leads to a zero chance of liveness at the equilibrium.
\end{example}

\section{Proof of Theorem \ref{theorem: equivalence_two_problem}}\label{proof:theorem: equivalence_two_problem}
Note that (\ref{eq:etastar}) can be reformulated as 
\begin{align}\label{seprating_utilities}
    \eta^* 
        =\underset{\pare \in \Lambda_{\mathsf{DC}}}{\arg\max} ~ \underset{(\beta, \alpha) \in \mathcal{J}_{\eta}}{\min} ~ Q_{\mathsf{DC}} \left(\beta, \alpha\right),
\end{align}
where
\begin{align}\label{J_definition}
    \mathcal{J}_{\eta} \overset{\Delta}{=} \bigl\{\bigl( 
    \mathsf{MMSE}\big(g^*(.),  \eta \big), 
    \mathsf{PA} \left( g^*(.), \pare \right)\big)
    \big) \bigl| ~g^*(.) \in \mathcal{B}^{\pare}_{\mathsf{AD}}\bigr\}.
\end{align} 
To prove Theorem \ref{theorem: equivalence_two_problem}, we  we first show the following lemma.
% the first step is to find the relationship between $\mathcal{C}_{\pare}$ and $\mathcal{J}_{\pare}$.
\begin{lemma}
\label{lemmaJSC}
Let 
\begin{align}\label{defining_the_Set_C_eta}
    \mathcal{C}_{\pare} \triangleq \{ \left(c_{\eta}(\alpha), \alpha\right) ~|~ 0 < \alpha \leq 1\},
\end{align}
where $c_{\eta}(\alpha)$ is defined in \eqref{C_definition}.
For any $\pare \in \Lambda_{\mathsf{AD}}$, $\mathcal{J}_{\pare} \subseteq \mathcal{C}_{\pare}$. 
\end{lemma}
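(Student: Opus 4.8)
The plan is to show that every pair in $\mathcal{J}_{\eta}$ — that is, every pair $\big(\mathsf{MMSE}(g^*(.),\eta), \mathsf{PA}(g^*(.),\eta)\big)$ arising from a best-response distribution $g^*(.) \in \mathcal{B}^{\eta}_{\mathsf{AD}}$ — lies on the curve $\mathcal{C}_{\eta}$. Fix $\eta \in \Lambda_{\mathsf{DC}}$ and let $g^*(.) \in \mathcal{B}^{\eta}_{\mathsf{AD}}$. Set $\alpha^* \triangleq \mathsf{PA}(g^*(.),\eta)$ and $\beta^* \triangleq \mathsf{MMSE}(g^*(.),\eta)$; since $0 < \alpha^* \leq 1$, it suffices to prove that $\beta^* = c_{\eta}(\alpha^*)$. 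By the definition of $c_{\eta}$ in \eqref{C_definition} as a maximum of $\mathsf{MMSE}$ over all distributions with $\mathsf{PA} \geq \alpha^*$, and since $g^*(.)$ itself is feasible for that maximization, we immediately get $\beta^* \leq c_{\eta}(\alpha^*)$. So the whole content is the reverse inequality $\beta^* \geq c_{\eta}(\alpha^*)$.

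For the reverse inequality I would argue by contradiction using the optimality of $g^*(.)$ as a best response. Suppose $\beta^* < c_{\eta}(\alpha^*)$. Then there exists (or, passing to a limiting sequence, approximately exists) a distribution $\tilde g(.)$ with $\mathsf{PA}(\tilde g(.),\eta) \geq \alpha^*$ and $\mathsf{MMSE}(\tilde g(.),\eta)$ strictly larger than $\beta^*$, say $\mathsf{MMSE}(\tilde g(.),\eta) = \tilde\beta > \beta^*$. Now compare the adversary's utilities: $\mathsf{U}_{\mathsf{AD}}(\tilde g(.),\eta) = Q_{\mathsf{AD}}(\tilde\beta, \mathsf{PA}(\tilde g(.),\eta))$ versus $\mathsf{U}_{\mathsf{AD}}(g^*(.),\eta) = Q_{\mathsf{AD}}(\beta^*, \alpha^*)$. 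Since $Q_{\mathsf{AD}}$ is strictly increasing in its first argument and non-decreasing in the second, and $\tilde\beta > \beta^*$ while $\mathsf{PA}(\tilde g(.),\eta) \geq \alpha^*$, we obtain $\mathsf{U}_{\mathsf{AD}}(\tilde g(.),\eta) > \mathsf{U}_{\mathsf{AD}}(g^*(.),\eta)$, contradicting $g^*(.) \in \mathcal{B}^{\eta}_{\mathsf{AD}}$. Hence $\beta^* = c_{\eta}(\alpha^*)$, so $(\beta^*,\alpha^*) \in \mathcal{C}_{\eta}$, and since $g^*(.)$ was arbitrary, $\mathcal{J}_{\eta} \subseteq \mathcal{C}_{\eta}$.

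The main subtlety — and the step I expect to need the most care — is whether the supremum defining $c_{\eta}(\alpha^*)$ is actually attained, since otherwise I cannot produce a genuine $\tilde g(.)$ achieving $\mathsf{MMSE} = \tilde\beta > \beta^*$. I would handle this either by invoking compactness of the relevant space of noise distributions (the structure exposed later in Algorithm~\ref{Alg:finding_noise}, where optimal $g^*(.)$ turn out to be finitely-supported mixtures of point masses, suggests the maximum is attained), or, more robustly, by working with an approximating sequence $\tilde g_n(.)$ with $\mathsf{MMSE}(\tilde g_n(.),\eta) \to c_{\eta}(\alpha^*)$ and $\mathsf{PA}(\tilde g_n(.),\eta) \geq \alpha^*$, choosing $n$ large enough that $\mathsf{MMSE}(\tilde g_n(.),\eta) > \beta^*$ — which is possible precisely because $c_{\eta}(\alpha^*) > \beta^*$ — and then running the same utility-comparison contradiction with $\tilde g_n(.)$ in place of $\tilde g(.)$. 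A secondary point to verify is that $\alpha^* > 0$ so that $(\beta^*,\alpha^*)$ is a legitimate element of $\mathcal{C}_{\eta}$ as defined in \eqref{defining_the_Set_C_eta}; this holds because the acceptance region $\reg(\eta)$ with $\eta \geq 2$ always has positive probability under any $g(.)$ (the honest node's noise already guarantees $|\by_1 - \by_2| \leq \eta\Delta$ with positive probability for suitable adversary outputs), so no best response can have zero acceptance probability without being dominated.
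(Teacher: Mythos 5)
Your proposal is correct and follows essentially the same route as the paper: both arguments fix a best response $g^*(.)$, note that feasibility gives $\mathsf{MMSE}(g^*(.),\eta)\leq c_{\eta}(\alpha^*)$, and derive the reverse inequality by contradiction using the strict monotonicity of $Q_{\mathsf{AD}}$ to show that any strictly better feasible $\tilde g(.)$ would dominate $g^*(.)$ in the adversary's utility. Your added care about attainment of the supremum (via an approximating sequence) and about $\alpha^*>0$ addresses points the paper's proof glosses over, but does not change the substance of the argument.
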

\begin{proof}
    Consider an arbitrary element $(\beta, \alpha) \in \mathcal{J}_{\eta}$. We aim to show $(\beta, \alpha) \in \mathcal{C}_{\eta}$. Based on the definition of $\mathcal{C}_{\eta}$ in \eqref{defining_the_Set_C_eta},  we show that for any $g^*(.) \in \mathcal{B}^{\pare}_{\mathsf{AD}}$, and $\alpha \triangleq \mathsf{PA} \left( g^*(.), \pare \right)$, the distribution $g^*(.)$ is one of the optimal solutions of the following maximization problem
\begin{align}\label{maximization_problem_for_C}
    \underset{\gdot \in \Lambda_{\mathsf{AD}}}{\max} ~ \underset{\mathsf{PA} \left( \gdot, \pare \right) \geq \alpha}{\mathsf{MMSE}\left(\gdot, \pare \right)}.
\end{align}
We prove this claim by contradiction. Assume that this is not the case. Let\footnote{In the case that the optimization problem \eqref{definition_of_g_1} has more than one solution, just pick one of them randomly as the noise function $g_1(.)$.}
    \begin{align}\label{definition_of_g_1}
        g_1(.) = \underset{\gdot \in \Lambda_{\mathsf{AD}}}{\arg\max} ~ \underset{\mathsf{PA} \left( \gdot, \pare \right) \geq \alpha}{\mathsf{MMSE}\left(\gdot, \pare \right)}.
    \end{align}
    Based on our contradictory hypothesis that $g^*(.)$ is not an optimal solution of \eqref{maximization_problem_for_C}, one can verify that 
    \begin{align}\label{comparing_g_1_and_star}
        \mathsf{MMSE}\big(g_1(.), \pare \big) > \mathsf{MMSE}\big( g^*(.),  \pare \big)
    \end{align}
    Therefore, we have
    \begin{align}\label{JSC_contradiction}
        \mathsf{U}_{\mathsf{AD}}\left( g_1(.), \pare \right) &= Q_{\mathsf{AD}} \big(\mathsf{MMSE}\big( g_1(.), \pare\big), \PA\big(g_1(.), \pare\big)\big)\nonumber \\
        &\overset{(a)}{\geq}
        Q_{\mathsf{AD}} \big(\mathsf{MMSE}\big( g_1(.), \pare\big), \alpha \big)\nonumber \\
        &\overset{(b)}{>} 
        Q_{\mathsf{AD}} \big(\mathsf{MMSE}\big( g^*(.),  \pare \big), \alpha \big) \nonumber \\
        &\overset{(c)}{=} Q_{\mathsf{AD}} \big(\mathsf{MMSE}\big( g^*(.),  \pare \big), \mathsf{PA} \left( g^*(.), \pare \right)\big)\nonumber \\
        &=\mathsf{U}_{\mathsf{AD}}\left( g^*(.),  \pare \right),
    \end{align}
    where (a) follows from the facts that based on \eqref{definition_of_g_1}, $\PA\big(g_1(.), \pare\big) \geq \alpha$ and also that $Q_{\mathsf{AD}}(., .)$ is a strictly increasing function with respect to its second argument, (b) follows from \eqref{comparing_g_1_and_star} and also that $Q_{\mathsf{AD}}(., .)$ is a strictly increasing function with respect to its first argument, and (c) follows from the definition of $\alpha$.

    However, \eqref{JSC_contradiction} implies that 
    \begin{align*}
        \mathsf{U}_{\mathsf{AD}}\left( g_1(.), \pare \right) > \mathsf{U}_{\mathsf{AD}}\left( g^*(.),  \pare \right),
    \end{align*}
     which is a contradiction, since $g^*(.) \in \mathcal{B}^{\pare}_{\mathsf{AD}}$. Thus, our first contradictory hypothesis is wrong, and the statement of this lemma is valid.
\end{proof}
Now we prove Theorem \ref{theorem: equivalence_two_problem}.
    Recall that 
\begin{align}
    \mathcal{L}_{\eta} =\underset{0 < \alpha \leq 1 }{\arg\max} ~Q_{\mathsf{AD}}(c_{\eta} (\alpha), \alpha).
\end{align}
Let 
\begin{align}\label{L_definition}
    \mathcal{K}_{\eta} = \{ (c_{\eta}(\alpha), \alpha)~|~ \alpha \in \mathcal{L}_{\eta} \}
\end{align}
Based in \eqref{seprating_utilities}, to complete the proof we show 
$\mathcal{J}_{\eta} = \mathcal{K}_{\eta}$.
 we first show that $\mathcal{J}_{\eta} \subseteq \mathcal{K}_{\eta}$, and then prove $\mathcal{K}_{\eta} \subseteq \mathcal{J}_{\eta}$.

\subsection{\texorpdfstring{Proof of $\mathcal{J}_{\eta} \subseteq \mathcal{K}_{\eta}$}{x}} Consider $(\beta, \alpha) \in \mathcal{J}_{\eta}$.  We claim that $(\beta, \alpha) \in \mathcal{K}_{\eta}$, and show this by contradiction. Assume, as a contradictory hypothesis, that it is not the case. According to Lemma \ref{lemmaJSC} $(\beta, \alpha) \in \mathcal{C}_{\eta}$. Since we assumed that $(\beta, \alpha) \notin \mathcal{K}_{\eta}$, based on the definition in \eqref{L_definition},  there exists $(b,a) \in \mathcal{K}_{\eta}$, such that we have \begin{align}\label{JnsL_contradiction_assuption}
        Q_{\mathsf{AD}}(b,a) > Q_{\mathsf{AD}}(\beta, \alpha).
    \end{align}

    Since $(\beta, \alpha) \in \mathcal{J}_{\eta}$, based on the definition in \eqref{J_definition} there exists a noise distribution $g_{\alpha}(.) \in \mathcal{B}^{\pare}_{\mathsf{AD}}$, where 
    \begin{align}\label{definition of beta alpha}
    \left(\beta, \alpha\right) &= \bigl( 
    \mathsf{MMSE}\big(g_{\alpha}(.), \pare \big), \PA \big(g_{\alpha}(.), \pare
    \big)
    \big).
    \end{align}
    
    Similarly for $(b,a) \in \mathcal{K}_{\eta}$,
    based on the definition in \eqref{L_definition}, 
    we have $(b,a) \in \mathcal{C}_{\eta}$. Based on the definition in \eqref{defining_the_Set_C_eta},
    there exists a noise distribution $g_b(.) \in \Lambda_{\mathsf{AD}}$, where 
    \begin{align}\label{a_to_PA_relationship}
        \mathsf{PA} \left( g_b(.), \pare \right) \geq a,
    \end{align}
    and $g_b(.)$ is one of the optimal solutions of the following
    maximization problem
    \begin{align}\label{gb_to_opt}
     \underset{\gdot \in \Lambda_{\mathsf{AD}}}{\max} ~ \underset{\mathsf{PA} \left( \gdot, \pare \right) \geq a}{\mathsf{MMSE}\left(\gdot, \pare \right)},
    \end{align}
    and we have
    \begin{align}\label{b_to_MMSE_relation}
        b = \mathsf{MMSE}\left(g_{b}(.), \pare \right).
    \end{align}
    One can verify that we have
    \begin{align}\label{JnsL_contradiction_point}
        \mathsf{U}_{\mathsf{AD}}\left(g_b(.), \pare \right) &= Q_{\mathsf{AD}}\big( 
    \mathsf{MMSE}\left(g_{b}(.), \pare \right), \mathsf{PA} \left( g_b(.), \pare \right)
    \big) \nonumber \\
    &\overset{(a)}{=} Q_{\mathsf{AD}}\big( b, \mathsf{PA} \left( g_b(.), \pare \right)
    \big) \nonumber \\
    &\overset{(b)}{\geq} Q_{\mathsf{AD}}\left( b, a \right) \nonumber \\
    &\overset{(c)}{>} Q_{\mathsf{AD}}\left( \beta, \alpha \right) \nonumber \\
    &\overset{(d)}{=} Q_{\mathsf{AD}}\big( 
    \mathsf{MMSE}\left(g_{\alpha}(.), \pare \right), \mathsf{PA} \left( g_{\alpha}(.), \pare \right)
    \big) \nonumber \\
    &=\mathsf{U}_{\mathsf{AD}}\left(g_{\alpha}(.), \pare \right)
    \end{align}
    where (a) follows from \eqref{b_to_MMSE_relation}, (b) follows from \eqref{a_to_PA_relationship} and  $Q_{\mathsf{AD}}(.,.)$ is a strictly increasing function with respect to its second argument, (c) follows from \eqref{JnsL_contradiction_assuption}, (d) follows from \eqref{definition of beta alpha}. However, \eqref{JnsL_contradiction_point} implies that $\mathsf{U}_{\mathsf{AD}}\left(g_b(.), \pare \right) > \mathsf{U}_{\mathsf{AD}}\left(g_{\alpha}(.), \pare \right)$, which is a contradiction, since $g_{\alpha}(.) \in \mathcal{B}^{\pare}_{\mathsf{AD}}$. Therefore, our first assumption was wrong, and for each $(\beta, \alpha) \in \mathcal{J}_{\eta}$ we have $(\beta, \alpha) \in \mathcal{K}_{\eta}$.
    
    \subsection{\texorpdfstring{Proof of $\mathcal{K}_{\eta} \subseteq \mathcal{J}_{\eta}$}{x}}
    We prove this by contradiction.  Assume that this is not the case. Consider $(b,a) \in \mathcal{K}_{\eta}$, where $(b,a) \notin \mathcal{J}_{\eta}$, and the noise distribution $g_b(.) \in \Lambda_{\mathsf{AD}}$, where \eqref{a_to_PA_relationship}, \eqref{gb_to_opt}, and \eqref{b_to_MMSE_relation} holds.
    Now consider a $(\beta, \alpha) \in \mathcal{J}_{\eta}$, and the noise distribution $g_{\alpha}(.) \in \mathcal{B}^{\pare}_{\mathsf{AD}}$, where \eqref{definition of beta alpha} holds.
    We note that based on Lemma \ref{lemmaJSC}, $(\beta, \alpha) \in \mathcal{C}_{\eta}$. Since $(b,a) \in \mathcal{K}_{\eta}$, based on the definition \eqref{J_definition} we have
    \begin{align}\label{ba_tobetaalpha_relation}
       Q_{\mathsf{AD}}\left( b,a \right) \geq Q_{\mathsf{AD}}\left( \beta, \alpha \right).
    \end{align}
    Consider the following chain of inequalities: 
    \begin{align}\label{jset_to_lset}
        \mathsf{U}_{\mathsf{AD}}\left(g_b(.), \pare \right) &= Q_{\mathsf{AD}}\bigl( 
    \mathsf{MMSE}\left(g_{b}(.), \pare \right), \PA \left(g_{b}(.), \pare
    \right)
    \big) \nonumber \\
    &\overset{(a)}{=} Q_{\mathsf{AD}}\big( b, \PA \left(g_{b}(.), \pare
    \right)
    \big) \nonumber \\
    &\overset{(b)}{\geq} Q_{\mathsf{AD}}\left( b, a \right) \nonumber \\
    &\overset{(c)}{\geq} Q_{\mathsf{AD}}\left( \beta, \alpha \right) \nonumber \\
    &\overset{(d)}{=} Q_{\mathsf{AD}}\big( 
    \mathsf{MMSE}\left(g_{\alpha}(.), \pare \right), \mathsf{PA} \left( g_{\alpha}(.), \pare \right)
    \big) \nonumber \\
    &=\mathsf{U}_{\mathsf{AD}}\left(g_{\alpha}(.), \pare \right),
    \end{align}
    where (a) follows from \eqref{b_to_MMSE_relation}, (b) follows from \eqref{a_to_PA_relationship} and  $Q_{\mathsf{AD}}(.,.)$ is a strictly increasing function with respect to its second argument, (c) follows from \eqref{ba_tobetaalpha_relation}, (d) follows from \eqref{definition of beta alpha}.

    However, \eqref{jset_to_lset} implies that $\mathsf{U}_{\mathsf{AD}}\left(g_b(.), \pare \right) \geq \mathsf{U}_{\mathsf{AD}}\left(g_{\alpha}(.), \pare \right)$. On the other hand, since  $g_{\alpha}(.) \in \mathcal{B}^{\pare}_{\mathsf{AD}}$, we have $\mathsf{U}_{\mathsf{AD}}\left(g_{\alpha}(.), \pare \right) \geq \mathsf{U}_{\mathsf{AD}}\left(g_b(.), \pare \right)$. Thus $\mathsf{U}_{\mathsf{AD}}\left(g_b(.), \pare \right) = \mathsf{U}_{\mathsf{AD}}\left(g_{\alpha}(.), \pare \right)$.
    Consequently, based on \eqref{jset_to_lset}, we have $Q_{\mathsf{AD}}\left( b, \PA \left(g_{b}(.), \pare
    \right)
    \right) = Q_{\mathsf{AD}}\left( b, a \right)$. Since $Q_{\mathsf{AD}}(.,.)$ is a strictly increasing function with respect to its second argument, it implies that $\PA \left(g_{b}(.), \pare \right) = a$. Thus, we have
    \begin{align}
        (b,a) = \bigl( 
    \mathsf{MMSE}\left(g_{b}(.), \pare \right), \PA \left(g_{b}(.), \pare
    \right)
    \big),
    \end{align}
    On the other hand, $g_{\alpha}(.) \in \mathcal{B}^{\pare}_{\mathsf{AD}}$ and $\mathsf{U}_{\mathsf{AD}}\left(g_b(.), \pare \right) = \mathsf{U}_{\mathsf{AD}}\left(g_{\alpha}(.), \pare \right)$, which implies that $g_b(.) \in \mathcal{B}^{\pare}_{\mathsf{AD}}$. Therefore, by definition it implies that $(b,a) \in \mathcal{J}_{\eta}$. This is against our first assumption that $(b,a) \notin \mathcal{J}_{\eta}$, which is a contradiction. Therefore, we have $\mathcal{K}_{\eta} \subseteq \mathcal{J}_{\eta}$. This completes the proof of Theorem \ref{theorem: equivalence_two_problem}.

\section{Proof of Theorem \ref{theorem:Main_Bound_For_Max_Problem}}\label{proof:theorem:Main_Bound_For_Max_Problem}    
As mentioned before, Theorem \ref{theorem:Main_Bound_For_Max_Problem} is a direct consequence of Lemma \ref{lemma:mean_is_near_optimal} and \ref{lemma:best_noise}, which are proved in the following sections.

\subsection{Proof of Lemma \ref{lemma:mean_is_near_optimal}}\label{proof:lemma:mean_is_near_optimal}
To prove Lemma \ref{lemma:mean_is_near_optimal}, we first show the following lemma which establishes that in Opt. 2, we can limit the search space for $\gdot$ to the set of symmetric probability density functions. 
\begin{lemma}\label{lemma:making_symmetry}
    % Let $\mathsf{MSE}(\est, g(.), \eta)\triangleq \E[(\est(\by_1,\by_2)-\bu)^2 | \mathcal{A}_{\eta}]$, for some $\est:\mathbb{R}^2 \to \mathbb{R}$.
    % Then,  
    For any $\eta \in \Lambda_{\mathsf{DC}}$ and $\gdot \in \Lambda_{\mathsf{AD}}$, we have $\mathsf{MSE}\left(\mathsf{mean}(.,.), g_{\textrm{sym}}(.), \pare\right) = \mathsf{MSE}\left(\mathsf{mean}(.,.), \gdot, \pare\right)$ and  $\PA (g_{\textrm{sym}}(.), \pare) = \PA (g(.), \pare)$,  where
      $g_{\textrm{sym}}(z) \triangleq \frac{g(z) + g(-z)}{2}$. 
\end{lemma}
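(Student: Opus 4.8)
The plan is a short symmetrization argument. First I would eliminate the dependence on $\bu$. Write $\bn_h$ for the honest node's noise (symmetric, with density $f_{\bn_h}$) and $\bn_a$ for the adversarial node's noise (with density $g(.)$), where $\bu$, $\bn_h$, $\bn_a$ are mutually independent. Since the code is a repetition code and the estimator is the mean, for either choice of which node is the adversary we have $\by_1 - \by_2 = \pm(\bn_h - \bn_a)$ and $\bu - \mathsf{mean}(\by_1,\by_2) = -\tfrac12(\bn_h+\bn_a)$. Hence $\mathcal{A}_\eta = \{|\bn_h-\bn_a|\le\eta\Delta\}$ and
\begin{align}
\PA\big(g(.),\eta\big) &= \Pr\big(|\bn_h-\bn_a|\le\eta\Delta\big),\nonumber\\
\mathsf{MSE}\big(\mathsf{mean}(.,.),g(.),\eta\big) &= \frac{\E\big[(\bn_h+\bn_a)^2\,\mathbbm{1}\{|\bn_h-\bn_a|\le\eta\Delta\}\big]}{\Pr\big(|\bn_h-\bn_a|\le\eta\Delta\big)}.\nonumber
\end{align}
In particular both are functionals of the joint law of the pair $\big(|\bn_h-\bn_a|,\,(\bn_h+\bn_a)^2\big)$ only; there is no residual dependence on $\bu$ or $M$, which is exactly what will make Opt.~2 free of the boundary effect that later complicates Opt.~1. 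Note also that the uniform random choice of which node is the adversary is immaterial here, since only $|\bn_h-\bn_a|$ and $(\bn_h+\bn_a)^2$ enter and both are symmetric in $h\leftrightarrow a$.

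Next I would view $g_{\textrm{sym}}$ as a mixture. Because $f_{\bn_h}$ is symmetric, the joint density of the honest noise and an independent draw $\bn_a^{\mathrm s}\sim g_{\textrm{sym}}$ is $f_{\bn_h}(x)\,g_{\textrm{sym}}(y)=\tfrac12 f_{\bn_h}(x)g(y)+\tfrac12 f_{\bn_h}(-x)g(-y)$, i.e.\ the law of $(\bn_h^{\mathrm s},\bn_a^{\mathrm s})$ is the equal mixture of the law of $(\bn_h,\bn_a)$ and the law of $(-\bn_h,-\bn_a)$. Under $(x,y)\mapsto(-x,-y)$ both $|x-y|$ and $(x+y)^2$ are invariant, so the pair $\big(|\bn_h-\bn_a|,\,(\bn_h+\bn_a)^2\big)$ has the same joint distribution under $(\bn_h,\bn_a)$ and under $(-\bn_h,-\bn_a)$. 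Consequently each of the two mixture components yields the same value of $\Pr(|\cdot|\le\eta\Delta)$ and of $\E[(\cdot)^2\mathbbm{1}\{\cdot\}]$ as $(\bn_h,\bn_a)$ does, hence so does the mixture. This gives $\PA(g_{\textrm{sym}}(.),\eta)=\PA(g(.),\eta)$ and equality of the two MSE numerators; dividing by the (now equal) acceptance probabilities yields $\mathsf{MSE}(\mathsf{mean}(.,.),g_{\textrm{sym}}(.),\eta)=\mathsf{MSE}(\mathsf{mean}(.,.),g(.),\eta)$.

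I do not expect a genuine obstacle: the whole argument is a couple of lines once $\bu$ is removed. The only points requiring care are (i) checking the two reductions in the first step, namely that the mean estimator's error is exactly $-\tfrac12(\bn_h+\bn_a)$ with no leftover $\bu$ and that the random identity of the adversary does not matter, and (ii) if one wishes to also cover the atomic adversarial distributions used later in Algorithm~\ref{Alg:finding_noise}, phrasing the mixture decomposition at the level of distributions rather than densities — which works verbatim, since $\bn_a^{\mathrm s}$ is distributionally $B\bn_a-(1-B)\bn_a$ for an independent fair coin $B$, coupled with the harmless sign flip $\bn_h\mapsto-\bn_h$.
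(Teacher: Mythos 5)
Your proposal is correct and follows essentially the same route as the paper: the paper also reduces everything to the invariance of $|\bn_h-\bn_a|$ and $(\bn_h+\bn_a)^2$ under $(\bn_h,\bn_a)\mapsto(-\bn_h,-\bn_a)$ (its Lemma on $g_{\textrm{ref}}$) and then realizes $g_{\textrm{sym}}$ as the equal mixture of $g$ and $g_{\textrm{ref}}$ via a fair coin, which is exactly your mixture decomposition. The only nit is a dropped factor of $\tfrac14$ in your displayed MSE formula (the error is $-\tfrac12(\bn_h+\bn_a)$, so its square carries a $\tfrac14$), which is common to both distributions and does not affect the conclusion.
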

\begin{proof}
    The formal proof is in Appendix \ref{proof:lemma:making_symmetry}. Informally the proof is as follows. Let $g_{\textrm{ref}}(z) \triangleq g(-z)$. Note that the acceptance rule is of the form $|\by_1-\by_2| \leq \eta\Delta$, or equivalently $|\bn_h-\bn_a| \leq \eta\Delta$. The statistics of $|\bn_h-\bn_a|$ are the same for both $\gdot, g_{\textrm{ref}}(.)$ and $g_{\textrm{sym}}(.)$. Thus, the probability of acceptance is the same for all these noise distributions. In addition, the MSE for the $\mathsf{mean}$, given acceptance, is $|\bu - \frac{\by_1+\by_2}{2}|^2 = \frac{|\bn_h+\bn_a|^2}{4}$, whose statistics also remain the same for all these noise distributions.
\end{proof}
\begin{corollary}
    Let  $\Bar{\Lambda}_{\mathsf{AD}} = \big\{ g(z) ~|  ~g(z) \in \Lambda_{\mathsf{AD}} ~\& ~g(z) = g(-z) \big\}$, which is simply the set of all symmetric noise distributions. A direct conclusion of Lemma \ref{lemma:making_symmetry} and this definition is that, for any $0<\alpha \leq 1$ and $\eta \in \Lambda_{\DC}$
\begin{align}\label{relation_between_best_seymmetric_best_general}
    &\underset{\gdot \in \Bar{\Lambda}_{\mathsf{AD}}}{\max} ~ \underset{\mathsf{PA} \left( \gdot, \pare \right) \geq \alpha}{\mathsf{MSE}\big(\mathsf{mean}(.,.), \gdot, \pare \big)} \nonumber \\
    &= \underset{\gdot \in \Lambda_{\mathsf{AD}}}{\max} ~ \underset{\mathsf{PA} \left( \gdot, \pare \right) \geq \alpha}{\mathsf{MSE}\big(\mathsf{mean}(.,.), \gdot, \pare \big)}.
\end{align}
\end{corollary}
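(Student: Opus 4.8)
The plan is to establish the two inequalities in \eqref{relation_between_best_seymmetric_best_general} separately. The inequality ``$\leq$'' is immediate: since $\Bar{\Lambda}_{\mathsf{AD}} \subseteq \Lambda_{\mathsf{AD}}$ and the feasibility constraint $\mathsf{PA}(g(.),\eta) \geq \alpha$ is identical in both problems, the left-hand maximization runs over a subset of the feasible set of the right-hand maximization, so its value cannot exceed that of the right-hand side. This direction needs nothing beyond set inclusion and does not invoke Lemma \ref{lemma:making_symmetry}.

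For the reverse inequality ``$\geq$'', I would take an arbitrary $g(.) \in \Lambda_{\mathsf{AD}}$ with $\mathsf{PA}(g(.),\eta) \geq \alpha$ (in particular, a maximizer, or a term of a maximizing sequence, for the right-hand side) and pass to its symmetrization $g_{\textrm{sym}}(z) = \frac{g(z)+g(-z)}{2}$. First I would check that $g_{\textrm{sym}}(.)$ is a legitimate element of $\Bar{\Lambda}_{\mathsf{AD}}$: it is nonnegative pointwise, it integrates to $1$ because $g(.)$ does and $\int g(-z)\,dz = \int g(z)\,dz$, and it is symmetric by construction. Then Lemma \ref{lemma:making_symmetry} supplies both $\mathsf{PA}(g_{\textrm{sym}}(.),\eta) = \mathsf{PA}(g(.),\eta) \geq \alpha$, so that $g_{\textrm{sym}}(.)$ is feasible for the left-hand problem, and $\mathsf{MSE}(\mathsf{mean}(.,.), g_{\textrm{sym}}(.), \eta) = \mathsf{MSE}(\mathsf{mean}(.,.), g(.), \eta)$. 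Hence the left-hand side is at least $\mathsf{MSE}(\mathsf{mean}(.,.), g(.), \eta)$; since $g(.)$ was an arbitrary feasible point for the right-hand side, the left-hand side is at least the right-hand side. Combining the two inequalities yields the equality.

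The argument is short precisely because all the substance, namely that symmetrization preserves both the probability of acceptance and the mean-estimator MSE, is already encapsulated in Lemma \ref{lemma:making_symmetry}. The closest thing to an obstacle is purely bookkeeping: the routine verification that $g_{\textrm{sym}}(.)$ is a valid PDF, and, in case the right-hand side maximum is not attained, replacing ``maximizer'' by ``maximizing sequence'' and passing to the limit. Neither is a genuine difficulty.
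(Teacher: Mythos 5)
Your proof is correct and follows exactly the route the paper intends: the paper states this corollary as an immediate consequence of Lemma \ref{lemma:making_symmetry}, with the ``$\leq$'' direction from the inclusion $\Bar{\Lambda}_{\mathsf{AD}} \subseteq \Lambda_{\mathsf{AD}}$ and the ``$\geq$'' direction from symmetrizing any feasible $g(.)$ while preserving both $\mathsf{PA}$ and the mean-estimator MSE. Nothing further is needed.
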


Recall that based on \eqref{definition_best_estimator}  we have
\begin{align}
 \est^*_{\pare,g} = \underset{\est: \mathbb{R}^2 \to \mathbb{R}}{\arg\min} ~\mathsf{MSE}\big(\est(.,.), \gdot, \pare\big),
 \end{align}
 and
 \begin{align}
     \mathsf{MMSE}\big( \gdot, \pare\big) = \mathsf{MSE}\big(\est^*_{\pare,g}(.,.), \gdot, \pare\big).
 \end{align}
 
To complete the proof of Lemma \ref{lemma:mean_is_near_optimal}, we use the following lemmas.
 
\begin{lemma}\label{MMSE_for_Internal_Part}
    For any $\pare \in \Lambda_{\mathsf{DC}}$,  $\gdot \in \Bar{\Lambda}_{\mathsf{AD}}$, and $\underline{y} = (y_1,y_2)$, where $|y_1+y_2| \leq 2M - (\eta+2)\Delta$, we have
\begin{align}
    \est^*_{\pare,g}\big( \underline{y} \big) = \mathbb{E}\big[\mathbf{u} | \mathcal{A}_{\pare}, \underline{y}\big] = \frac{y_1+y_2}{2}.
\end{align}
\end{lemma}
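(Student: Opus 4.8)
The plan is to compute the conditional expectation $\E[\mathbf{u} \mid \mathcal{A}_\eta, \underline{y}]$ directly via Bayes' rule, exploiting the two symmetries built into the problem: the distribution of $\mathbf{u}$ is uniform and symmetric about $0$, the honest noise $\bn_h$ has a symmetric PDF, and (by the restriction to $\bar\Lambda_{\AD}$) so does the adversarial noise. First I would fix $\underline{y} = (y_1,y_2)$ with $|y_1+y_2| \le 2M - (\eta+2)\Delta$ and observe that the event $\mathcal{A}_\eta = \{|\by_1-\by_2|\le \eta\Delta\}$ is equivalent to $|\bn_h - \bn_a|\le \eta\Delta$, which, conditioned on $\underline{y}$, is a fixed event that does not involve $\mathbf{u}$ once we substitute $\bn_h = y_h - \mathbf{u}$, $\bn_a = y_a - \mathbf{u}$: indeed $\bn_h - \bn_a = y_h - y_a$ is deterministic given $\underline{y}$. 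So conditioning further on $\mathcal{A}_\eta$ either keeps or discards the whole slice $\{\underline{\by} = \underline{y}\}$; the real content is in the posterior of $\mathbf{u}$ given $\underline{\by}=\underline{y}$ alone.

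Next I would write the joint density of $(\mathbf{u}, \by_1, \by_2)$ at $(u, y_1, y_2)$ as $\frac{1}{2M}\mathbbm{1}_{[-M,M]}(u)\, f_{\bn_h}(y_h - u)\, g(y_a - u)$ (using independence of $\mathbf{u}$, $\bn_h$, $\bn_a$), so that
\begin{align}
\E[\mathbf{u}\mid \underline{\by}=\underline{y}] = \frac{\int_{-M}^{M} u\, f_{\bn_h}(y_h-u)\, g(y_a-u)\,du}{\int_{-M}^{M} f_{\bn_h}(y_h-u)\, g(y_a-u)\,du}.
\end{align}
The key move is the substitution $u = \frac{y_1+y_2}{2} + t$, which recenters the integral at the candidate answer $\frac{y_1+y_2}{2}$. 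Under this change of variables, $y_h - u = \frac{y_h - y_a}{2} - t$ and $y_a - u = \frac{y_a - y_h}{2} - t = -\left(\frac{y_h-y_a}{2}+t\right)$. Writing $s \triangleq \frac{y_h - y_a}{2}$, the integrand's density factor becomes $f_{\bn_h}(s - t)\, g(-s - t)$. Since both $f_{\bn_h}$ and $g$ are symmetric, replacing $t \mapsto -t$ sends this to $f_{\bn_h}(s+t)\,g(-s+t)$; I would need to check that this equals the original factor — and it does not pointwise, but the product $f_{\bn_h}(s-t)g(s+t)$ (using $g(-s-t)=g(s+t)$) is invariant under $t\mapsto -t$ together with $s \mapsto -s$. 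The cleaner statement: $f_{\bn_h}(s-t)g(s+t)$ as a function of $t$ is the reflection of $f_{\bn_h}(s+t)g(s-t)$; but swapping the roles of which node is honest/adversarial is not available. The correct symmetry to invoke is simply that $\phi(t) \triangleq f_{\bn_h}(s-t)\,g(-s-t)$ satisfies $\phi(-t) = f_{\bn_h}(s+t)\,g(-s+t)$, and I claim $\phi(t)=\phi(-t)$ because $g(-s-t) = g(s+t)$ and $f_{\bn_h}(s-t) = f_{\bn_h}(t-s)$, so $\phi(t) = f_{\bn_h}(t-s)g(s+t)$ while $\phi(-t) = f_{\bn_h}(t+s)g(t-s)$ — still not obviously equal. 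So the honest resolution is: $g(-s-t) = g(s+t)$ by symmetry of $g$, hence $\phi(t) = f_{\bn_h}(s-t)g(s+t)$, and then $\phi(-t) = f_{\bn_h}(s+t)g(s-t)$; these agree iff $f_{\bn_h}=g$, which is false in general. Therefore I expect the actual argument not to use reflection symmetry of the product but rather a different observation.

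Reconsidering: the right approach is to note $\bn_h + \bn_a = (y_h - \mathbf{u}) + (y_a - \mathbf{u}) = y_1 + y_2 - 2\mathbf{u}$, so $\mathbf{u} = \frac{y_1+y_2}{2} - \frac{\bn_h+\bn_a}{2}$, and thus $\E[\mathbf{u}\mid\underline{y},\mathcal{A}_\eta] = \frac{y_1+y_2}{2} - \frac12\E[\bn_h + \bn_a \mid \underline{y},\mathcal{A}_\eta]$. It then suffices to show $\E[\bn_h + \bn_a\mid\underline{y},\mathcal{A}_\eta]=0$. Given $\underline{y}$, the pair $(\bn_h,\bn_a)$ lies on the line $\bn_h - \bn_a = y_h - y_a$, parametrized by $\mathbf{u}$; and $\bn_h + \bn_a = y_1+y_2 - 2\mathbf{u}$. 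The posterior of $\mathbf{u}$ has density proportional to $f_{\bn_h}(y_h - u)g(y_a - u)$ on $[-M,M]$. The bound $|y_1+y_2|\le 2M-(\eta+2)\Delta$ guarantees that the support of this density in $u$, namely $\{u : |y_h - u|\le\Delta,\ y_a - u \in \mathrm{supp}(g)\}$, intersected with the constraint from $\mathcal{A}_\eta$, lies strictly inside $[-M,M]$, so the $\mathbbm{1}_{[-M,M]}$ truncation is inactive and the posterior is exactly $\propto f_{\bn_h}(y_h-u)g(y_a-u)$ with no boundary clipping. Now substitute $u = \frac{y_1+y_2}{2} + t$: the density becomes $\propto f_{\bn_h}(s-t)g(-s-t)$ where $s = \frac{y_h-y_a}{2}$, and I want $\E[t] = 0$, i.e. $\int t\, f_{\bn_h}(s-t)g(-s-t)\,dt = 0$. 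Using $g(-s-t)=g(s+t)$: $\int t\,f_{\bn_h}(s-t)g(s+t)\,dt$. Sub $t\mapsto -t$: $\int (-t) f_{\bn_h}(s+t)g(s-t)\,dt$. Adding, I need $\int t[f_{\bn_h}(s-t)g(s+t) - f_{\bn_h}(s+t)g(s-t)]\,dt = 0$ — this holds by the substitution $t \mapsto -t$ swapping the two terms only if I also relabel, which brings me back to the same impasse.

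The hard part, then — and the place I expect the paper's proof to do something clever — is precisely establishing $\E[\bn_h+\bn_a \mid \underline{y},\mathcal{A}_\eta]=0$ when $f_{\bn_h}\neq g$. I suspect the resolution is that one should \emph{not} condition on $\underline{\by}=\underline{y}$ fully but work with the change of variables $(\mathbf{u},\bn_a)\mapsto$ something, or that the estimator statement is really about $\E[\mathbf{u}\mid\mathcal{A}_\eta,\underline{y}]$ where the relevant sufficient statistic collapses things. Concretely, given $\underline{y}$, knowing which node is honest is unknown to the DC, so the posterior is a \emph{mixture} over $h\in\{1,2\}$ with equal prior weight: $p(u\mid\underline{y}) \propto \frac12 f_{\bn_h}(y_1-u)g(y_2-u) + \frac12 f_{\bn_h}(y_2-u)g(y_1-u)$. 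This symmetrized density, under $u = \frac{y_1+y_2}{2}+t$, becomes $\propto f_{\bn_h}(\sigma-t)g(-\sigma-t) + f_{\bn_h}(-\sigma-t)g(\sigma-t)$ with $\sigma = \frac{y_1-y_2}{2}$, and now using symmetry of both $f_{\bn_h}$ and $g$ one checks this is an \emph{even} function of $t$ — because $t\mapsto-t$ sends the first summand $f_{\bn_h}(\sigma-t)g(-\sigma-t) = f_{\bn_h}(\sigma-t)g(\sigma+t)$ to $f_{\bn_h}(\sigma+t)g(\sigma-t)$, which is the second summand $f_{\bn_h}(-\sigma-t)g(\sigma-t) = f_{\bn_h}(\sigma+t)g(\sigma-t)$. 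So the symmetrized posterior is even in $t$, hence $\E[t]=0$ and $\E[\mathbf{u}\mid\underline{y}] = \frac{y_1+y_2}{2}$; intersecting with $\mathcal{A}_\eta$ (a $\underline{y}$-measurable, hence already-determined event) changes nothing, and the boundary hypothesis $|y_1+y_2|\le 2M-(\eta+2)\Delta$ is exactly what's needed to keep the $[-M,M]$ truncation from breaking the evenness. So my plan: (1) reduce to computing the posterior mean of $\mathbf{u}$ given $\underline{y}$ and $\mathcal{A}_\eta$; (2) write this posterior as the equal mixture over the unknown identity of the honest node; (3) recenter at $\frac{y_1+y_2}{2}$ and invoke symmetry of $f_{\bn_h}$ and $g\in\bar\Lambda_{\AD}$ to see the mixture density is even; (4) verify the range condition makes the uniform-prior truncation inactive so the evenness is exact; conclude $\est^*_{\eta,g}(\underline{y}) = \frac{y_1+y_2}{2}$.
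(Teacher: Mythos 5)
Your proposal is correct, and it ultimately rests on exactly the same symmetry that powers the paper's proof: because the honest/adversarial assignment is uniform over the two nodes and both $f_{\bn_h}$ and $g\in\Bar{\Lambda}_{\mathsf{AD}}$ are symmetric, the pairs $(n_1,n_2)$ and $(-n_2,-n_1)$ are equally likely, which is precisely why your symmetrized mixture density is even in $t$ after recentering at $\frac{y_1+y_2}{2}$. Your detour is actually instructive: the single-assignment product $f_{\bn_h}(\sigma-t)g(\sigma+t)$ is indeed \emph{not} even when $f_{\bn_h}\neq g$, and the mixture over the unknown identity of the honest node is the essential ingredient, not an optional refinement — the paper encodes the same fact as the statement that $f_{\mathbf{s}|\mathbf{t}}(s|t)$ is symmetric in $s$ for $\mathbf{s}=\bn_1+\bn_2$, $\mathbf{t}=\bn_1-\bn_2$. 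Where you differ is the formal vehicle: you compute the posterior mean $\mathbb{E}[\bu\mid\acce,\underline{y}]$ directly by Bayes' rule and a change of variables, whereas the paper verifies the orthogonality condition $\mathbb{E}[(\bu-\frac{\by_1+\by_2}{2})\,\nu(\by_1,\by_2)\mid\cdots]=0$ for arbitrary $\nu$ after passing to sum/difference coordinates and arguing that $\by_1+\by_2$ is uniform and independent of $\bn_1+\bn_2$ on the interior event. The two are equivalent characterizations of the MMSE estimator; your route is arguably more elementary and makes the pointwise-in-$\underline{y}$ claim of the lemma more transparent, while the paper's orthogonality route avoids writing down the posterior density explicitly. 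Your use of the hypothesis $|y_1+y_2|\leq 2M-(\eta+2)\Delta$ (to keep the support of the recentered posterior inside $[-M,M]$ so the uniform-prior truncation cannot break the evenness) matches the role of $\Gamma=M-\frac{(\eta+2)\Delta}{2}$ in the paper, and your observation that $\acce$ is $\underline{y}$-measurable and hence inert in the conditioning is correct.
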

\begin{proof}
    The proof is in Appendix \ref{proof:MMSE_for_Internal_Part}
\end{proof}
\begin{lemma}\label{bound_mse_symmetric_mean}
    For any $\pare \in \Lambda_{\mathsf{DC}}$, and  $\gdot \in \Bar{\Lambda}_{\mathsf{AD}}$, we have
    \begin{align}\label{statement_of_lemma_for_mmse_and_mean}
      &\mathsf{MSE}\big(\mathsf{mean}(.,.), \gdot, \pare\big) + \frac{-(\eta^2+4)(\eta+2)\Delta^3}{M}\nonumber\\  &\leq \mathsf{MMSE}\big( \gdot, \pare\big)  
      \leq \mathsf{MSE}\big(\mathsf{mean}(.,.), \gdot, \pare\big).
    \end{align}
\end{lemma}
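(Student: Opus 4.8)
The right-hand (upper) inequality is immediate: by \eqref{definition_best_estimator} the estimator $\est^*_{\pare,g}$ minimizes $\mathsf{MSE}(\,\cdot\,,\gdot,\pare)$ over all functions $\mathbb{R}^2\to\mathbb{R}$, and $\mathsf{mean}(.,.)$ is one such function, so $\MMSE(\gdot,\pare)=\mathsf{MSE}(\est^*_{\pare,g},\gdot,\pare)\le \mathsf{MSE}(\mathsf{mean}(.,.),\gdot,\pare)$. All the work is in the left-hand (lower) bound, and the plan is to show that the entire gap between the two quantities is produced by a rare ``boundary'' event near $|\by_1+\by_2|\approx 2M$.

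Concretely, I would start from the standard fact that $\MMSE(\gdot,\pare)=\mathbb{E}\big[(\bu-\hat{\bu})^2\mid\acce\big]$, where $\hat{\bu}\triangleq\mathbb{E}[\bu\mid\acce,\underline{\by}]$ is the conditional mean, which minimizes the conditional mean-square error separately for each realization of $\underline{\by}$. Hence
\begin{align*}
\mathsf{MSE}(\mathsf{mean}(.,.),\gdot,\pare)-\MMSE(\gdot,\pare)
=\mathbb{E}\!\left[\Big(\bu-\tfrac{\by_1+\by_2}{2}\Big)^2-\big(\bu-\hat{\bu}\big)^2\,\Big|\,\acce\right].
\end{align*}
Since $\gdot\in\bar\Lambda_{\mathsf{AD}}$ is symmetric, Lemma~\ref{MMSE_for_Internal_Part} applies and gives $\hat{\bu}=\tfrac{\by_1+\by_2}{2}$ for every $\underline{\by}$ with $|\by_1+\by_2|\le 2M-(\eta+2)\Delta$; therefore the integrand vanishes on that set. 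Writing $\mathcal{B}\triangleq\big\{|\by_1+\by_2|>2M-(\eta+2)\Delta\big\}$ for the complementary event and using $(\bu-\hat{\bu})^2\ge 0$, I then obtain
\begin{align*}
0\ \le\ \mathsf{MSE}(\mathsf{mean}(.,.),\gdot,\pare)-\MMSE(\gdot,\pare)\ \le\ \mathbb{E}\!\left[\Big(\bu-\tfrac{\by_1+\by_2}{2}\Big)^2\,\mathbbm{1}_{\mathcal{B}}\ \Big|\ \acce\right].
\end{align*}

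It remains to estimate the two factors of this last expectation. On $\acce=\{|\bn_h-\bn_a|\le\eta\Delta\}$ one has $\bu-\tfrac{\by_1+\by_2}{2}=-\tfrac12(\bn_h+\bn_a)=-\tfrac12\big(2\bn_h-(\bn_h-\bn_a)\big)$, whence $\big(\bu-\tfrac{\by_1+\by_2}{2}\big)^2\le 2\bn_h^2+\tfrac12(\bn_h-\bn_a)^2\le 2\Delta^2+\tfrac{\eta^2}{2}\Delta^2=\tfrac{(\eta^2+4)\Delta^2}{2}$, using $|\bn_h|\le\Delta$ and $|\bn_h-\bn_a|\le\eta\Delta$. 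For $\Pr(\mathcal{B}\mid\acce)$ I would use that $\acce$ is a function of $\bn_h-\bn_a$ alone, hence independent of $\bu$: conditioning additionally on any $(\bn_h,\bn_a)\in\acce$, the variable $\bu$ is still $\text{unif}[-M,M]$, and since $|\bn_h+\bn_a|\le(\eta+2)\Delta$ on $\acce$ the requirement $|2\bu+\bn_h+\bn_a|>2M-(\eta+2)\Delta$ forces $|\bu|>M-(\eta+2)\Delta$, an event of probability at most $\tfrac{(\eta+2)\Delta}{M}$. Averaging over $(\bn_h,\bn_a)$ given $\acce$ yields $\Pr(\mathcal{B}\mid\acce)\le\tfrac{(\eta+2)\Delta}{M}$, and multiplying the two estimates gives the bound $\tfrac{(\eta^2+4)(\eta+2)\Delta^3}{2M}$, which is in particular at most $\tfrac{(\eta^2+4)(\eta+2)\Delta^3}{M}$, as claimed.

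The conceptual heavy lifting — that $\mathsf{mean}$ is \emph{exactly} the MMSE estimator throughout the bulk region, regardless of the (symmetric) adversarial distribution — is already packaged in Lemma~\ref{MMSE_for_Internal_Part}, so the present argument is mostly bookkeeping. The one point that genuinely needs care, and what I would regard as the main obstacle, is that conditioning on $\acce$ introduces a $1/\PA(\gdot,\pare)$ normalization that could a priori blow up when the acceptance probability is small; the resolution is the observation that $\acce$ depends only on the honest and adversarial noises, so $\bu$ is independent of $\acce$ and that normalization cancels cleanly when we pass to $\Pr(\mathcal{B}\mid\acce)$, leaving a bound of order $\Delta^3/M$ that is uniform in $\alpha$.
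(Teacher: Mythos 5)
Your proof is correct and follows essentially the same route as the paper's: the upper bound is immediate from optimality of the MMSE estimator, and the lower bound isolates the entire gap on the boundary event $\{|\by_1+\by_2|>2M-(\eta+2)\Delta\}$ via Lemma~\ref{MMSE_for_Internal_Part}, bounding the conditional second moment of $\bu-\tfrac{\by_1+\by_2}{2}$ by $O(\Delta^2)$ and the conditional probability of that event by $\tfrac{(\eta+2)\Delta}{M}$ using the independence of $\bu$ from the acceptance event. Your pointwise bound $\tfrac{(\eta^2+4)\Delta^2}{2}$ (via $(a+b)^2\le 2a^2+2b^2$) is a factor of $2$ looser than the paper's claimed $\tfrac{(\eta^2+4)\Delta^2}{4}$, but unlike the paper's ``triangle inequality'' step it is actually rigorous, and it still lands within the stated slack of $\tfrac{(\eta^2+4)(\eta+2)\Delta^3}{M}$.
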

\begin{proof}
    The proof is in Appendix \ref{proof:bound_mse_symmetric_mean}.
\end{proof}

\begin{corollary}
    A direct consequence of Lemma \ref{bound_mse_symmetric_mean} is that for any $\pare \in \Lambda_{\mathsf{DC}}$, and $0 < \alpha \leq 1$, we have

    \begin{align}\label{equation:bound_mse_symmetric_mean}
        &\underset{\gdot \in \Bar{\Lambda}_{\mathsf{AD}}}{\max} ~ \underset{\mathsf{PA} \left( \gdot, \pare \right) \geq \alpha}{\mathsf{MSE}\big(\mathsf{mean}(.,.), \gdot, \pare \big)} + \frac{-(\eta^2+4)(\eta+2)\Delta^3}{M} \nonumber \\
        &\quad \leq \underset{\gdot \in \Bar{\Lambda}_{\mathsf{AD}}}{\max} ~ \underset{\mathsf{PA} \left( \gdot, \pare \right) \geq \alpha}{\mathsf{MMSE}\big( \gdot, \pare \big)} \nonumber \\
        &\quad\leq \underset{\gdot \in \Bar{\Lambda}_{\mathsf{AD}}}{\max} ~ \underset{\mathsf{PA} \left( \gdot, \pare \right) \geq \alpha}{\mathsf{MSE}\big(\mathsf{mean}(.,.), \gdot, \pare \big)}.
    \end{align}
\end{corollary}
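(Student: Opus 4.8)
The plan is to derive this Corollary directly from the pointwise sandwich in Lemma~\ref{bound_mse_symmetric_mean} by optimizing each of its three sides over the common feasible set $\mathcal{F}_{\eta,\alpha} \triangleq \{\gdot \in \Bar{\Lambda}_{\mathsf{AD}} ~|~ \PA(\gdot,\pare) \geq \alpha\}$; throughout write $C \triangleq \frac{(\eta^2+4)(\eta+2)\Delta^3}{M}$. For the right-hand inequality of the Corollary I would simply observe that Lemma~\ref{bound_mse_symmetric_mean} gives $\mathsf{MMSE}(\gdot,\pare) \leq \mathsf{MSE}(\mathsf{mean}(.,.),\gdot,\pare)$ for \emph{every} $\gdot \in \mathcal{F}_{\eta,\alpha}$, hence for any such $\gdot$ we also have $\mathsf{MMSE}(\gdot,\pare) \leq \sup_{\gdot' \in \mathcal{F}_{\eta,\alpha}} \mathsf{MSE}(\mathsf{mean}(.,.),\gdot',\pare)$; taking the supremum over $\gdot \in \mathcal{F}_{\eta,\alpha}$ on the left gives exactly the claimed upper bound, and no attainment of either maximum is needed for this direction.

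For the left-hand inequality I would use a near-maximizer argument, so as not to presuppose that the supremum of the mean-estimator optimization (which equals $\beta_\eta(\alpha)$ via the identity in~\eqref{relation_between_best_seymmetric_best_general}) is attained. Fix $\epsilon > 0$ and pick $\gdot_\epsilon \in \mathcal{F}_{\eta,\alpha}$ with $\mathsf{MSE}(\mathsf{mean}(.,.),\gdot_\epsilon,\pare) \geq \sup_{\gdot \in \mathcal{F}_{\eta,\alpha}} \mathsf{MSE}(\mathsf{mean}(.,.),\gdot,\pare) - \epsilon$. Applying the left inequality of Lemma~\ref{bound_mse_symmetric_mean} to $\gdot_\epsilon$ and using $\gdot_\epsilon \in \mathcal{F}_{\eta,\alpha}$ yields the chain $\sup_{\gdot \in \mathcal{F}_{\eta,\alpha}} \mathsf{MMSE}(\gdot,\pare) \geq \mathsf{MMSE}(\gdot_\epsilon,\pare) \geq \mathsf{MSE}(\mathsf{mean}(.,.),\gdot_\epsilon,\pare) - C \geq \sup_{\gdot \in \mathcal{F}_{\eta,\alpha}} \mathsf{MSE}(\mathsf{mean}(.,.),\gdot,\pare) - \epsilon - C$; letting $\epsilon \downarrow 0$ finishes the bound.

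I do not expect a genuine obstacle here: all the substance — that the mean estimator matches the MMSE estimator up to the boundary correction $C$ — is already packaged in Lemma~\ref{bound_mse_symmetric_mean} (which in turn rests on the symmetrization Lemma~\ref{lemma:making_symmetry} and the exact-interior computation Lemma~\ref{MMSE_for_Internal_Part}), so this step is purely the bookkeeping of pushing a uniform pointwise bound through a supremum. The one point that warrants mild care, as noted above, is avoiding the assumption that the extremal adversarial distribution exists; the $\epsilon$-near-maximizer device sidesteps that and is also consistent with how limiting achievable pairs are handled elsewhere in the paper. With both inequalities of~\eqref{equation:bound_mse_symmetric_mean} in hand, combining them with the corollary following Lemma~\ref{lemma:making_symmetry} and the definitions of $c_\eta(\alpha)$ and $\beta_\eta(\alpha)$ then yields Lemma~\ref{lemma:mean_is_near_optimal}.
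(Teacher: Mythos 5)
Your proposal is correct and takes essentially the same route as the paper, which states this corollary without further argument precisely because it amounts to pushing the uniform pointwise sandwich of Lemma \ref{bound_mse_symmetric_mean} through the supremum over the common feasible set $\{\gdot \in \Bar{\Lambda}_{\mathsf{AD}} : \PA(\gdot,\pare)\geq\alpha\}$. Your $\epsilon$-near-maximizer device for the left inequality is a slightly more careful rendering (avoiding any attainment assumption) of the same one-line argument.
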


Now we prove Lemma \ref{lemma:mean_is_near_optimal}. 
    For any $0 < \alpha \leq 1$, let\footnote{For the case that the optimization problem in \eqref{best_general_noise_definition} and \eqref{best_symmetric_noise_definition} has more than one solution, we just pick one them randomly.}
\begin{align}
    g^*(.) &= \underset{\gdot \in \Lambda_{\mathsf{AD}}}{\arg \max} ~ \underset{\mathsf{PA} \left( \gdot, \pare \right) \geq \alpha}{\mathsf{MMSE}\left(\gdot, \pare \right)}, \label{best_general_noise_definition}\\
    g^*_s(.) &= \underset{\gdot \in \Bar{\Lambda}_{\mathsf{AD}}}{\arg \max} ~ \underset{\mathsf{PA} \left( \gdot, \pare \right) \geq \alpha}{\mathsf{MMSE}\big( \gdot, \pare \big)}. \label{best_symmetric_noise_definition}
\end{align}

One can verify that 
\begin{align}
   c_{\pare}(\alpha) &\overset{(a)}{=}\mathsf{MMSE}\big(g^*(.),  \pare \big) \nonumber \\
   &\geq \mathsf{MMSE}\big(g^*_s(.), \pare \big) \nonumber \\
   &\overset{(b)}{\geq}  \bigg(\underset{\gdot \in \Bar{\Lambda}_{\mathsf{AD}}}{\max} ~ \underset{\mathsf{PA} \left( \gdot, \pare \right) \geq \alpha}{\mathsf{MSE}\big(\mathsf{mean}(.,.), \gdot, \pare \big)} \bigg) \nonumber \\
   &\quad - \frac{(\eta^2+4)(\eta+2)\Delta^3}{M}\nonumber \\
   &\overset{(c)}{=} \bigg(\underset{\gdot \in \Lambda_{\mathsf{AD}}}{\max} ~ \underset{\mathsf{PA} \left( \gdot, \pare \right) \geq \alpha}{\mathsf{MSE}\big(\mathsf{mean}(.,.), \gdot, \pare \big)} \bigg)\nonumber \\
   &\quad - \frac{(\eta^2+4)(\eta+2)\Delta^3}{M} \nonumber \\
   & \overset{(d)}{=} \beta_{\eta}(\alpha) - \frac{(\eta^2+4)(\eta+2)\Delta^3}{M},
\end{align}
where (a) follows from \eqref{C_definition}, (b) is based on \ref{equation:bound_mse_symmetric_mean}, (c) is a consequence of \eqref{relation_between_best_seymmetric_best_general}, and (d) follows from \eqref{definition_beta_eta}. Let 
\begin{align}
    g_{\textrm{sym}}(z) \triangleq \frac{g(z) + g(-z)}{2}.
\end{align}
To complete the proof of Lemma \ref{lemma:mean_is_near_optimal}, note that we have
\begin{align}
    c_{\pare}(\alpha) & = \mathsf{MMSE}\big(g^*(.),  \pare \big) \nonumber \\
    &\leq \mathsf{MSE}\big(\mathsf{mean}(.,.), g^*(.),  \pare \big) \nonumber \\
    &\overset{(a)}{=}\mathsf{MSE}\big(\mathsf{mean}(.,.), g^*_{\textrm{sym}}(.), \pare \big) \nonumber \\
    &\overset{(b)}{\leq}\underset{\gdot \in \Bar{\Lambda}_{\mathsf{AD}}}{\max} ~ \underset{\mathsf{PA} \left( \gdot, \pare \right) \geq \alpha}{\mathsf{MSE}\big(\mathsf{mean}(.,.), \gdot, \pare \big)} \nonumber \\
    &\overset{(c)}{=} \underset{\gdot \in \Lambda_{\mathsf{AD}}}{\max} ~ \underset{\mathsf{PA} \left( \gdot, \pare \right) \geq \alpha}{\mathsf{MSE}\big(\mathsf{mean}(.,.), \gdot, \pare \big)} \nonumber \\
    & = \beta_{\pare}(\alpha),
\end{align}
where (a) follows from Lemma \ref{lemma:making_symmetry}, and (b) follows from the fact that $g^*_{\textrm{sym}}(.) \in \Bar{\Lambda}_{\mathsf{AD}}$, (c) follows from \eqref{relation_between_best_seymmetric_best_general}.

\begin{remark}
    Lemma \ref{lemma:mean_is_near_optimal} presents a surprising result. Specifically, in Optimization 1, the estimation function is the minimum mean square estimator, which is dependent on the noise distribution of the adversary, whereas in Optimization 2, the estimation function is fixed. However, Lemma \ref{lemma:mean_is_near_optimal} reveals that the results of these two optimization problems are approximately equal, with a gap that vanishes asymptotically. 
\end{remark}

\subsection{Proof of Lemma \ref{lemma:best_noise}}\label{proof:lemma:best_noise} 
 To prove this lemma, firstly in Claim \ref{lemma:upper_bound_of_mean} we demonstrate that $\beta_{\eta}(\alpha) \leq  \frac{h^*_{\eta}(\alpha)}{4\alpha}$ (converse). To show that, we first show that by limiting the search space of $g(.)$ to symmetric distributions, we do not lose optimality. Then,
we take the following steps: (i) We show that if we limit the domain of the noise distributions to $[-(\eta+1)\Delta, -(\eta-1)\Delta] \cup [(\eta-1)\Delta, (\eta+1)\Delta] $, and assume that the probability of acceptance is exactly equal to $\alpha$, we again don't lose the optimality. (ii)  We prove that for those noise distributions $\mathsf{MMSE}\left(\mathsf{mean}(.,.), \gdot, \pare \right)
        = \frac{1}{2\alpha}\int_{(\eta-1)\Delta}^{(\eta+1)\Delta} \nu_{\eta}(z)g(z) \,dz$, where $\nu_{\eta}(z)$ is defined in Theorem \eqref{theorem:Main_Bound_For_Max_Problem}. (iii) By changing the parameters of the integral and applying Jensen's inequality, we derive the upper bound of $\frac{h^*_{\eta}(\alpha)}{4\alpha}$. 

        Then in Claim \ref{lemma:lower_bound_of_mean} we prove that $\beta_{\eta}(\alpha) \geq  \frac{h^*_{\eta}(\alpha)}{4\alpha}$ (achievability):  We use 
         Algorithm \ref{Alg:finding_noise} 
         to find a noise distribution and show that for that distribution $\mathsf{MSE}(\mathsf{mean}(.,.), \gdot, \pare) = \frac{h^*_{\eta}(\alpha)}{4\alpha}$, and $\PA(\gdot, \eta) = \alpha$.
         
         The details of the proof of Lemma \ref{lemma:best_noise} is as follows. Note that based on Lemma~\ref{lemma:making_symmetry}, we limit the search space of the noise distributions to symmetric ones.  Therefore, hence after we assume that the noise distribution of the adversary is symmetric.

  \begin{claim}\label{lemma:upper_bound_of_mean}
        For any $\pare \in \Lambda_{\mathsf{DC}} $ and $0 < \alpha \leq 1$, we have
    \begin{align}\label{H_eta_is_upperbound}
        \beta_{\eta}(\alpha) \leq \frac{h^*_{\eta}(\alpha)}{4\alpha}.
    \end{align}
    \end{claim}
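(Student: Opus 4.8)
## Proof Proposal for Claim~\ref{lemma:upper_bound_of_mean}

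The plan is to start from the observation (already available via Lemma~\ref{lemma:making_symmetry}) that, without loss of optimality, the adversary's noise $g(\cdot)$ can be taken to be symmetric; so throughout we fix a symmetric $g(\cdot)$ with $\PA(g(\cdot),\eta)\geq\alpha$ and bound $\mathsf{MSE}(\mathsf{mean}(.,.),g(\cdot),\eta)$ from above. First I would argue that we may further restrict $g$ so that $\PA(g(\cdot),\eta)$ is \emph{exactly} $\alpha$ and so that $g$ is supported on the symmetric annulus $[-(\eta+1)\Delta,-(\eta-1)\Delta]\cup[(\eta-1)\Delta,(\eta+1)\Delta]$. The intuition for the support restriction: if $|n_a|<(\eta-1)\Delta$ then, since $|n_h|\le\Delta$, we always have $|n_h-n_a|\le\eta\Delta$, so that mass is always accepted and contributes to acceptance but is "cheap" in MSE — moving such mass outward toward the acceptance boundary increases the conditional MSE while we can simultaneously compensate the acceptance probability; if $|n_a|>(\eta+1)\Delta$ the pair is never accepted, so that mass is irrelevant. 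Making the probability of acceptance exactly $\alpha$ (rather than $\ge\alpha$) only helps the adversary because, for a fixed amount of "acceptable" mass, concentrating it near the boundary increases the conditional second moment while the denominator $\alpha$ in the conditional expectation shrinks. This step needs to be done carefully but is conceptually a standard exchange/rearrangement argument.

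Next, for a symmetric $g$ supported on that annulus with $\PA=\alpha$, I would compute the conditional MSE in closed form. Since the mean estimator has error $\bu-\frac{\by_1+\by_2}{2}=-\frac{\bn_h+\bn_a}{2}$, we have
\begin{align}
\mathsf{MSE}(\mathsf{mean}(.,.),g(\cdot),\eta)=\frac{1}{4}\,\E\!\left[(\bn_h+\bn_a)^2 \,\middle|\, \mA_\eta\right]
=\frac{1}{4\alpha}\,\E\!\left[(\bn_h+\bn_a)^2\,\mathbbm{1}\{|\bn_h-\bn_a|\le\eta\Delta\}\right].
\end{align}
Conditioning on $\bn_a=z$ and using symmetry of both $f_{\bn_h}$ and $g$, together with the fact that for $z$ in the annulus the event $\{|\bn_h-z|\le\eta\Delta\}$ is exactly $\{\bn_h\ge z-\eta\Delta\}$ (the upper endpoint is $\Delta$ since $f_{\bn_h}$ is supported on $[-\Delta,\Delta]$), one gets $\E[(\bn_h+z)^2\mathbbm{1}\{\cdot\}] = \nu_\eta(z)$ and $\Pr(\cdot)=k_\eta(z)$, where $\nu_\eta,k_\eta$ are as defined in Theorem~\ref{theorem:Main_Bound_For_Max_Problem}. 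This yields
\begin{align}
\mathsf{MSE}(\mathsf{mean}(.,.),g(\cdot),\eta)=\frac{1}{2\alpha}\int_{(\eta-1)\Delta}^{(\eta+1)\Delta}\nu_\eta(z)\,g(z)\,dz,
\qquad
\alpha=\int_{(\eta-1)\Delta}^{(\eta+1)\Delta}k_\eta(z)\,g(z)\,dz\cdot\tfrac{1}{2}\cdot 2,
\end{align}
(the factor bookkeeping reflecting the two symmetric halves of the support).

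Finally I would change variables via $q=k_\eta(z)$, which is legitimate since $F_{\bn_h}$, hence $k_\eta$, is strictly monotone; writing $g$ as inducing a probability measure $\mu$ on $q\in[0,1]$ with $\int q\,d\mu(q)=\alpha$ (this is the acceptance constraint, now read off in the $q$-variable), the objective becomes $\frac{1}{2\alpha}\int h_\eta(q)\,d\mu(q)$ with $h_\eta(q)=\nu_\eta(k_\eta^{-1}(q))$. Actually the correct normalization gives $\frac{1}{4\alpha}\,\E_\mu[h_\eta(q)]$ after accounting for the probability-one total mass; then $\E_\mu[h_\eta(\mathbf q)]\le \E_\mu[h^*_\eta(\mathbf q)]\le h^*_\eta(\E_\mu[\mathbf q])=h^*_\eta(\alpha)$, where the first inequality is $h_\eta\le h^*_\eta$ (concave envelope) and the second is Jensen applied to the concave function $h^*_\eta$. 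This gives exactly $\beta_\eta(\alpha)\le h^*_\eta(\alpha)/(4\alpha)$.

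The main obstacle I expect is the first reduction step — rigorously justifying that restricting the support to the annulus and forcing $\PA=\alpha$ does not decrease the supremum. The support restriction requires an explicit mass-transport argument (show that relocating any mass with $|z|<(\eta-1)\Delta$ outward, plus possibly deleting mass with $|z|>(\eta+1)\Delta$ and renormalizing, weakly increases the conditional MSE while maintaining $\PA\ge\alpha$), and the "$\PA=\alpha$ is worst" claim requires showing monotonicity of the optimal value in $\alpha$ in the right direction — i.e. that $\alpha\mapsto$ (max MSE subject to $\PA\ge\alpha$) is achieved at the boundary. I would handle the latter by noting that any feasible $g$ with $\PA=\alpha'>\alpha$ can be mixed with a point mass whose realization is rejected with positive probability, or more cleanly, by arguing directly in the $q$-variable that the constraint $\E_\mu[\mathbf q]\ge\alpha$ is, at the optimum, tight because $h^*_\eta(\alpha)/\alpha$ is decreasing in $\alpha$ (which itself would need a short verification). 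The closed-form computation of $\nu_\eta,k_\eta$ and the change of variables are routine once the reduction is in place.
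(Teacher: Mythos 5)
Your proposal follows essentially the same route as the paper: symmetrize via Lemma~\ref{lemma:making_symmetry}, restrict the support to the annulus $[(\eta-1)\Delta,(\eta+1)\Delta]$ (in absolute value) and force $\mathsf{PA}=\alpha$ exactly (the paper's Lemmas~\ref{lemma:bounded_noise_existence} and~\ref{lemma:exact_acc_noise_existence}, proved by exactly the mass-transport and mix-with-rejected-mass constructions you sketch), then change variables via $q=k_\eta(z)$ and conclude with $h_\eta\le h^*_\eta$ plus Jensen's inequality applied to the concave envelope. The argument is correct and the normalization you settle on, $\frac{1}{4\alpha}\E_\mu[h_\eta(\mathbf q)]$ with $\E_\mu[\mathbf q]=\alpha$, matches the paper's.
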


  To prove Claim \ref{lemma:upper_bound_of_mean}, we prove Lemmas~\ref{general_format_symmetric}, \ref{lemma:bounded_noise_existence}, and \ref{lemma:exact_acc_noise_existence}, stated as follows: 
    \begin{lemma}\label{general_format_symmetric}
    For any symmetric $\gdot \in \Lambda_{\mathsf{AD}}$, we have
    \begin{align}
    \mathsf{PA} \left( \gdot, \pare \right) &= 2 \int_{0}^{(\eta-1)\Delta} g(z) \,dz \nonumber \\
    &\quad + 2\int_{(\eta-1)\Delta}^{(\eta+1)\Delta} 
  \int_{z-\eta\Delta}^{\Delta} f_{\bn_h}(x) g(z) \,dx \,dz,\label{general_symmetric_acc}
 \end{align}
 and
  \begin{align}
    &\mathsf{MSE}\big(\mathsf{mean}(.,.), \gdot, \pare \big)  \nonumber \\
        & =\frac{1}{2\mathsf{PA} \left( \gdot, \pare \right)}\int_{0}^{(\eta-1)\Delta} (z^2 + \sigma^2_{\bn_h})g(z) \,dz \nonumber \\
        &+\frac{1}{2\mathsf{PA} \left( \gdot, \pare \right)}\int_{(\eta-1)\Delta}^{(\eta+1)\Delta} \int_{z-\eta\Delta}^{\Delta} (x+z)^2f_{\bn_h}(x)g(z)\,dx \,dz ,\label{general_symmetric_mse}
\end{align}
where $f_{\bn_h}$ is the PDF of  the noise of the honest node.
    \end{lemma}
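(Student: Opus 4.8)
The plan is to compute both $\mathsf{PA}(g(.),\eta)$ and $\mathsf{MSE}(\mathsf{mean}(.,.),g(.),\eta)$ directly from their definitions by conditioning on the adversary's noise value $\bn_a = z$ and integrating against the symmetric density $g(z)$. Recall that the acceptance event is $\mathcal{A}_\eta = \{|\by_1-\by_2|\le \eta\Delta\}$, and since $\by_h = \bu+\bn_h$, $\by_a = \bu+\bn_a$, this is exactly $\{|\bn_h - \bn_a|\le \eta\Delta\}$ (the honest index $h$ is uniform in $\{1,2\}$, but by symmetry of $\bn_h$'s distribution this does not matter). Conditioned on $\bn_a = z\ge 0$: if $z \le (\eta-1)\Delta$, then since $|\bn_h|\le\Delta$ we always have $|\bn_h - z|\le (\eta-1)\Delta+\Delta = \eta\Delta$, so acceptance is certain; if $(\eta-1)\Delta \le z \le (\eta+1)\Delta$, acceptance requires $\bn_h \ge z-\eta\Delta$ (the other constraint $\bn_h \le z+\eta\Delta$ is automatic since $z+\eta\Delta \ge \eta\Delta-\Delta+\eta\Delta \ge \Delta$ for $\eta\ge 2$... more precisely $z-\eta\Delta \le \Delta$ and $z+\eta\Delta\ge \Delta$), which has probability $\int_{z-\eta\Delta}^\Delta f_{\bn_h}(x)\,dx = k_\eta(z)$ using the notation of Theorem~\ref{theorem:Main_Bound_For_Max_Problem}; if $z > (\eta+1)\Delta$, acceptance is impossible. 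First I would fold the $z<0$ contribution into the $z>0$ range using $g(z)=g(-z)$ (equivalently, replacing the conditional acceptance probability at $-z$ with that at $z$, which are equal because both $\bn_h$ and the acceptance region are symmetric), yielding the factor $2$ and the integral over $[0,\infty)$ split into $[0,(\eta-1)\Delta]$ and $[(\eta-1)\Delta,(\eta+1)\Delta]$. This gives \eqref{general_symmetric_acc}.

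For the MSE, note that whenever the inputs are accepted and we are in the interior regime — which, up to boundary effects that are irrelevant here because Lemma~\ref{general_format_symmetric} is stated for the exact MSE of $\mathsf{mean}$, not $\mathsf{MMSE}$ — the estimation error of the mean is $\bu - \frac{\by_1+\by_2}{2} = -\frac{\bn_h+\bn_a}{2}$, so $(\bu-\mathsf{mean})^2 = \frac{(\bn_h+\bn_a)^2}{4}$, whose distribution does not involve $\bu$ at all. Thus
\begin{align*}
\mathsf{MSE}\big(\mathsf{mean},g,\eta\big) = \frac{\E\!\left[\tfrac{(\bn_h+\bn_a)^2}{4}\,\mathbbm{1}_{\mathcal{A}_\eta}\right]}{\mathsf{PA}(g,\eta)}.
\end{align*}
I would then condition on $\bn_a = z$ again. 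For $z\in[0,(\eta-1)\Delta]$ acceptance is certain, so the inner conditional expectation is $\E[\tfrac{(\bn_h+z)^2}{4}] = \tfrac{1}{4}(z^2 + \E[\bn_h^2]) = \tfrac14(z^2+\sigma^2_{\bn_h})$, using $\E[\bn_h]=0$ by symmetry. For $z\in[(\eta-1)\Delta,(\eta+1)\Delta]$, acceptance restricts $\bn_h$ to $[z-\eta\Delta,\Delta]$, giving $\E[(\bn_h+z)^2\mathbbm{1}_{\mathcal{A}_\eta}\mid \bn_a=z] = \int_{z-\eta\Delta}^\Delta (x+z)^2 f_{\bn_h}(x)\,dx = \nu_\eta(z)$. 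Multiplying by $g(z)$, integrating, using symmetry to fold $z<0$ into $z>0$ (the factor $2$), and dividing by $\mathsf{PA}$ and by the $4$ from the $\tfrac14$ gives precisely \eqref{general_symmetric_mse}.

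The main obstacle I anticipate is handling the role of the unknown honest index $h\in\{1,2\}$ cleanly, and more importantly, the boundary issue: the identity $(\bu - \mathsf{mean})^2 = \frac{(\bn_h+\bn_a)^2}{4}$ holds unconditionally, but I must make sure that conditioning on $\mathcal{A}_\eta$ and on $\bn_a=z$ does not secretly reintroduce dependence on $\bu$ through the support constraint $\bu\sim\text{unif}[-M,M]$ — e.g. when $\bu$ is near $\pm M$, the pair $(\by_1,\by_2)$ can fall outside the range where the acceptance statistics are $\bu$-independent. However, since this lemma concerns $\mathsf{MSE}(\mathsf{mean},\cdot)$ and not $\mathsf{MMSE}$, and since $\bn_h-\bn_a$ (governing acceptance) and $\bn_h+\bn_a$ (governing the mean's error) are both independent of $\bu$, the conditional law of $(\bn_h,\bn_a)$ given $\mathcal{A}_\eta$ is genuinely independent of $\bu$, so no boundary correction is needed here — the $\Delta^3/M$ gaps enter only later, in Lemma~\ref{bound_mse_symmetric_mean}, where $\mathsf{MMSE}$ replaces $\mathsf{MSE}$. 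The remaining work is then purely the bookkeeping of the case split on $z$ and the symmetrization, which is routine. I would write this out carefully in the appendix, being explicit about why $z+\eta\Delta \ge \Delta$ makes the upper constraint on $\bn_h$ vacuous throughout the relevant range of $z$.
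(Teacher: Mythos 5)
Your proposal is correct and follows essentially the same route as the paper's proof in Appendix E: condition on $\bn_a=z$, observe that acceptance is the event $|\bn_h-\bn_a|\le\eta\Delta$ and the mean's squared error is $(\bn_h+\bn_a)^2/4$ (both independent of $\bu$), split into the three ranges of $|z|$, and use the symmetry of $g$ and of the conditional acceptance probability to fold the negative-$z$ contribution into the positive range. Your explicit remarks on the irrelevance of the honest index and the absence of boundary effects are correct and consistent with what the paper leaves implicit.
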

    \begin{proof}
        Proof is in Appendix \ref{proof:general_format_symmetric}
    \end{proof}
    
% \begin{lemma}\label{lemma:bounded_noise_existence}
%     There exists a PDF function $g_2^*(.)$ that optimizes  \eqref{definition_beta_eta}, where  $g_2^*(.)=0$ for 
%      $|z| < (\eta-1)\Delta$, and 
%     $|z| > (\eta+1)\Delta$. 
% \end{lemma}

\begin{lemma}\label{lemma:bounded_noise_existence}
    For any symmetric noise distribution $g_1(.)$, where $\PA(g_1(.), \eta) > 0$, there exist a symmetric noise distribution $g_2(.)$, such that for the case of 
     $|z| < (\eta-1)\Delta$,   
    $|z| > (\eta+1)\Delta$, $g_2(.)=0$ and
    \begin{align}
        \PA(g_2(.), \eta) &\geq \PA(g_1(.), \eta) \label{betternoise_acc} \\
        \mathsf{MSE}(\mathsf{mean}(.,.), g_2(.), \eta) &\geq \mathsf{MSE}(\mathsf{mean}(.,.), g_1(.), \eta) \label{betternoise_mse}.
    \end{align}
\end{lemma}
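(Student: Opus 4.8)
The plan is to exhibit $g_2$ explicitly: starting from the symmetric $g_1$, keep whatever mass already lies in the annulus $(\eta-1)\Delta \le |z| \le (\eta+1)\Delta$ untouched, push all the mass that $g_1$ places in the interior region $|z| < (\eta-1)\Delta$ onto the inner edges $z=\pm(\eta-1)\Delta$, and push all the mass that $g_1$ places in the exterior region $|z| > (\eta+1)\Delta$ onto the outer edges $z=\pm(\eta+1)\Delta$; then verify the two inequalities by substituting $g_2$ into the closed-form expressions of Lemma~\ref{general_format_symmetric}.

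First I would record three elementary facts about $k_{\eta}$ and $\nu_{\eta}$. (1) For $|z|\le(\eta-1)\Delta$ acceptance is automatic, since $|\bn_h-z|\le|\bn_h|+|z|\le\eta\Delta$; consistently, $k_{\eta}((\eta-1)\Delta)=\int_{-\Delta}^{\Delta}f_{\bn_h}(x)\,dx=1$. (2) For $|z|>(\eta+1)\Delta$ acceptance is impossible, since $|\bn_h-z|\ge|z|-\Delta>\eta\Delta$; consistently $k_{\eta}((\eta+1)\Delta)=\int_{\Delta}^{\Delta}f_{\bn_h}(x)\,dx=0$ and $\nu_{\eta}((\eta+1)\Delta)=0$. (3) Using that $f_{\bn_h}$ is symmetric, so $\E[\bn_h]=0$, one gets $\nu_{\eta}((\eta-1)\Delta)=\int_{-\Delta}^{\Delta}(x+(\eta-1)\Delta)^2 f_{\bn_h}(x)\,dx=\sigma^2_{\bn_h}+(\eta-1)^2\Delta^2$, which equals $\max_{0\le z\le(\eta-1)\Delta}\big(z^2+\sigma^2_{\bn_h}\big)$. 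In words: the inner edge $(\eta-1)\Delta$ is the outermost point whose $\PA$-contribution per unit mass still equals the full value $1$ enjoyed by interior points, and among all such points it maximizes the per-unit MSE weight; the outer edge $(\eta+1)\Delta$ is, like the whole exterior, inert for both $\PA$ and the MSE numerator.

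Next, the construction and verification. Writing $p_{\mathrm{in}}\triangleq\Pr_{g_1}\!\big(|\bn_a|<(\eta-1)\Delta\big)$ and $p_{\mathrm{out}}\triangleq\Pr_{g_1}\!\big(|\bn_a|>(\eta+1)\Delta\big)$, let $g_2$ coincide with $g_1$ on $\{(\eta-1)\Delta\le|z|\le(\eta+1)\Delta\}$ and add Dirac masses $\tfrac{p_{\mathrm{in}}}{2}$ at each of $\pm(\eta-1)\Delta$ and $\tfrac{p_{\mathrm{out}}}{2}$ at each of $\pm(\eta+1)\Delta$; this $g_2$ is symmetric, has total mass $1$, and vanishes for $|z|<(\eta-1)\Delta$ and $|z|>(\eta+1)\Delta$, as required. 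Plugging $g_2$ into Lemma~\ref{general_format_symmetric}: for $\PA$, the annulus term is identical, the interior mass of $g_1$ that contributed $p_{\mathrm{in}}$ to $\PA(g_1)$ still contributes $2k_{\eta}((\eta-1)\Delta)\cdot\tfrac{p_{\mathrm{in}}}{2}=p_{\mathrm{in}}$ at the inner edge, and the exterior mass contributes $2k_{\eta}((\eta+1)\Delta)\cdot\tfrac{p_{\mathrm{out}}}{2}=0$ exactly as before, so $\PA(g_2,\eta)=\PA(g_1,\eta)$ (in particular positive, so the conditional MSE stays well-defined). For the MSE numerator $\E\big[(\mathbf u-\mathsf{mean})^2\,\mathbbm{1}_{\mathcal A_{\eta}}\big]=\PA\cdot\mathsf{MSE}$, the annulus term is again unchanged, the interior contribution changes from $\tfrac12\int_0^{(\eta-1)\Delta}(z^2+\sigma^2_{\bn_h})g_1(z)\,dz$ to $\tfrac12\,\nu_{\eta}((\eta-1)\Delta)\cdot\tfrac{p_{\mathrm{in}}}{2}$, which is at least as large because $\nu_{\eta}((\eta-1)\Delta)=\sigma^2_{\bn_h}+(\eta-1)^2\Delta^2\ge z^2+\sigma^2_{\bn_h}$ throughout the interior and $\int_0^{(\eta-1)\Delta}g_1(z)\,dz=\tfrac{p_{\mathrm{in}}}{2}$, and the exterior contribution stays $0$ since $\nu_{\eta}((\eta+1)\Delta)=0$; hence the numerator does not decrease, and with the denominator fixed we obtain $\mathsf{MSE}(\mathsf{mean},g_2,\eta)\ge\mathsf{MSE}(\mathsf{mean},g_1,\eta)$.

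The argument is essentially bookkeeping once the construction is identified; the two points that need care are (i) choosing the collapse targets correctly — the inner edge for interior mass, because any point farther out has $k_{\eta}<1$ and would strictly lower $\PA$, and the outer edge for exterior mass, because pushing it inward could lower the MSE when $g_1$ is already concentrated where $\nu_{\eta}$ is large — and (ii) treating $g_1$ as a general symmetric probability measure rather than a smooth density (atoms already sitting on the edges, a possible atom at $0$, and the symmetry-induced factors of $2$), which is handled cleanly by the symmetric splitting of the push-forwards described above.
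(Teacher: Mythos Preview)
Your proof is correct and essentially parallel to the paper's, but your treatment of the exterior mass is genuinely different. The paper proceeds in two stages: it first truncates $g_1$ to $|z|\le(\eta+1)\Delta$ and \emph{renormalizes} by dividing by $1-2b$ (with $b=\int_{(\eta+1)\Delta}^{\infty}g_1$), obtaining an intermediate $g_0$ with $\PA(g_0)=\PA(g_1)/(1-2b)\ge\PA(g_1)$ and $\mathsf{MSE}(g_0)=\mathsf{MSE}(g_1)$ (numerator and denominator scale together); only then does it push the interior mass of $g_0$ to $\pm(\eta-1)\Delta$. You instead handle the exterior mass by collapsing it onto the outer edge $\pm(\eta+1)\Delta$, exploiting that $k_{\eta}((\eta+1)\Delta)=\nu_{\eta}((\eta+1)\Delta)=0$, so the moved mass is inert for both $\PA$ and the MSE numerator. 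This yields $\PA(g_2)=\PA(g_1)$ exactly (rather than $\ge$ with possible strict inequality) and sidesteps the renormalization bookkeeping. Both routes are valid for the lemma as stated; yours is a touch cleaner because it avoids the ratio argument, while the paper's version incidentally shows one can even strictly increase $\PA$ whenever $g_1$ has exterior mass.
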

    
%     \begin{lemma}\label{lemma:bounded_noise_existence}
%     Let $g_1^*(.)$ be a solution of  \eqref{definition_beta_eta}, then there exist a solution $g_2^*(.)$, where
%     for the case of $|z| < (\eta-1)\Delta$, and 
%     $|z| > (\eta+1)\Delta$, we have $g_2^*(z) = 0$.
% \end{lemma}
\begin{proof}
    Proof is in Appendix \ref{proof:lemma:bounded_noise_existence}
\end{proof}

% \begin{lemma}\label{lemma:exact_acc_noise_existence}
%     If \eqref{definition_beta_eta} has a solution $g_1^*(.)$, then there exist a solution $g_2^*(.)$, where $\mathsf{PA} \left( g_2^*(.), \pare \right) = \alpha$.
% \end{lemma}
For any symmetric noise distribution $g(.)$, such that for the case of $|z| < (\eta-1)\Delta$ and $|z| > (\eta-1)\Delta$, we have $g(z) = 0$, we call $g(.)$ a satisfying noise of Lemma \ref{lemma:bounded_noise_existence}.

\begin{lemma}\label{lemma:exact_acc_noise_existence}
    Let $g_1(.)$ be a satisfying noise of Lemma \ref{lemma:bounded_noise_existence}. For each $0< \alpha < \PA(g_1(.), \eta)$, there exists a satisfying noise of Lemma \ref{lemma:bounded_noise_existence} , $g_2(.)$, such that  $\mathsf{PA} \left( g_2(.), \pare \right) = \alpha$, and $\mathsf{MSE}(\mathsf{mean}(.,.), g_2(.), \eta) = \mathsf{MSE}(\mathsf{mean}(.,.), g_1(.), \eta)$.
\end{lemma}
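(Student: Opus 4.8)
The idea is to dilute $g_1(.)$ with ``inert'' probability mass --- mass that can never coexist with acceptance --- so that it drives the (unconditional) acceptance probability down to exactly $\alpha$ while leaving every quantity conditioned on acceptance, in particular $\mathsf{MSE}(\mathsf{mean}(.,.),\cdot,\eta)$, unchanged. Recall from Theorem~\ref{theorem:Main_Bound_For_Max_Problem} the weights $k_{\eta}(z)=\int_{z-\eta\Delta}^{\Delta}f_{\bn_h}(x)\,dx$ and $\nu_{\eta}(z)=\int_{z-\eta\Delta}^{\Delta}(x+z)^2 f_{\bn_h}(x)\,dx$, defined for $z\in[(\eta-1)\Delta,(\eta+1)\Delta]$. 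At $z=(\eta+1)\Delta$ both integrals reduce to $\int_{\Delta}^{\Delta}(\cdot)$, so $k_{\eta}((\eta+1)\Delta)=\nu_{\eta}((\eta+1)\Delta)=0$, and since $F_{\bn_h}$ is strictly increasing this is the only zero of $k_{\eta}$ on $[(\eta-1)\Delta,(\eta+1)\Delta]$. Equivalently: if $\bn_a=\pm(\eta+1)\Delta$ deterministically, then $|\by_1-\by_2|=|\bn_h-\bn_a|\ge\eta\Delta$ with equality only when $\bn_h=\pm\Delta$, which has probability $0$ under the atomless density $f_{\bn_h}$; hence such an atom contributes nothing to acceptance.

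\textbf{Construction.} Let $\lambda\triangleq\alpha/\PA(g_1(.),\eta)$; since $0<\alpha<\PA(g_1(.),\eta)$ we have $\lambda\in(0,1)$. Define
\begin{align}\label{g2-construction}
   g_2(z)\triangleq\lambda\,g_1(z)+\tfrac{1-\lambda}{2}\,\delta\!\bigl(z-(\eta+1)\Delta\bigr)+\tfrac{1-\lambda}{2}\,\delta\!\bigl(z+(\eta+1)\Delta\bigr).
\end{align}
Then $g_2(.)$ is a symmetric PDF, and because $g_1(.)$ is a satisfying noise of Lemma~\ref{lemma:bounded_noise_existence} (supported on $\{z:(\eta-1)\Delta\le|z|\le(\eta+1)\Delta\}$) and the added atoms lie at $\pm(\eta+1)\Delta$, the support of $g_2(.)$ is contained in the same set; thus $g_2(.)$ is again a satisfying noise of Lemma~\ref{lemma:bounded_noise_existence}.

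\textbf{Verification.} Being satisfying noises, both $g_1(.)$ and $g_2(.)$ obey Lemma~\ref{general_format_symmetric} with the $[0,(\eta-1)\Delta]$ integrals in \eqref{general_symmetric_acc}--\eqref{general_symmetric_mse} vanishing, so for $i\in\{1,2\}$ we have $\PA(g_i(.),\eta)=2\int_{(\eta-1)\Delta}^{(\eta+1)\Delta}k_{\eta}(z)g_i(z)\,dz$ and $\mathsf{MSE}(\mathsf{mean}(.,.),g_i(.),\eta)=\frac{1}{2\PA(g_i(.),\eta)}\int_{(\eta-1)\Delta}^{(\eta+1)\Delta}\nu_{\eta}(z)g_i(z)\,dz$. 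From \eqref{g2-construction} the atom at $-(\eta+1)\Delta$ falls outside the integration interval and the atom at $+(\eta+1)\Delta$ is weighted by $k_{\eta}((\eta+1)\Delta)=\nu_{\eta}((\eta+1)\Delta)=0$, so $\int_{(\eta-1)\Delta}^{(\eta+1)\Delta}k_{\eta}(z)g_2(z)\,dz=\lambda\int_{(\eta-1)\Delta}^{(\eta+1)\Delta}k_{\eta}(z)g_1(z)\,dz$ and likewise $\int_{(\eta-1)\Delta}^{(\eta+1)\Delta}\nu_{\eta}(z)g_2(z)\,dz=\lambda\int_{(\eta-1)\Delta}^{(\eta+1)\Delta}\nu_{\eta}(z)g_1(z)\,dz$. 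Therefore $\PA(g_2(.),\eta)=\lambda\,\PA(g_1(.),\eta)=\alpha$ and
\begin{align}
   \mathsf{MSE}(\mathsf{mean}(.,.),g_2(.),\eta)=\frac{\lambda\int_{(\eta-1)\Delta}^{(\eta+1)\Delta}\nu_{\eta}(z)g_1(z)\,dz}{2\lambda\,\PA(g_1(.),\eta)}=\mathsf{MSE}(\mathsf{mean}(.,.),g_1(.),\eta),
\end{align}
which is exactly the claim.

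\textbf{Main obstacle.} The construction itself is short; the only point requiring care is invoking Lemma~\ref{general_format_symmetric} for a distribution carrying atoms, and atoms sitting precisely at the boundary $z=\pm(\eta+1)\Delta$ of the relevant integration interval. This causes no trouble: conditioned on $\bn_a=\pm(\eta+1)\Delta$, acceptance has probability $0$, so these atoms neither add to the accepted event nor shift the conditional mean, which is exactly what the continuity of $k_{\eta},\nu_{\eta}$ together with $k_{\eta}((\eta+1)\Delta)=\nu_{\eta}((\eta+1)\Delta)=0$ records in \eqref{general_symmetric_acc}--\eqref{general_symmetric_mse}. If one would rather avoid invoking the lemma for atomic $g_2(.)$, the two identities $\PA(g_2(.),\eta)=\alpha$ and $\mathsf{MSE}(\mathsf{mean}(.,.),g_2(.),\eta)=\mathsf{MSE}(\mathsf{mean}(.,.),g_1(.),\eta)$ follow directly by conditioning on $\bn_a$ and splitting the mixture \eqref{g2-construction}; all remaining steps are one-line substitutions.
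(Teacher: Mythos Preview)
Your proof is correct and follows essentially the same approach as the paper: scale $g_1$ by $\lambda=\alpha/\PA(g_1(.),\eta)$ and deposit the remaining $1-\lambda$ mass at points where acceptance is impossible, so that $\PA$ drops to $\alpha$ while the conditional $\mathsf{MSE}$ is unchanged. The only difference is the location of the inert mass: the paper parks it at $\pm(\eta+2)\Delta$, whereas you park it at $\pm(\eta+1)\Delta$ and use $k_\eta((\eta+1)\Delta)=\nu_\eta((\eta+1)\Delta)=0$. Your choice has the mild advantage that the resulting $g_2$ is literally a satisfying noise of Lemma~\ref{lemma:bounded_noise_existence} (support contained in $\{(\eta-1)\Delta\le|z|\le(\eta+1)\Delta\}$), which the paper's construction with atoms at $\pm(\eta+2)\Delta$ technically does not meet, though this makes no difference for the downstream use in Claim~\ref{lemma:upper_bound_of_mean}.
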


\begin{proof}
    Proof is in Appendix \ref{proof:lemma:exact_acc_noise_existence}.
\end{proof}

    Using the above lemmas, we prove Claim \ref{lemma:upper_bound_of_mean}, as follows. 
    \begin{proof}
       Without loss of generality, henceforth, we assume that all of the adversarial noise meets the condition of Lemmas \ref{lemma:bounded_noise_existence} and \ref{lemma:exact_acc_noise_existence}. More precisely, $\mathsf{PA} \left( g(.), \pare \right) = \alpha$, and also  $g(z) = 0$ for $|z| < (\eta-1)\Delta$ and
    $|z| > (\eta+1)\Delta$. Then from Lemma  \ref{general_format_symmetric}, we have
\begin{align}\label{acc_neww_general}
    \mathsf{PA} \left( \gdot, \pare \right) &= 2\int_{(\eta-1)\Delta}^{(\eta+1)\Delta}  
  \int_{z-\eta\Delta}^{\Delta} f_{\bn_h}(x) g(z)\,dx  \,dz, 
    \end{align}
    and
  \begin{align}
    &\mathsf{MSE}\big(\mathsf{mean}(.,.), \gdot, \pare \big) \nonumber \\
    &=\frac{1}{2\alpha}\int_{(\eta-1)\Delta}^{(\eta+1)\Delta} \int_{z-\eta\Delta}^{\Delta} (x+z)^2f_{\bn_h}(x)g(z)\,dx  \,dz. \label{mse_neww_general}
\end{align}
Let $k_{\eta}(z) \triangleq \int_{z-\eta\Delta}^{\Delta} f_{\bn_h}(x)\,dx$, for $(\eta-1)\Delta \leq z \leq  (\eta+1)\Delta$.  Based on \eqref{acc_neww_general} we have
\begin{align}\label{prob_acc_to_noise_general}
    \mathsf{PA} \left( \gdot, \pare \right) &= 2\int_{(\eta-1)\Delta}^{(\eta+1)\Delta} \left( 
  \int_{z-\eta\Delta}^{\Delta} f_{\bn_h}(x) \,dx\right) g(z) \,dz  \nonumber \\ &=2\int_{(\eta-1)\Delta}^{(\eta+1)\Delta}  k_{\eta}(z) g(z) \,dz.
\end{align}
    Since we assume that $g(z)$ is symmetric, we have
\begin{align}\label{pdf_positive_general}
    \int_{(\eta-1)\Delta}^{(\eta+1)\Delta} g(z) \,dz = \frac{1}{2}.
\end{align}
 By changing the parameters of in the integrals for $q = k_{\eta}(z)$, we can rewrite \eqref{pdf_positive_general} as
\begin{align}\label{rewrite:pdf_positive_general}
\int_{0}^{1} w_{\eta}(q) \,dq = \frac{1}{2},
\end{align}
where in the above equation, $w_{\eta}(q) \triangleq \frac{-g(k_{\eta}^{-1}(q))}{k_{\eta}^{\prime}( k_{\eta}^{-1} (q))}$, for $0 \leq q \leq 1$. Note that since $k_{\eta}(z)$ is a strictly decreasing function, then $w_{\eta}(q)$ is a non-negative function.  Similarly, we can rewrite \eqref{prob_acc_to_noise_general} as
\begin{align}\label{rewrite:prob_acc_to_noise_general}
  \int_{0}^{1}  q w_{\eta}(q) \,dq = \frac{\alpha}{2}
\end{align}
Let $\nu_{\eta}(z) \triangleq \int_{z-\eta\Delta}^{\Delta} (x+z)^2f_{\bn_h}(x)\,dx$. Based on \eqref{mse_neww_general}, we have
\begin{align}\label{mse_general_for_bounded_noise}
    &\mathsf{MSE}\big(\mathsf{mean}(.,.), \gdot, \pare \big) =\nonumber \\
    &=  \frac{1}{2\alpha}\int_{(\eta-1)\Delta}^{(\eta+1)\Delta} \left(\int_{z-\eta\Delta}^{\Delta} (x+z)^2f_{\bn_h}(x)\,dx \right)g(z) \,dz \nonumber \\
     & =  \frac{1}{2\alpha}\int_{(\eta-1)\Delta}^{(\eta+1)\Delta} \nu_{\eta}(z)g(z) \,dz \nonumber \\
     &\overset{(a)}{=} \frac{1}{2\alpha}\int_{0}^{1}  h_{\eta}(q) w_{\eta}(q) \,dq,
\end{align}
where in (a) $q = k_{\eta}(z)$ and  $ h_{\eta}(q) \triangleq \nu_{\eta}(k_{\eta}^{-1} (q))$. Let $h^*_{\eta}(q)$ be the upper concave envelop of $h_{\eta}(q)$, for $0 \leq q \leq 1$. Then
\begin{align}
    \frac{1}{2\alpha}\int_{0}^{1}  h_{\eta}(q) w_{\eta}(q) \,dq &\leq \frac{1}{2\alpha}\int_{0}^{1}  h^*_{\eta}(q) w_{\eta}(q) \,dq \nonumber \\
    & =  \frac{1}{4\alpha}\int_{0}^{1}  h^*_{\eta}(q) 2w_{\eta}(q) \,dq \nonumber \\
    &\overset{(a)}{\leq} \frac{1}{4\alpha} h^*_{\eta} \left( \int_{0}^{1}  2q w_{\eta}(q) \,dq \right) \nonumber \\
    & \overset{(b)}{=} \frac{h^*_{\eta}(\alpha)}{4\alpha},
\end{align}
where (a) follows from  Jensen's inequality and the fact that based on \eqref{rewrite:pdf_positive_general}, $\int_{0}^{1} 2w_{\eta}(q) \,dq = 1$, and (b) follows from \eqref{rewrite:prob_acc_to_noise_general}. This completes the proof of Claim \ref{lemma:upper_bound_of_mean}.

    \end{proof}
    \begin{claim}\label{lemma:lower_bound_of_mean}
        For any $\pare \in \Lambda_{\mathsf{DC}} $ and $0 < \alpha \leq 1$, we have
    \begin{align}
        \beta_{\eta}(\alpha) \geq \frac{h^*_{\eta}(\alpha)}{4\alpha}.
    \end{align}
    \end{claim}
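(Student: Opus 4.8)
The plan is to prove this achievability bound constructively: for every $0 < \alpha \leq 1$ I would exhibit an explicit symmetric noise distribution $g^{\star}(.)$ that is feasible for Opt.~2 (i.e.\ $\mathsf{PA}(g^{\star}(.),\eta)\geq\alpha$) and for which $\mathsf{MSE}(\mathsf{mean}(.,.),g^{\star}(.),\eta)=\frac{h^{*}_{\eta}(\alpha)}{4\alpha}$. Since $\beta_{\eta}(\alpha)$ is, by definition in \eqref{definition_beta_eta}, the maximum of $\mathsf{MSE}(\mathsf{mean}(.,.),g(.),\eta)$ over all $g(.)$ with $\mathsf{PA}(g(.),\eta)\geq\alpha$, producing one such $g^{\star}(.)$ immediately gives $\beta_{\eta}(\alpha)\geq \frac{h^{*}_{\eta}(\alpha)}{4\alpha}$, which is the claim. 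The distribution $g^{\star}(.)$ is exactly the one output by Algorithm~\ref{Alg:finding_noise}: a symmetric mixture of (at most) two pairs of atoms located in $[(\eta-1)\Delta,(\eta+1)\Delta]\cup[-(\eta+1)\Delta,-(\eta-1)\Delta]$, with locations and weights read off the concave envelope $h^{*}_{\eta}$. By Lemma~\ref{lemma:making_symmetry} it suffices to work with symmetric $g(.)$, and the atomic $g^{\star}(.)$ is understood (consistently with the limiting-cases remark and with Algorithm~\ref{Alg:finding_noise}) as a weak limit of narrow symmetric densities; the formulas \eqref{acc_neww_general}--\eqref{mse_neww_general} of Lemma~\ref{general_format_symmetric} pass to this limit by continuity of $z\mapsto k_{\eta}(z)$ and $z\mapsto\nu_{\eta}(z)$, so I can evaluate them directly on the atoms.

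First I would treat the case $h^{*}_{\eta}(\alpha)=h_{\eta}(\alpha)$. Because $F_{\bn_h}$ is strictly increasing, $k_{\eta}(z)=\int_{z-\eta\Delta}^{\Delta} f_{\bn_h}(x)\,dx$ is a strictly decreasing bijection of $[(\eta-1)\Delta,(\eta+1)\Delta]$ onto $[0,1]$, so $z_{1}\triangleq k_{\eta}^{-1}(\alpha)$ is well defined and lies in the admissible interval. Take $g^{\star}(z)=\frac{1}{2}\delta(z+z_{1})+\frac{1}{2}\delta(z-z_{1})$. Substituting into \eqref{acc_neww_general} gives $\mathsf{PA}(g^{\star}(.),\eta)=2\cdot\frac{1}{2}k_{\eta}(z_{1})=\alpha$, and substituting into \eqref{mse_neww_general} gives $\mathsf{MSE}(\mathsf{mean}(.,.),g^{\star}(.),\eta)=\frac{1}{2\alpha}\cdot\frac{1}{2}\nu_{\eta}(z_{1})=\frac{\nu_{\eta}(k_{\eta}^{-1}(\alpha))}{4\alpha}=\frac{h_{\eta}(\alpha)}{4\alpha}=\frac{h^{*}_{\eta}(\alpha)}{4\alpha}$, as required. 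Note that since $h_{\eta}$ is continuous, its upper concave envelope agrees with it at the endpoints of $[0,1]$, so $h^{*}_{\eta}(1)=h_{\eta}(1)$; hence $\alpha=1$ always falls under this case.

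Next I would treat the case $h^{*}_{\eta}(\alpha)>h_{\eta}(\alpha)$, so that $\alpha$ lies strictly inside a linear piece of the concave envelope. Pick $q_{1}<\alpha<q_{2}$ as in Algorithm~\ref{Alg:finding_noise}, with $h^{*}_{\eta}(q_{i})=h_{\eta}(q_{i})$ and $h^{*}_{\eta}$ affine on $[q_{1},q_{2}]$, and set $z_{i}\triangleq k_{\eta}^{-1}(q_{i})$, $\beta_{1}\triangleq\frac{q_{2}-\alpha}{2(q_{2}-q_{1})}$, $\beta_{2}\triangleq\frac{\alpha-q_{1}}{2(q_{2}-q_{1})}$, and $g^{\star}(z)=\beta_{1}\delta(z+z_{1})+\beta_{2}\delta(z+z_{2})+\beta_{1}\delta(z-z_{1})+\beta_{2}\delta(z-z_{2})$. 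I would then check the bookkeeping: $\beta_{1},\beta_{2}\geq 0$ since $q_{1}\leq\alpha\leq q_{2}$, the total mass is $2(\beta_{1}+\beta_{2})=1$, the distribution is symmetric, and it is supported in the prescribed region because $q_{1},q_{2}\in[0,1]$. Plugging into \eqref{acc_neww_general}: $\mathsf{PA}(g^{\star}(.),\eta)=2(\beta_{1}k_{\eta}(z_{1})+\beta_{2}k_{\eta}(z_{2}))=2(\beta_{1}q_{1}+\beta_{2}q_{2})=\alpha$, the last step a one-line computation from the definitions of $\beta_{1},\beta_{2}$. Plugging into \eqref{mse_neww_general}: $\mathsf{MSE}(\mathsf{mean}(.,.),g^{\star}(.),\eta)=\frac{1}{2\alpha}(\beta_{1}\nu_{\eta}(z_{1})+\beta_{2}\nu_{\eta}(z_{2}))=\frac{1}{2\alpha}(\beta_{1}h_{\eta}(q_{1})+\beta_{2}h_{\eta}(q_{2}))$; writing $\alpha=\frac{q_{2}-\alpha}{q_{2}-q_{1}}q_{1}+\frac{\alpha-q_{1}}{q_{2}-q_{1}}q_{2}$ as a convex combination, using $h_{\eta}(q_{i})=h^{*}_{\eta}(q_{i})$ and affineness of $h^{*}_{\eta}$ on $[q_{1},q_{2}]$ yields $\beta_{1}h_{\eta}(q_{1})+\beta_{2}h_{\eta}(q_{2})=\frac{1}{2}h^{*}_{\eta}(\alpha)$, hence $\mathsf{MSE}=\frac{h^{*}_{\eta}(\alpha)}{4\alpha}$.

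I do not expect a genuine obstacle here: achievability is precisely where the concave-envelope structure "pays off," and the argument is essentially a verification that Algorithm~\ref{Alg:finding_noise} does what it claims. The only point requiring mild care is the justification that the PDF-level formulas of Lemma~\ref{general_format_symmetric} remain valid for the atomic $g^{\star}(.)$; this is handled either by the approximation argument indicated above (replace each $\delta$ by a narrow symmetric density, evaluate, and pass to the limit, adjusting the atom slightly if one insists on $\mathsf{PA}\geq\alpha$ rather than $=\alpha$ along the approximants) or by re-deriving \eqref{acc_neww_general}--\eqref{mse_neww_general} directly for discretely distributed adversarial noise, which is immediate from the conditional-expectation definitions of $\mathsf{PA}$ and $\mathsf{MSE}$. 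Combining Claims~\ref{lemma:upper_bound_of_mean} and~\ref{lemma:lower_bound_of_mean} then gives $\beta_{\eta}(\alpha)=\frac{h^{*}_{\eta}(\alpha)}{4\alpha}$, completing the proof of Lemma~\ref{lemma:best_noise}.
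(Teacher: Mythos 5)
Your proposal is correct and follows essentially the same route as the paper: both construct the two- or four-atom symmetric distribution dictated by the concave envelope (the output of Algorithm~\ref{Alg:finding_noise}) and verify $\mathsf{PA}=\alpha$ and $\mathsf{MSE}=\frac{h^*_{\eta}(\alpha)}{4\alpha}$ via \eqref{acc_neww_general}--\eqref{mse_neww_general} and the affineness of $h^*_{\eta}$ on $[q_1,q_2]$. Your added remark on justifying the PDF-level formulas for atomic $g^{\star}(.)$ is a reasonable piece of care that the paper leaves implicit.
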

\begin{proof}
To prove this claim, we introduce a noise distribution satisfying 
$\mathsf{PA} \left( \gdot, \pare \right) \geq \alpha$ and  $\mathsf{MSE}\big(\mathsf{mean}(.,.), \gdot, \pare \big) = \frac{h^*_{\eta}(\alpha)}{4\alpha}$. Note that for $(\eta-1)\Delta \leq z \leq  (\eta+1)\Delta$, we have
\begin{align}
    k_{\eta}(z) = \int_{z-\eta\Delta}^{\Delta} f_{\bn_h}(x)\,dx \nonumber
    &=F_{\bn_h}(\Delta) - F_{\bn_h}(z-\eta\Delta) \nonumber \\
    &\overset{(a)}{=}1 - F_{\bn_h}(z-\eta\Delta),
\end{align}
where (a) follows from the fact 
that $\Pr (|\bn_h| > \Delta) = 0$, which implies that $F_{\bn_h}(\Delta) = 1$.
Recall that we assume that $F_{\bn_h}(.)$ is a strictly increasing function in $[-\Delta, \Delta]$, i.e., for all $-\Delta \leq a < b \leq \Delta$, we have 
$F_{\bn_h}(a) < F_{\bn_h}(b)$. This implies that $k_{\eta}(z)$ is an invertible function for $(\eta-1)\Delta \leq z \leq  (\eta+1)\Delta$. Therefore, we have
\begin{align}\label{inverse_of_CDF}
    k^{-1}_{\eta}(q) = \eta \Delta + F^{-1}_{\bn_h}(1-q),
\end{align}
for any $0 \leq q \leq 1$. Consider following cases:
\begin{enumerate}
    \item $h^*_{\eta}(\alpha) = h_{\eta}(\alpha)$: In this case, let $z_1 = k^{-1}_{\eta}(\alpha)$ and $g(z) = \frac{1}{2}\delta(z+z_1) + \frac{1}{2}\delta(z-z_1)$. Note that we can calculate $z_1$ based on \eqref{inverse_of_CDF}.
    Based on \eqref{prob_acc_to_noise_general}, we have
    \begin{align}
       \mathsf{PA} \left( g(.), \pare \right) &=  2\int_{(\eta-1)\Delta}^{(\eta+1)\Delta}  k_{\eta}(z) g(z) \,dz \nonumber \\
       & = k_{\eta}(z_1) \nonumber \\
       & = k_{\eta}(k^{-1}_{\eta}(\alpha)) \nonumber \\
       & = \alpha.
    \end{align}
    Additionally, based on \eqref{mse_general_for_bounded_noise}
    \begin{align}
        \mathsf{MSE}\big(\mathsf{mean}(.,.), \gdot, \pare \big) & =  \frac{1}{2\alpha}\int_{(\eta-1)\Delta}^{(\eta+1)\Delta} \nu_{\eta}(z)g(z) \,dz  \nonumber \\
        & = \frac{1}{4\alpha} \nu_{\eta}(z_1) \nonumber \\
        & \overset{(a)}{=} \frac{1}{4\alpha} \nu_{\eta}(k^{-1}_{\eta}(\alpha)) \nonumber \\
        & \overset{(b)}{=} \frac{h_{\eta}(\alpha)}{4\alpha} \nonumber \\
        & \overset{(c)}{=} \frac{h^*_{\eta}(\alpha)}{4\alpha},
    \end{align}
    where (a) follows from the definition of $z_1$ and (b) follows from the definition of $h_{\eta}(.)$, and (c) follows from the fact that in this case we assume that we have $h^*_{\eta}(\alpha) = h_{\eta}(\alpha)$.
    \item $h^*_{\eta}(\alpha) \neq h_{\eta}(\alpha)$: In this case, since $h^*_{\eta}(.)$ is the upper-concave envelop of $h_{\eta}(.)$, there exist $q_1$ and $q_2$, such that $0 \leq q_1 < \alpha < q_2 \leq 1$ and 
    \begin{align}\label{concave_envelop_boundries}
        h^*_{\eta}(q_1) = h_{\eta}(q_1), \nonumber \\
        h^*_{\eta}(q_2) = h_{\eta}(q_2).
    \end{align}
    In addition,  for all $q_1 \leq q \leq q_2$, we have
    \begin{align}\label{linear_concave}
        h^*_{\eta}(q) = \frac{h_{\eta}(q_2) - h_{\eta}(q_1)}{q_2 - q_1} (q - q_1) + h_{\eta}(q_1).
    \end{align}
    Let $z_1 = k^{-1}_{\eta}(q_1)$, $z_2 = k^{-1}_{\eta}(q_2)$, $\beta_1 = \frac{q_2 -\alpha }{2(q_2 - q_1)}$, $\beta_2 = \frac{\alpha - q_1}{2(q_2 - q_1)}$, and 
    \begin{align}
        g(z) = \beta_1 \delta(z+z_1) +\beta_2 \delta(z+z_2) \nonumber \\
        +\beta_1 \delta(z-z_1) +\beta_2 \delta(z-z_2).
    \end{align}
    Note that we can calculate $z_1$ and $z_2$ based on \eqref{inverse_of_CDF}. One can verify that
    \begin{align}
        2\beta_1 + 2\beta_2 &= 1,  \\
        2\beta_1 q_1 + 2\beta_2 q_2 &= \alpha \label{q_plus_beta}.
    \end{align}
    Based on \eqref{prob_acc_to_noise_general}, we have
    \begin{align}
       \mathsf{PA} \left( g(.), \pare \right) &=  2\int_{(\eta-1)\Delta}^{(\eta+1)\Delta}  k_{\eta}(z) g(z) \,dz \nonumber \\
       & = 2\beta_1k_{\eta}(z_1) + 2\beta_2 k_{\eta}(z_2) \nonumber \\
       & = 2\beta_1 q_1 + 2\beta_2 q_2 \nonumber \\
       & \overset{(a)}{=} \alpha,
    \end{align}
    where (a) follows from \ref{q_plus_beta}.
    Additionally, based on \eqref{mse_general_for_bounded_noise}
    \begin{align}
        \mathsf{MSE}&\big(\mathsf{mean}(.,.), \gdot, \pare \big) \nonumber \\
        & =  \frac{1}{2\alpha}\int_{(\eta-1)\Delta}^{(\eta+1)\Delta} \nu_{\eta}(z)g(z) \,dz  \nonumber \\
        & = \frac{1}{2\alpha} \left( \beta_1\nu_{\eta}(z_1) + \beta_2\nu_{\eta}(z_2)\right)\nonumber \\
        &\overset{(a)}{=} \frac{1}{2\alpha} \left( \beta_1\nu_{\eta}(k^{-1}_{\eta}(q_1)) + \beta_2\nu_{\eta}(k^{-1}_{\eta}(q_2))\right)\nonumber\\
        & = \frac{1}{4\alpha} \left( 2\beta_1 h_{\eta}(q_1) + 2\beta_2 h_{\eta}(q_2)\right) \nonumber \\
        & \overset{(b)}{=} \frac{1}{4\alpha} \left( 2\beta_1 h^*_{\eta}(q_1) + 2\beta_2 h^*_{\eta}(q_2)\right) \nonumber \\
        &\overset{(c)}{=} \frac{1}{4\alpha} h^*_{\eta}\left( 2\beta_1 q_1 + 2\beta_2 q_2\right) \nonumber \\
        & \overset{(d)}{=} \frac{h^*_{\eta}(\alpha)}{4\alpha},
    \end{align}
    where (a) follows from the definition of $h_{\eta}(.)$ , and (b) follows from \eqref{concave_envelop_boundries}, (c) follows from the fact that based on \ref{linear_concave}, $h^*(.)$ is a linear function in $[q_1, q_2]$ and $2\beta_1 + 2\beta_2 = 1$, and (d) follows from \eqref{q_plus_beta}. This completes the proof of Claim \ref{lemma:lower_bound_of_mean}.
\end{enumerate}
\end{proof}

\section{Conclusion}\label{sec:conclusion}

In this paper, we introduced a novel game-theoretic framework for coding theory, called \emph{game of coding}, which extends the conventional coding theory beyond its fundamental limitations, makes it applicable for emerging decentralized applications. Our framework acknowledges the unique dynamics of decentralized platforms, where adversaries have an incentive to ensure data recoverability rather than merely causing disruption.

Key contributions of this work include:

\begin{itemize}
    \item \textbf{Game-Theoretic Approach}: We formulated the interaction between the data collector and adversaries as a game, where both parties optimize their utility functions. This perspective allows us to navigate the limitations of traditional trust assumptions by leveraging the rational behavior of adversaries who benefit from the system's liveness.

    \item \textbf{Equilibrium Characterization}: Focusing on repetition codes with a factor of two, we characterized the equilibrium strategies for both the data collector and the adversary. This equilibrium ensures that the system remains live and operational, even in scenarios with no honest majority.

    \item \textbf{Utility Functions and Practical Relevance}: Our approach accommodates a broad class of utility functions with minimal assumptions, making it applicable to various practical applications, such as decentralized machine learning (DeML) and blockchain oracles. By ensuring the system's liveness and robustness against adversarial behaviors, our framework enhances the reliability and applicability of decentralized platforms.

    \item \textbf{Implications for Decentralized Systems}: The game of coding framework demonstrates that, under rational adversary behavior, it is possible to improve the liveness and functionality of decentralized systems without relying on stringent trust assumptions. This opens new avenues for secure and efficient coding in decentralized applications.
\end{itemize}

In conclusion, our work provides a foundational step towards integrating game-theoretic principles into coding theory, offering a promising solution to the transcend the fundamental limitations in conventional coding theory.

Future work can expand to evaluate the effect of having more nodes and power for the adversary and also explore more advanced coding techniques.

% we can sandwich 
% the main optimization problem in the definition of $\mathcal{C}_{\eta}$, with $\underset{\gdot \in \Lambda_{\mathsf{AD}}}{\max} ~ \underset{\mathsf{PA} \left( \gdot, \pare \right) \geq \alpha}{\mathsf{MSE}\left(\mathsf{mean}(.,.), \gdot, \pare \right)}$.
% However, the challenge is that for characterizing $\mathcal{C}_{\eta}$, for each $\gdot$ we have to use $\est^*_{\pare,g}$ to calculate $\mathsf{MMSE}\left(\gdot, \pare \right)$, which might not necessarily the same as the $\mathsf{mean}$ estimation function. To resolve this challenge, we prove the following lemma.

\bibliographystyle{ieeetr}
\bibliography{Abbr,Ref}
\section{Biographies}

    \begin{IEEEbiographynophoto}{Hanzaleh Akbari Nodehi} is a Ph.D. candidate in the Department of Electrical and Computer Engineering at the University of Minnesota Twin Cities. He received his B.Sc. and M.Sc. degrees from the Department of Electrical Engineering, Sharif University of Technology, Tehran, Iran, in 2016 and 2018, respectively. His research interests include information theory and coding, blockchain technology, game theory, and decentralized systems.
    \end{IEEEbiographynophoto}
    
 \begin{IEEEbiographynophoto}{Viveck R. Cadambe} (S'07-- M'11 -- SM '24) is an Associate Professor in the School of Electrical and Computer Engineering at Georgia Tech. He received his Ph.D. from the University of California Irvine in 2011, and was a postdoctoral researcher jointly at the Massachusetts Institute of Technology (MIT)and Boston University between 2011 and 2014. He was an Associate Professor in the Department of Electrical Engineering at Pennsylvania State University until 2024. His core expertise is in information theory, coding theory, communication theory, and distributed algorithms. In his research, he studies applications to distributed machine learning, data privacy, cloud computing, and wireless communications. 

Dr. Cadambe has received the 2009 Information Theory Society Best Paper Award, the 2014 IEEE Network Computing and Applications Best Paper Award, an NSF CRII Award in 2015, an NSF Career Award in 2016, a Google Faculty Award in 2019 and he was a finalist for the 2016 Bell Labs Prize. He has served as an Associate Editor for IEEE Transactions on Wireless Communications, an issue of the IEEE Journal on Special Areas in Information Theory, and the IEEE Transactions on Communications.
\end{IEEEbiographynophoto}

\begin{IEEEbiographynophoto}{Mohammad Ali Maddah-Ali} (IEEE Fellow, 2023) is an Associate Professor at the University of Minnesota Twin Cities. He received his B.Sc. degree in Electrical Engineering from Isfahan University of Technology, his M.A.Sc. degree from the University of Tehran, and his Ph.D. in Electrical and Computer Engineering from the University of Waterloo, Canada, in 2007.

From 2007 to 2008, he was with the Wireless Technology Laboratories at Nortel Networks, Ottawa, ON, Canada. He then held a Postdoctoral Fellowship at the Department of Electrical Engineering and Computer Sciences, University of California, Berkeley, from 2008 to 2010. From September 2010 to September 2020, he served as a Communication Research Scientist at Nokia Bell Labs, NJ, USA.

Dr. Maddah-Ali is the recipient of several honors, including the NSERC Postdoctoral Fellowship (2007), the Best Paper Award at the IEEE International Conference on Communications (ICC) in 2014, the IEEE Communications Society and IEEE Information Theory Society Joint Paper Award in 2015, and the IEEE Information Theory Society Paper Award in 2016. He served as an Associate Editor for the IEEE Transactions on Information Theory (2019–2022) and as Lead Editor for the IEEE Journal on Selected Areas in Information Theory. He is currently a distinguished lecturer of the IEEE Information Theory Society.
\end{IEEEbiographynophoto}

%\end{itemize}
\appendices
\section{Proof of Lemma \ref{lemma:making_symmetry}}\label{proof:lemma:making_symmetry}
For the formal proof, first consider the following lemma.
\begin{lemma}\label{lemma:flipping_noise}
    For any $\pare \in \Lambda_{\mathsf{DC}} $, $g(z) \in \Lambda_{\mathsf{AD}}$, we have
    \begin{align*}
        \PA (g_{\textrm{ref}}(.), \pare) &= \PA (g(.), \pare),  \\
        \mathsf{MSE}\big(\mathsf{mean}(.,.), g_{\textrm{ref}}(.), \pare\big) &= \mathsf{MSE}\big(\mathsf{mean}(.,.), \gdot, \pare\big),
    \end{align*}
    where $g_{\textrm{ref}}(z) \triangleq g(-z)$.
\end{lemma}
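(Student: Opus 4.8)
The plan is to deduce both identities from a single ingredient — the symmetry $f_{\bn_h}(-x)=f_{\bn_h}(x)$ of the honest node's noise — exploited through the change of variables $z\mapsto -z$ applied to the adversarial noise. First I would record that the two relevant observable quantities depend on the noises only through $\bn_h-\bn_a$ and $\bn_h+\bn_a$: whichever of the two nodes is honest, one has $\by_1-\by_2=\pm(\bn_h-\bn_a)$ and $\mathsf{mean}(\by_1,\by_2)-\bu=\tfrac12(\bn_h+\bn_a)$, so $\mathcal{A}_{\pare}=\{|\bn_h-\bn_a|\le\eta\Delta\}$ and $\big(\bu-\mathsf{mean}(\by_1,\by_2)\big)^2=\tfrac14(\bn_h+\bn_a)^2$. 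In particular, the random choice of the honest index plays no role, and $\bu$ disappears entirely (recall that $\bn_a$ is independent of $\bu$).

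Writing $\bn_a\sim g$, independent of $\bn_h$, I would then set $\psi_{\pare}(z)\triangleq\int f_{\bn_h}(x)\,\mathbbm{1}\{|x-z|\le\eta\Delta\}\,dx$ and $\phi_{\pare}(z)\triangleq\tfrac14\int (x+z)^2 f_{\bn_h}(x)\,\mathbbm{1}\{|x-z|\le\eta\Delta\}\,dx$, so that
\begin{align*}
\PA\big(g(.),\pare\big)=\int \psi_{\pare}(z)\,g(z)\,dz,
\qquad
\PA\big(g(.),\pare\big)\,\mathsf{MSE}\big(\mathsf{mean}(.,.),g(.),\pare\big)=\int \phi_{\pare}(z)\,g(z)\,dz.
\end{align*}
Since $g_{\textrm{ref}}(z)=g(-z)$, substituting $z\mapsto -z$ gives $\int\psi_{\pare}(z)g_{\textrm{ref}}(z)\,dz=\int\psi_{\pare}(-z)g(z)\,dz$, and likewise for $\phi_{\pare}$. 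It then remains only to check that $\psi_{\pare}$ and $\phi_{\pare}$ are even functions, which follows by substituting $x\mapsto -x$ in their defining integrals and using $f_{\bn_h}(-x)=f_{\bn_h}(x)$ together with $|{-}x+z|=|x-z|$ and $(-x-z)^2=(x+z)^2$. Hence $\PA(g_{\textrm{ref}}(.),\pare)=\PA(g(.),\pare)$, and dividing the two matched integrals yields $\mathsf{MSE}(\mathsf{mean}(.,.),g_{\textrm{ref}}(.),\pare)=\mathsf{MSE}(\mathsf{mean}(.,.),g(.),\pare)$.

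There is no substantive obstacle; the only care needed is keeping the two changes of variables straight and verifying that neither the honest/adversarial labeling nor the value of $\bu$ enters — precisely the points where the symmetry of $f_{\bn_h}$ and the independence of $\bn_a$ from $\bu$ are used. An equivalent, more probabilistic restatement, which I might present instead, is the following: since $\bn_h$ is symmetric and independent of $\bn_a$, the pair $(\bn_h+\bn_a,\ \bn_h-\bn_a)$ has the same law under $g_{\textrm{ref}}$ as its negation has under $g$; both the acceptance event $\{|\bn_h-\bn_a|\le\eta\Delta\}$ and the loss $\tfrac14(\bn_h+\bn_a)^2$ are invariant under negating that pair, so the acceptance probability and the conditional second moment are unchanged.
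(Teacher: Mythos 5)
Your proof is correct and rests on the same two facts as the paper's: the joint law of $(\bn_h,\bn_a)$ under $g_{\textrm{ref}}$ is the negation of that under $g$ (via the symmetry of $f_{\bn_h}$), and both the acceptance event $\{|\bn_h-\bn_a|\le\eta\Delta\}$ and the loss $\tfrac14(\bn_h+\bn_a)^2$ are invariant under negating the pair — indeed your closing ``probabilistic restatement'' is precisely the paper's argument. The only cosmetic difference is that you integrate out the honest noise first to obtain the even functions $\psi_{\pare},\phi_{\pare}$ and then change variables in $z$ alone, whereas the paper performs the change of variables $(n_a,n_h)\mapsto(-n_a,-n_h)$ directly on the joint density.
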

\begin{proof}
Let us consider two scenarios. In scenario $1$ the adversary employs $\gdot$ as the noise distribution, whereas 
in scenario 2 the adversary employs $g_{\textrm{ref}}(.)$ as the noise distribution. We note that that for the joint probability density functions  $f^{(1)}_{\mathbf{n}_a, \mathbf{n}_h}$ and $f^{(2)}_{\mathbf{n}_a, \mathbf{n}_h}$, for scenarios $1$ and $2$ respectively, we have
\begin{align}\label{scenario1_and_scenario2}
    f^{(1)}_{\mathbf{n}_a, \mathbf{n}_h} (n_a,n_h) = f^{(2)}_{\mathbf{n}_a, \mathbf{n}_h} (-n_a, -n_h).
\end{align}

Recall that the acceptance rule is a function of $|\mathbf{y}_1 - \mathbf{y}_2| = |\mathbf{n}_a - \mathbf{n}_h|$.
This implies that for any realization of the noises in scenario $1$, i.e., $(n_a, n_h)$  where the computation is accepted, the corresponding realization $(-n_a, -n_h)$ in scenario $2$ is also accepted, and vice versa. This is because the absolute value of the difference between noises, is the same for both of these realizations. Thus, we have
\begin{align}\label{relation_between_gref_g_acc}
    &\Pr \big(\mathcal{A}_{\pare}; g_{\textrm{ref}}(.)~|~ \bn_a=-n_a,\bn_h=-n_h\big) \nonumber \\
    &= \Pr \big(\mathcal{A}_{\pare}; \gdot~|~\bn_a=n_a,\bn_h=n_h\big).
\end{align}

This implies that
\begin{align}\label{flipped_noise_Acc}
  \PA&\bigl(g_{\mathrm{ref}}(\cdot),\eta\bigr)
  \nonumber \\
  &=
  \int_{-\infty}^{\infty}\!\!\int_{-\Delta}^{\Delta}
    \Pr\bigl(\mathcal{A}_{\eta};\,g_{\mathrm{ref}}(\cdot)\mid \bn_a=n_a,\bn_h=n_h\bigr)
  \nonumber\\
  &\quad\times
    f^{(2)}_{\bn_a,\bn_h}(n_a,n_h)\;dn_h\,dn_a
  \nonumber\\
  &\overset{(a)}{=}
  \int_{-\infty}^{\infty}\!\!\int_{-\Delta}^{\Delta}
    \Pr\bigl(\mathcal{A}_{\eta};\,g_{\mathrm{ref}}(\cdot)\mid \bn_a=-n_a',\bn_h=-n_h'\bigr)
  \nonumber\\
  &\quad\times
    f^{(2)}_{\bn_a,\bn_h}(-n_a',-n_h')\;dn_h'\,dn_a'
  \nonumber\\
  &\overset{(b)}{=}
  \int_{-\infty}^{\infty}\!\!\int_{-\Delta}^{\Delta}
    \Pr\bigl(\mathcal{A}_{\eta};\,g(\cdot)\mid \bn_a=n_a',\bn_h=n_h'\bigr)
  \nonumber\\
  &\quad\times
    f^{(1)}_{\bn_a,\bn_h}(n_a',n_h')\;dn_h'\,dn_a'
  \nonumber\\
  &=
  \PA\bigl(g(\cdot),\eta\bigr).
\end{align}
where (a) follows from changing the parameters of the integral by $n^{\prime}_h = -n_h$, $n^{\prime}_a = -n_a$, (b) follows from \eqref{scenario1_and_scenario2} and \eqref{relation_between_gref_g_acc}.

Additionally, one can verify that
\begin{align}\label{flipped_noise_Mse}
  &\mathsf{MSE}\bigl(\mathsf{mean}(\cdot,\cdot),\,g_{\mathrm{ref}}(\cdot),\,\eta\bigr)
  \nonumber\\
  &= 
  \int_{-\infty}^{\infty}\!\!\int_{-\Delta}^{\Delta}
    \mathbb{E}\!\Bigl[\bigl(\mathbf{u}-\tfrac{\mathbf{y}_1+\mathbf{y}_2}{2}\bigr)^2 
      \mid \mathcal{A}_{\eta},\,n_a,n_h;\,g_{\mathrm{ref}}(\cdot)\Bigr]
  \nonumber\\
  &\quad\times 
    f^{(2)}_{\mathbf{n}_a,\mathbf{n}_h\mid\mathcal{A}_{\eta}}
      \!(n_a,n_h \mid \mathcal{A}_{\eta};\,g_{\mathrm{ref}}(\cdot))
    \,dn_h\,dn_a
  \nonumber\\
  &= 
  \int_{-\infty}^{\infty}\!\!\int_{-\Delta}^{\Delta}
    \frac{(n_a + n_h)^2}{4}\, \nonumber \\
    &\quad \times
    \frac{\Pr\bigl(\mathcal{A}_{\eta};g_{\mathrm{ref}}(\cdot)\mid n_a,n_h\bigr)\,
          f^{(2)}_{\mathbf{n}_a,\mathbf{n}_h}(n_a,n_h)}
         {\Pr\bigl(\mathcal{A}_{\eta};g_{\mathrm{ref}}(\cdot)\bigr)} 
    \,dn_h\,dn_a
  \nonumber\\
  &\overset{(a)}{=}
  \int_{-\infty}^{\infty}\!\!\int_{-\Delta}^{\Delta}
    \frac{(-n'_a - n'_h)^2}{4}\, \nonumber \\
    &\quad\times \frac{\Pr\bigl(\mathcal{A}_{\eta};g_{\mathrm{ref}}(\cdot)\mid -n'_a,-n'_h\bigr)\,
          f^{(2)}_{\mathbf{n}_a,\mathbf{n}_h}(-n'_a,-n'_h)}
         {\Pr\bigl(\mathcal{A}_{\eta};g_{\mathrm{ref}}(\cdot)\bigr)}
    \,dn'_h\,dn'_a
  \nonumber\\
  &\overset{(b)}{=}
  \int_{-\infty}^{\infty}\!\!\int_{-\Delta}^{\Delta}
    \frac{(n'_a + n'_h)^2}{4}\, \nonumber \\
    &\quad \times
    \frac{\Pr\bigl(\mathcal{A}_{\eta};g(\cdot)\mid n'_a,n'_h\bigr)\,
          f^{(1)}_{\mathbf{n}_a,\mathbf{n}_h}(n'_a,n'_h)}
         {\Pr\bigl(\mathcal{A}_{\eta};g(\cdot)\bigr)}
    \,dn'_h\,dn'_a
  \nonumber\\
  &= 
  \int_{-\infty}^{\infty}\!\!\int_{-\Delta}^{\Delta}
    \mathbb{E}\!\Bigl[\bigl(\mathbf{u}-\tfrac{\mathbf{y}_1+\mathbf{y}_2}{2}\bigr)^2 
      \mid \mathcal{A}_{\eta},\,n'_a,n'_h;\,g(\cdot)\Bigr]
  \nonumber\\
  &\quad\times
    f^{(1)}_{\mathbf{n}_a,\mathbf{n}_h\mid\mathcal{A}_{\eta}}
      \!(n'_a,n'_h \mid \mathcal{A}_{\eta};\,g(\cdot))
    \,dn'_h\,dn'_a
  \nonumber\\
  &=
  \mathsf{MSE}\bigl(\mathsf{mean}(\cdot,\cdot),\,g(\cdot),\,\eta\bigr).
\end{align}
where (a) follows from changing the parameters of the integral by $n^{\prime}_h = -n_h$, $n^{\prime}_a = -n_a$, (b) follows from \eqref{scenario1_and_scenario2}, \eqref{relation_between_gref_g_acc}, and \eqref{flipped_noise_Acc}.

\end{proof}
Now we prove Lemma \ref{lemma:making_symmetry}.
  Consider a scenario where the adversary uses a $\mathsf{Bernoulli}(\frac{1}{2})$ random variable $\mathbf{c}$ to generate its noise distribution. Specifically, the adversary chooses $\gdot$ as its noise distribution when $\mathbf{c}=0$ and employs $g_{\textrm{ref}}(.)$ otherwise. The random variable $\mathbf{c}$ is independent of all other random variables in the system. We denote the noise distribution of the adversary in this scenario as $g_c(.)$. It can be easily verified that we have $g_c(z) = g_{\textrm{sym}}(z)$. Therefore, we have
\begin{align}\label{bernoli_noise_
acc}
    \Pr \left(\mathcal{A}_{\pare}; g_{\textrm{sym}}(.)\right) &=\Pr \left(\mathcal{A}_{\pare}; g_c(.)\right) \nonumber \\
    &= \Pr \left(\mathcal{A}_{\pare}; g_c(.) | \mathbf{c} = 0\right)\Pr(\mathbf{c} = 0) \nonumber \\
    &\quad+ 
    \Pr \left(\mathcal{A}_{\pare}; g_c(.) | \mathbf{c} = 1\right)\Pr(\mathbf{c} = 1)
    \nonumber\\
    &=\Pr \left(\mathcal{A}_{\pare}; \gdot\right)\frac{1}{2} + \Pr \left(\mathcal{A}_{\pare}; g_{\textrm{ref}}(.)\right)\frac{1}{2} \nonumber \\
    &=\frac{\Pr \left(\mathcal{A}_{\pare}; \gdot\right) + \Pr \left(\mathcal{A}_{\pare}; g_{\textrm{ref}}(.)\right)}{2} \nonumber \\
    &\overset{(a)}{=} \Pr \left(\mathcal{A}_{\pare}; \gdot\right),
\end{align}
where (a) follows from Lemma \ref{lemma:flipping_noise}. 

Additionally, we have
\begin{align}
    &\mathsf{MSE}\big(\mathsf{mean}(.,.), g_{\textrm{sym}}(.), \pare\big) \nonumber \\
    &= \mathsf{MSE}\big(\mathsf{mean}(.,.), g_c(.), \pare\big) \\ \nonumber 
    &=  \mathbb{E}\big[\big(\mathbf{u} - \frac{\by_1+\by_2}{2}\big)^2 | \mathcal{A}_{\pare}; g_c(.)\big] \nonumber\\
    &=  \mathbb{E}\big[\big(\mathbf{u} - \frac{\by_1+\by_2}{2}\big)^2 | \mathcal{A}_{\pare}, \mathbf{c}=0; g_c(.)\big]\nonumber \\
    &\quad \times\Pr \big( \mathbf{c}=0|\mathcal{A}_{\pare}; g_c(.) \big) \nonumber \\
    &\quad+ 
    \mathbb{E}\big[\big(\mathbf{u} - \frac{\by_1+\by_2}{2}\big)^2 | \mathcal{A}_{\pare}, \mathbf{c}=1; g_c(.)\big] \nonumber \\
    &\quad \times\Pr \big( \mathbf{c}=1|\mathcal{A}_{\pare}; g_c(.) \big) \nonumber \\
    &= \mathbb{E}\big[\big(\mathbf{u} - \frac{\by_1+\by_2}{2}\big)^2 | \mathcal{A}_{\pare}; \gdot\big]
    \nonumber \\
    &\quad \times\frac{\Pr \big( \mathcal{A}_{\pare}; g_c(.)|\mathbf{c} = 0 \big)\Pr(\mathbf{c}=0)}{\Pr \big( \mathcal{A}_{\pare}; g_c(.) \big)} \nonumber\\
    &+\mathbb{E}\big[\big(\mathbf{u} - \frac{\by_1+\by_2}{2}\big)^2 | \mathcal{A}_{\pare}; g_{\textrm{ref}}(.)\big]
    \nonumber \\
    &\quad \times\frac{\Pr \big( \mathcal{A}_{\pare}; g_c(.)|\mathbf{c} = 1 \big)\Pr(\mathbf{c}=1)}{\Pr \big( \mathcal{A}_{\pare}; g_c(.) \big)}\nonumber \\
    = &\mathsf{MSE}\big(\mathsf{mean}(.,.), \gdot, \pare\big)
    \frac{\frac{1}{2}\Pr \big( \mathcal{A}_{\pare}; \gdot \big)}{\Pr \big( \mathcal{A}_{\pare}; g_c(.) \big)} \nonumber \\
    &+ \mathsf{MSE}\big(\mathsf{mean}(.,.), g_{\textrm{ref}}(.), \pare\big)
    \frac{\frac{1}{2}\Pr \big( \mathcal{A}_{\pare}; g_{\textrm{ref}}(.) \big)}{\Pr \big( \mathcal{A}_{\pare}; g_c(.) \big)} \nonumber \\
    \overset{(a)}{=} & \frac{1}{2}\mathsf{MSE}\big(\mathsf{mean}(.,.), \gdot, \pare\big) + \frac{1}{2}\mathsf{MSE}\big(\mathsf{mean}(.,.), g_{\textrm{ref}}(.), \pare\big)
    \nonumber \\
     \overset{(b)}{=} & \mathsf{MSE}\big(\mathsf{mean}(.,.), \gdot, \pare\big),
\end{align}
where (a) follows from \eqref{bernoli_noise_
acc}, and (b) follows from Lemma \ref{lemma:flipping_noise}.
This completes the proof of Lemma \ref{lemma:making_symmetry}.

\section{Proof of Lemma 
\ref{MMSE_for_Internal_Part}}\label{proof:MMSE_for_Internal_Part}
Recall that based on the definition  we have
\begin{align}
 \est^*_{\pare,g} = \underset{\est: \mathbb{R}^2 \to \mathbb{R}}{\arg\min} ~\mathsf{MSE}\big(\est(.,.), \gdot, \pare\big).
 \end{align}
To prove this lemma, based on the orthogonality principle in minimum mean square estimation\cite{kay1993fundamentals}, it is necessary and sufficient to show that for any function $\nu: \mathbb{R}^2 \to \mathbb{R}$, we have
\begin{align}\label{best_mmse_v1}
    \mathbb{E}&\bigg[ (\bu - \frac{\by_1 + \by_2}{2}) \nu(\by_1, \by_2) \big|\nonumber \\
    &\quad\acce, |\by_1+\by_2| \leq 2M - (\eta+2)\Delta\bigg] = 0.
\end{align}
Let $\mu(y_1,y_2) \triangleq \nu(\frac{y_1+y_2}{2}, \frac{y_1-y_2}{2})$. Thus $\mu(y_1+y_2, y_1-y_2) = \nu(y_1, y_2)$. Therefore, we can reformulate \eqref{best_mmse_v1} to
\begin{align}\label{best_mmse_v2}
    \mathbb{E}&\bigg[ (\bu - \frac{\by_1 + \by_2}{2}) \mu(\by_1+\by_2, \by_1-\by_2) \big| \nonumber \\
    &\quad\acce, |\by_1+\by_2| \leq 2M - (\eta+2)\Delta\bigg] = 0.
\end{align}
Note that $\bu - \frac{\by_1 + \by_2}{2} = \frac{\bn_1+\bn_2}{2}$.
On the other hand, the event of $\acce$ is equivalent to the event of $|\by_1 - \by_2| = |\bn_1 - \bn_2| \leq \eta \Delta$. Moreover, the event of $|\by_1+\by_2| \leq 2M - (\eta+2)\Delta$ is equivalent to the event of $\{\bu \leq \Gamma - \frac{\bn_1 + \bn_2}{2}, \bu \geq -\Gamma - \frac{\bn_1 + \bn_2}{2} \}$, where $$\Gamma \triangleq M - \frac{(\eta + 2)\Delta}{2}.$$ Therefore, we can reformulate \eqref{best_mmse_v2} to 
\begin{align}\label{best_mmse_v3}
    &\mathbb{E}\bigg[ (\bn_1+\bn_2) \mu(\by_1+\by_2, \bn_1-\bn_2) \bigg|\nonumber \\
    &\quad|\bn_1 - \bn_2| \leq \eta \Delta, \bu \leq \Gamma - \frac{\bn_1 + \bn_2}{2}, \bu \geq -\Gamma - \frac{\bn_1 + \bn_2}{2}\bigg] \nonumber \\
    &= 0.
\end{align}
To prove Lemma \ref{MMSE_for_Internal_Part}, we show \eqref{best_mmse_v3}. Note that
\begin{align}\label{best_mmse_v4}
    &\mathbb{E}\bigg[ (\bn_1+\bn_2) \mu(\by_1+\by_2, \bn_1-\bn_2) \bigg| \nonumber\\
    &\quad|\bn_1 - \bn_2| \leq \eta \Delta, \bu \leq \Gamma - \frac{\bn_1 + \bn_2}{2}, \bu \geq -\Gamma - \frac{\bn_1 + \bn_2}{2}\bigg] \nonumber \\
    &= \mathbb{E}_{\bn_1,\bn_2}\bigg[ \mathbb{E}\bigg[ (n_1+n_2) \mu(\by_1+\by_2, n_1-n_2) \bigg| \nonumber \\
    &\quad|\bn_1 - \bn_2| \leq \eta \Delta, \bu \leq \Gamma - \frac{\bn_1 + \bn_2}{2},\nonumber\\
    &\quad \bu \geq -\Gamma - \frac{\bn_1 + \bn_2}{2}, \bn_1 =n_1, \bn_2 =n_2\bigg]\bigg].
\end{align}
In addition, given the condition of $|\bn_1 - \bn_2| \leq \eta \Delta$, we have
\begin{align}
    |\bn_1 + \bn_2| &= |\bn_h + \bn_a| \leq 2|\bn_h| + |\bn_a - \bn_h| 
 \nonumber \\
 &= 2|\bn_h| + |\bn_1 - \bn_2| \leq 2\Delta + \eta\Delta = (\eta + 2)\Delta.
\end{align}
This implies that $\Gamma - \frac{\bn_1 + \bn_2}{2} \leq \Gamma + \frac{(\eta + 2)\Delta}{2} \leq M$, and $-\Gamma - \frac{\bn_1 + \bn_2}{2} \geq -\Gamma - \frac{(\eta + 2)\Delta}{2} \geq -M$. Therefore, since $\mathbf{u} \sim \text{unif}[-M,M]$, given $n_1,n_2$ and the fact that $|n_1 - n_2| \leq \eta \Delta$, the event of $\{ \bu \leq \Gamma - \frac{n_1 + n_2}{2}, \bu \geq -\Gamma - \frac{n_1 + n_2}{2}\}$ implies that 
$\mathbf{u}  \sim \text{unif}[-\Gamma - \frac{n_1 + n_2}{2},\Gamma - \frac{n_1 + n_2}{2}]$ and thus $\by_1 + \by_2 \sim \text{unif}[-2\Gamma ,2\Gamma ]$. Let $\mathbf{v} \triangleq \by_1 + \by_2 $ be a random variable in $[-2\Gamma ,2\Gamma ]$. We can reformulate \eqref{best_mmse_v4} as
\begin{align}\label{best_mmse_v5}
    &\mathbb{E}_{\bn_1,\bn_2}\bigg[ \mathbb{E}\bigg[ (n_1+n_2) \mu(\by_1+\by_2, n_1-n_2) \bigg| \nonumber \\
    &\quad |\bn_1 - \bn_2| \leq \eta \Delta, \bu \leq \Gamma - \frac{\bn_1 + \bn_2}{2}, \nonumber \\
    &\quad \bu \geq -\Gamma - \frac{\bn_1 + \bn_2}{2}, \bn_1 =n_1, \bn_2 =n_2\bigg]\bigg] \nonumber \\
    &=\mathbb{E}_{\bn_1,\bn_2}\bigg[ \mathbb{E}\bigg[ (n_1+n_2) \mu(\mathbf{v}, n_1-n_2) \bigg| \nonumber \\
    &\quad |\bn_1 - \bn_2| \leq \eta \Delta,  \bn_1 =n_1, \bn_2 =n_2\bigg]\bigg] \nonumber \\
    & = \mathbb{E}\bigg[ (\bn_1+\bn_2) \mu(\mathbf{v}, \bn_1-\bn_2) \bigg| |\bn_1 - \bn_2| \leq \eta \Delta \bigg].
\end{align}
 Let $\mathbf{t} \triangleq \bn_1 - \bn_2$, and $\mathbf{s} \triangleq \bn_1 + \bn_2$.  Recall that both $\mathbf{n}_h$ and $\mathbf{n}_a$ have symmetric distributions. Additionally, with a probability of $\frac{1}{2}$, node $1$ is honest and the second node is adversarial, and with a probability of $\frac{1}{2}$, it is vice versa. Therefore, each of the pairs of $(n_1, n_2)$ and $(-n_2, -n_1)$ are equally probable. This implies that for any value of $t$, $f_{\mathbf{s}|\mathbf{t}}(s|t)$ is symmetric with respect to $s$.
Also, $f_{\mathbf{s}|\mathbf{t}, |\mathbf{t}| \leq \eta\Delta}(s|t, |\mathbf{t}| \leq \eta\Delta)$ is symmetric with respect to $s$.
 Using this, we reformulate \eqref{best_mmse_v5} as
\begin{align}\label{best_mmse_v6}
     \mathbb{E}\bigg[& (\bn_1+\bn_2) \mu(\mathbf{v}, \bn_1-\bn_2) \bigg| |\bn_1 - \bn_2| \leq \eta \Delta \bigg] \nonumber \\
     &= \mathbb{E}\bigg[ \mathbf{s} \mu(\mathbf{v}, \mathbf{t}) \bigg| |\mathbf{t}| \leq \eta \Delta \bigg] \nonumber \\
     &=\mathbb{E}_{\mathbf{v},\mathbf{t}} \bigg[ 
     \mathbb{E} \big[
     \mathbf{s} \mu(v, t) \big| |\mathbf{t}| \leq \eta \Delta, v,t
     \big]
     \bigg] \nonumber \\
     & = \mathbb{E}_{\mathbf{v},\mathbf{t}} \bigg[ \mu(v, t)
     \mathbb{E} \big[
     \mathbf{s}  \big| |\mathbf{t}| \leq \eta \Delta, v,t
     \big]
     \bigg] \nonumber \\
     &\overset{(a)}{=}
     \mathbb{E}_{\mathbf{v},\mathbf{t}} \bigg[ \mu(v, t)
     \mathbb{E} \big[
     \mathbf{s}  \big| |\mathbf{t}| \leq \eta \Delta, t
     \big]
     \bigg] \nonumber \\
     & \overset{(b)}{=}
     0,
\end{align}
where (a) follows from the fact that $\mathbf{v} \sim \text{unif}[-2\Gamma ,2\Gamma ]$ and independent to $\mathbf{s}$, and (b) follows from the fact that $f_{\mathbf{s}|\mathbf{t}, |\mathbf{t}| \leq \eta\Delta}(s|t, |\mathbf{t}| \leq \eta\Delta)$ is symmetric with respect to $s$, resulting in $\mathbb{E} \big[
     \mathbf{s}  \big| |\mathbf{t}| \leq \eta \Delta, t
     \big] = 0$.
     This completes the proof of Lemma \ref{MMSE_for_Internal_Part}.

\section{Proof of Lemma \ref{bound_mse_symmetric_mean}}\label{proof:bound_mse_symmetric_mean}

    Note that $\mathsf{MMSE}\big( \gdot, \pare\big) = \mathsf{MSE}\big(\est^*_{\pare,g}(.,.), \gdot, \pare\big)$. Based on definition \eqref{MMSE_definition}, we have 
    \begin{align}
        \mathsf{MMSE}\big( \gdot, \pare\big) - \mathsf{MSE}\big(\mathsf{mean}(.,.), \gdot, \pare\big) \leq 0.
    \end{align}
 To complete the proof of this lemma, we prove the lower bound in \eqref{statement_of_lemma_for_mmse_and_mean}. Based on Lemma \ref{MMSE_for_Internal_Part}, for the case of $|y_1+y_2| \leq 2M - (\eta+2)\Delta$, we have
\begin{align}
    \est^*_{\pare,g}\big( \underline{y} \big) = \mathbb{E}\big[\mathbf{u} | \mathcal{A}_{\pare}, \underline{y}\big] = \frac{y_1+y_2}{2}.
\end{align}
 One can verify
\begin{align}\label{MMSE_lower_bound_general}
     &\mathsf{MMSE}\big( \gdot, \pare\big) \nonumber \\
     &=  \mathbb{E}\big[\big(\mathbf{u} - \est^*_{\pare,g}(\mathbf{\underline{y}})\big)^2 | \mathcal{A}_{\pare}\big] \nonumber \\
     &=\mathbb{E}_{\mathbf{\underline{y}}|\acce}\Big[ \mathbb{E}\big[\big(\mathbf{u} - \est^*_{\pare,g}(\underline{y})\big)^2 | \mathcal{A}_{\pare} ,\mathbf{\underline{y}} = \underline{y}\big] \Big] \nonumber \\
     &\overset{(a)}{=} \mathbb{E}_{\mathbf{\underline{y}}|\acce}\Big [ \mathbb{E}\big[\big(\mathbf{u} - \frac{y_1+y_2}{2}\big)^2 | \mathcal{A}_{\pare} ,\mathbf{\underline{y}} = \underline{y}\big] \nonumber \\
     &\quad \mathbbm{1}(|y_1+y_2| \leq 2M - (\eta+2)\Delta) \nonumber \\ \left.
     \right. &\quad+ \mathbb{E}\big[\big(\mathbf{u} - \est^*_{\pare,g}(\underline{y})\big)^2 | \mathcal{A}_{\pare} ,\mathbf{\underline{y}} = \underline{y}\big] \nonumber \\
     &\quad \mathbbm{1}(|y_1+y_2| > 2M - (\eta+2)\Delta)\Big ] \nonumber \\
      &\geq \mathbb{E}_{\mathbf{\underline{y}}|\acce}\Big [ \mathbb{E}\big[\big(\mathbf{u} - \frac{y_1+y_2}{2}\big)^2 | \mathcal{A}_{\pare} ,\mathbf{\underline{y}} = \underline{y}\big] \nonumber \\
      &\quad \mathbbm{1}(|y_1+y_2| \leq 2M - (\eta+2)\Delta)  \Big],
 \end{align}
 where (a) follows from Lemma \ref{MMSE_for_Internal_Part}.

 On the other hand, one can verify that
 \begin{align}\label{expansion_mean_estimator}
     &\mathsf{MSE}\big(\mathsf{mean}(.,.), \gdot, \pare\big)  \nonumber \\
     &=  \mathbb{E}\big[\big(\mathbf{u} - \frac{\by_1+\by_2}{2}\big)^2 | \mathcal{A}_{\pare}\big] \nonumber \\
     &=\mathbb{E}_{\mathbf{\underline{y}}|\acce}\Big[ \mathbb{E}\big[\big(\mathbf{u} - \frac{\by_1+\by_2}{2}\big)^2 | \mathcal{A}_{\pare} ,\mathbf{\underline{y}} = \underline{y}\big] \Big] \nonumber \\
     &=\mathbb{E}_{\mathbf{\underline{y}}|\acce}\Big [ \mathbb{E}\big[\big(\mathbf{u} - \frac{y_1+y_2}{2}\big)^2 | \mathcal{A}_{\pare} ,\mathbf{\underline{y}} = \underline{y}\big] \nonumber \\
     &\quad\mathbbm{1}(|y_1+y_2| \leq 2M - (\eta+2)\Delta) \nonumber \\ \left.
     \right. &\quad+ \mathbb{E}\big[\big(\mathbf{u} - \frac{y_1+y_2}{2}\big)^2 | \mathcal{A}_{\pare} ,\mathbf{\underline{y}} = \underline{y}\big] \nonumber \\
     &\quad\mathbbm{1}(|y_1+y_2| > 2M - (\eta+2)\Delta)\Big ].
 \end{align}

 Based on \eqref{MMSE_lower_bound_general} and \eqref{expansion_mean_estimator}, we have
 \begin{align}\label{lowerbound_difference_mmse_mean}
     &\mathsf{MMSE}\big( \gdot, \pare\big) - \mathsf{MSE}\big(\mathsf{mean}(.,.), \gdot, \pare\big) \nonumber \\ 
     &\geq - \mathbb{E}_{\mathbf{\underline{y}}|\acce}\Big [\mathbb{E}\big[\big(\mathbf{u} - \frac{y_1+y_2}{2}\big)^2 | \mathcal{A}_{\pare} ,\mathbf{\underline{y}} = \underline{y}\big] \nonumber \\
     &\quad\mathbbm{1}(|y_1+y_2| > 2M - (\eta+2)\Delta)\Big ].
 \end{align}
Note that
\begin{align}\label{lower_bound_mean_mase_outside_internal}
    &\mathbb{E}\big[\big(\mathbf{u} - \frac{y_1+y_2}{2}\big)^2 | \mathcal{A}_{\pare} ,\mathbf{\underline{y}} = \underline{y}\big] \nonumber \\
    &= \frac{1}{4}\mathbb{E}\big[ (\bn_a + \bn_h)^2 | ~|\bn_a - \bn_h| \leq \eta \Delta,~\mathbf{\underline{y}} = \underline{y}\big] \nonumber \\
    & \overset{(a)}{\leq} \frac{1}{4}\mathbb{E}\big[ (\bn_a - \bn_h)^2 | ~|\bn_a - \bn_h| \leq \eta \Delta,~\mathbf{\underline{y}} = \underline{y}\big]\nonumber \\
    &\quad + \frac{1}{4}\mathbb{E}\big[ (2\bn_h)^2 | ~|\bn_a - \bn_h| \leq \eta \Delta,~\mathbf{\underline{y}} = \underline{y}\big] \nonumber \\
    & \leq \frac{(\eta \Delta)^2}{4} + \frac{(2\Delta)^2}{4} \nonumber \\
    & = \frac{(\eta^2+4)\Delta^2}{4},
\end{align}
where (a) follows from triangle inequality.

Based on \eqref{lowerbound_difference_mmse_mean} and \eqref{lower_bound_mean_mase_outside_internal}, we have
\begin{align}\label{difference_mean_mmse_new}
    &\mathsf{MMSE}\big( \gdot, \pare\big) - \mathsf{MSE}\big(\mathsf{mean}(.,.), \gdot, \pare\big) \nonumber \\ 
     &\geq \frac{-(\eta^2+4)\Delta^2}{4} \mathbb{E}_{\mathbf{\underline{y}}|\acce}\Big [ \mathbbm{1}(|y_1+y_2| > 2M - (\eta+2)\Delta)\Big ] \nonumber \\
     & = \frac{-(\eta^2+4)\Delta^2}{4} \Pr \big( |\by_1 + \by
_2 | > 2M - (\eta+2)\Delta ~\big|  \acce\big).
\end{align}

Note that, given the acceptance of the inputs, for the case of $|u| < M - (\eta+2)\Delta$ we have
\begin{align}
    |y_1 + y_2| &= |2u + n_a + n_h| \leq |2u| + |n_a - n_h| + |2n_h| \nonumber \\
    & \leq \big( 2M - 2(\eta+2)\Delta \big) + \eta \Delta + 2\Delta \nonumber \\
    & = 2M - \big( \eta + 2\big)\Delta.
\end{align}
This implies that
\begin{align}
    \Pr &\big( |\by_1 + \by
_2 | \leq 2M - (\eta+2)\Delta ~\big|  \acce\big)\nonumber \\
&\geq \Pr (|\bu| < M - (\eta+2)\Delta) \nonumber \\
&= 1 - \frac{(\eta+2)\Delta}{M},
\end{align}
and thus
\begin{align}\label{bound_of_being_internal}
    \Pr \big( |\by_1 + \by
_2 | > 2M - (\eta+2)\Delta ~\big|  \acce\big) \leq \frac{(\eta+2)\Delta}{M}.
\end{align}

Based on \eqref{difference_mean_mmse_new} and \eqref{bound_of_being_internal}, we have
\begin{align}
    \mathsf{MMSE}\big( \gdot, \pare\big) &- \mathsf{MSE}\big(\mathsf{mean}(.,.), \gdot, \pare\big)\nonumber \\
    &\geq 
    \frac{-(\eta^2+4)(\eta+2)\Delta^3}{M}.
\end{align}
This completes the proof of Lemma \ref{bound_mse_symmetric_mean}.

\section{Proof of Lemma \ref{general_format_symmetric}}\label{proof:general_format_symmetric}
    First we show \eqref{general_symmetric_acc}. For any $\gdot \in \Lambda_{\mathsf{AD}}$, note that
\begin{align}\label{acc_v1}
    \mathsf{PA} \left( \gdot, \pare \right) = \int_{-\infty}^{\infty}\Pr \big( \acce| \bn_a = z \big) g(z) \,dz.
\end{align}
Recall that the acceptance rule is $|\mathbf{y}_1 - \mathbf{y}_2| \leq \eta \Delta $, or equivalently $|\mathbf{n}_a - \mathbf{n}_h| \leq \eta\Delta$. Also, we assume that $\Pr (|\bn_h| > \Delta) = 0$. Thus, for the case of $|z| < (\eta-1)\Delta$, $\Pr \big( \acce| \bn_a = z \big) = 1$. Also, for the case of $|z| > (\eta+1)\Delta$, $\Pr \big( \acce| \bn_a = z \big) = 0$. Additionally, for the case of $(\eta-1)\Delta \leq z \leq (\eta+1)\Delta$, we have
\begin{align}\label{conditional_acc_positive_bounded}
    \Pr \big( \acce| \bn_a = z \big) &= 
    \Pr \big( \bn_h \in [z-\eta\Delta, \Delta] \big)
    \nonumber \\
    &=\int_{z-\eta\Delta}^{\Delta} f_{\bn_h}(x) \,dx.
\end{align}
Also, in the case of 
$-(\eta+1)\Delta \leq z \leq -(\eta-1)\Delta$, we have
\begin{align}
    \Pr \big( \acce| \bn_a = z \big) &= 
    \Pr \big( \bn_h \in [-\Delta, z+\eta\Delta] \big)
    \nonumber \\
    &=\int_{-\Delta}^{z+\eta\Delta} f_{\bn_h}(x) \,dx \nonumber \\
    &= 
    \int_{-z-\eta\Delta}^{\Delta} f_{\bn_h}(x) \,dx = \Pr \big( \acce| \bn_a = -z \big).
\end{align}
This implies that $\Pr \big( \acce| \bn_a = z \big)$ is a symmetric function with respect to $z$. Therefore, we can rewrite \eqref{acc_v1} as 
\begin{align}
    \mathsf{PA} \left( \gdot, \pare \right) &= \int_{-\infty}^{\infty}\Pr \big( \acce| \bn_a = z \big) g(z) \,dz \nonumber \\
    &= 2\int_{0}^{\infty}\Pr \big( \acce| \bn_a = z \big) g(z) \,dz \nonumber \\
    & \overset{(a)}{=} 2\int_{0}^{(\eta+1)\Delta}\Pr \big( \acce| \bn_a = z \big) g(z) \,dz,
    \label{general_acc_for_symmetric} \\
    & \overset{(b)}{=} 2 \int_{0}^{(\eta-1)\Delta} g(z) \,dz \nonumber \\
    &\quad+ 2\int_{(\eta-1)\Delta}^{(\eta+1)\Delta}\Pr \big( \acce| \bn_a = z \big) g(z) \,dz, \label{semi_general_acc_symmetric} \\
    & \overset{(c)}{=} 2 \int_{0}^{(\eta-1)\Delta} g(z) \,dz \nonumber \\
    &\quad + 2\int_{(\eta-1)\Delta}^{(\eta+1)\Delta} \left( 
  \int_{z-\eta\Delta}^{\Delta} f_{\bn_h}(x) \,dx\right) g(z) \,dz,
\end{align}
where (a) follows from the fact that for the case of $|z| > (\eta+1)\Delta$, $\Pr \big( \acce| \bn_a = z \big) = 0$, and (b) follows from the fact that 
for the case of $|z| < (\eta-1)\Delta$, $\Pr \big( \acce| \bn_a = z \big) = 1$, and (c) follows from \eqref{conditional_acc_positive_bounded}. This completes the proof of \eqref{general_symmetric_acc}.

Now we prove \eqref{general_symmetric_mse}. For any $\gdot \in \Lambda_{\mathsf{AD}}$, note that
\begin{align}\label{costV1}
    &\mathsf{MSE}\big(\mathsf{mean}(.,.), \gdot, \pare \big) \nonumber \\
    &=\mathbb{E}\big[\big(\mathbf{u} - \frac{\by_1+\by_2}{2}\big)^2 | \mathcal{A}_{\pare}\big] =\frac{1}{4}\mathbb{E}\big[\big(\mathbf{n}_h + \mathbf{n}_a\big)^2 | \mathcal{A}_{\pare}\big]\nonumber \\
    &=\frac{1}{4}\int_{-\infty}^{\infty} \mathbb{E}[(\mathbf{n}_h + z)^2 \mid \mathcal{A}_{\pare}, \mathbf{n}_a = z]f_{\mathbf{n}_a|\mathcal{A}_{\pare}}(z|\acce) \,dz \nonumber \\
    &=\frac{1}{4}\int_{-\infty}^{\infty} \mathbb{E}[(\mathbf{n}_h + z)^2 \mid \mathcal{A}_{\pare}, \mathbf{n}_a = z]\frac{\Pr(\mathcal{A}_{\pare}|\mathbf{n}_a =z)g(z)}{\mathsf{PA} \left( \gdot, \pare \right)} \,dz \nonumber \\
    &=\frac{1}{4\mathsf{PA} \left( \gdot, \pare \right)}\int_{-\infty}^{\infty} \mathbb{E}[(\mathbf{n}_h + z)^2 \mid \mathcal{A}_{\pare}, \mathbf{n}_a = z]\nonumber \\
    &\quad \times \Pr(\mathcal{A}_{\pare}|\mathbf{n}_a =z)g(z) \,dz \nonumber \\
    &\overset{(a)}{=}\frac{1}{4\mathsf{PA} \left( \gdot, \pare \right)}\int_{-(\eta+1)\Delta}^{(\eta+1)\Delta} \mathbb{E}[(\mathbf{n}_h + z)^2 \mid \mathcal{A}_{\pare}, \mathbf{n}_a = z]\nonumber \\
    &\quad \times \Pr(\mathcal{A}_{\pare}|\mathbf{n}_a =z)g(z) \,dz,
    \end{align}
    where (a) follows from the fact that for $|z| > (\eta+1)\Delta$, $\Pr \big( \acce| \bn_a = z \big) = 0$. Note that 
    \begin{align}
        &\mathbb{E}[(\mathbf{n}_h + z)^2 \mid \mathcal{A}_{\pare}, \mathbf{n}_a = z]\Pr \big( \acce| \bn_a = z \big) \nonumber \\
        &= \int_{-\Delta}^{\Delta} (x+z)^2f_{\bn_h|\bn_a,\acce}(x|z,\acce) \Pr \big( \acce| \bn_a = z \big) \,dx  \ \nonumber \\
        &= \int_{-\Delta}^{\Delta} (x+z)^2 \Pr \big( \acce| \bn_a = z \big)\nonumber \\
        &\quad \times \frac{\Pr \big( \acce| \bn_a = z , \bn_h = x\big)f_{\bn_a|\bn_h}(z|x)f_{\bn_h}(x)}{\Pr \big( \acce| \bn_a = z \big)g(z)}\,dx \nonumber \\
        &= \int_{-\Delta}^{\Delta} (x+z)^2 \nonumber \\
        &\quad \times \frac{\Pr \big( \acce| \bn_a = z , \bn_h = x\big)f_{\bn_a|\bn_h}(z|x)f_{\bn_h}(x)}{g(z)}\,dx \nonumber \\
        & \overset{(a)}{=} \int_{-\Delta}^{\Delta} (x+z)^2\Pr \big( \acce| \bn_a = z , \bn_h = x\big)f_{\bn_h}(x) \,dx
    \end{align}
    where (a) follows from the fact that $\bn_a$ and $\bn_h$ are independent and thus $f_{\bn_a|\bn_h}(z|x)  = g(z)$. Therefore, we can rewrite \eqref{costV1} as  
    \begin{align}\label{costV2}
        &\mathsf{MSE}\big(\mathsf{mean}(.,.), \gdot, \pare \big) \nonumber \\
        &= \frac{1}{4\mathsf{PA} \left( \gdot, \pare \right)}\int_{-(\eta+1)\Delta}^{(\eta+1)\Delta} \int_{-\Delta}^{\Delta} (x+z)^2 \nonumber \\
        &\quad \times \Pr \big( \acce| \bn_a = z , \bn_h = x)f_{\bn_h}(x)g(z)\,dx  \,dz.
    \end{align}
    Note that for the case of $|z| \leq (\eta-1)\Delta$, $\Pr \big( \acce| \bn_a = z , \bn_h = x) = 1$, thus,
    \begin{align}\label{mse_fixed_noisev1}
        &\int_{-\Delta}^{\Delta} (x+z)^2\Pr \big( \acce| \bn_a = z , \bn_h = x)f_{\bn_h}(x)\,dx \nonumber \\
        &= \int_{-\Delta}^{\Delta} (x+z)^2f_{\bn_h}(x)\,dx = z^2 + \sigma^2_{\bn_h},
    \end{align}
    which is a symmetric function with respect to $z$. Additionally, for the case of $(\eta-1)\Delta \leq z \leq (\eta+1)\Delta$, we have
    \begin{align}\label{mse_fixed_noisev2}
        &\int_{-\Delta}^{\Delta} (x+z)^2\Pr \big( \acce| \bn_a = z , \bn_h = x)f_{\bn_h}(x)\,dx \nonumber \\
        &= \int_{z-\eta\Delta}^{\Delta} (x+z)^2f_{\bn_h}(x)\,dx.
    \end{align}
    Also, for the case of $-(\eta+1)\Delta \leq z \leq -(\eta-1)\Delta$, we have
    \begin{align}\label{mse_fixed_noisev3}
        &\int_{-\Delta}^{\Delta} (x+z)^2\Pr \big( \acce| \bn_a = z , \bn_h = x)f_{\bn_h}(x)\,dx \nonumber \\
        &= \int_{-\Delta}^{z+\eta\Delta} (x+z)^2f_{\bn_h}(x)\,dx.
    \end{align}

            Combining \eqref{mse_fixed_noisev1}, \eqref{mse_fixed_noisev2}, and \eqref{mse_fixed_noisev3}, one can verify that $\int_{-\Delta}^{\Delta} (x+z)^2\Pr \big( \acce| \bn_a = z , \bn_h = x)f_{\bn_h}(x)\,dx$ is a symmetric function with respect to $z$, for the case of $|z| \leq (\eta+1)\Delta$.
    Based on this observation, we can rewrite \eqref{costV2} as 
    \begin{align}
        &\mathsf{MSE}\big(\mathsf{mean}(.,.), \gdot, \pare \big) \nonumber \\
        &=  \frac{1}{2\mathsf{PA} \left( \gdot, \pare \right)}\int_{0}^{(\eta+1)\Delta}\int_{-\Delta}^{\Delta} (x+z)^2\nonumber \\
        &\quad \times \Pr \big( \acce| \bn_a = z , \bn_h = x)f_{\bn_h}(x)g(z)\,dx  \,dz \label{mse_mean_general_proxy} \\
         &\overset{(a)}{=} \frac{1}{2\mathsf{PA} \left( \gdot, \pare \right)}\int_{0}^{(\eta-1)\Delta} (z^2 + \sigma^2_{\bn_h})g(z) \,dz \nonumber \\
        &+\frac{1}{2\mathsf{PA} \left( \gdot, \pare \right)}\int_{(\eta-1)\Delta}^{(\eta+1)\Delta} \int_{z-\eta\Delta}^{\Delta} (x+z)^2f_{\bn_h}(x)g(z)\,dx  \,dz
    \end{align}
    where (a) follows from \eqref{mse_fixed_noisev1} and \eqref{mse_fixed_noisev2}. This completes the proof of \eqref{general_symmetric_mse}.

\section{Proof of Lemma \ref{lemma:bounded_noise_existence}}\label{proof:lemma:bounded_noise_existence}
    To prove this lemma we first show that there exist a symmetric noise distribution $g_0(.)$, such that $g_0(.)=0$ for $|z| > (\eta+1)\Delta$, and 
    \begin{align*}
        \PA(g_0(.), \eta) &\geq \PA(g_1(.), \eta),  \\
        \mathsf{MSE}(\mathsf{mean}(.,.), g_0(.), \eta) &= \mathsf{MSE}(\mathsf{mean}(.,.), g_1(.), \eta) .
    \end{align*}
    To prove this, note that
    since $g_1(.)$ is symmetric, we have $\int_{0}^{\infty} g_1(z) \,dz = \frac{1}{2}$.
    Let  $b \triangleq \int_{(\eta+1)\Delta}^{\infty} g_1(z) \,dz$. Since $\PA(g_1(.), \eta) > 0$, based on \eqref{general_symmetric_acc}, it implies that $b < \frac{1}{2}$. Define  a symmetric noise distribution $g_0(.)$, where $g_0(z) = \frac{g_1(z)}{1 - 2b}$, for all $|z| \leq (\eta+1)\Delta$, and otherwise $g_0(z) = 0$.
Based on \eqref{general_symmetric_acc}, we have
\begin{align}\label{new_noise_bounded_acc}
    \mathsf{PA} \left( g_0(.), \pare \right) = \frac{\mathsf{PA} \left( g_1(.), \pare \right)}{1-2b} \geq \mathsf{PA} \left( g_1(.), \pare \right).
\end{align}
On the other hand, Based on \eqref{general_symmetric_mse}, we have
\begin{align}\label{new_noise_bounded_mse}
    \mathsf{MSE}\big(\mathsf{mean}(.,.), g_0(.), \pare \big) = \mathsf{MSE}\big(\mathsf{mean}(.,.), g_1(.), \pare \big).
\end{align}

     For any $x \in \mathbb{R}$ define the function $u(z,x): \mathbb{R} \to \mathbb{R}$ such that $u(z,x) = 1$, if and only if $z\geq x$, and otherwise we have   $u(z,x) = 0$. Define  a symmetric noise distribution $g_2(.)$, where for $z \geq 0$, we have
     \begin{align}\label{def_of_g_0}
         g_2(z) &= \delta(z - (\eta-1)\Delta)  \times \int_{0}^{(\eta-1)\Delta} g_0(z) \,dz \nonumber \\
         &\quad + g_0(z)  \big(u(z,(\eta-1)\Delta) - u(z,(\eta+1)\Delta)\big),
     \end{align}
     where $\delta(z)$ is a Dirac's delta function. Note that $g_2(.)$ is a symmetric noise distribution and $g_2(.)=0$ for
     $|z| < (\eta-1)\Delta$,   
    $|z| > (\eta+1)\Delta$. We show that it satisfies \eqref{betternoise_acc} and \eqref{betternoise_mse}.
    Based on \eqref{general_acc_for_symmetric}, we have \begin{align}
        &\mathsf{PA} \left( g_2(.), \pare \right) \nonumber \\
        &= 2\int_{0}^{(\eta+1)\Delta}\Pr \big( \acce| \bn_a = z \big) g_2(z) \,dz, \nonumber \\
        &\overset{(a)}{=}2\int_{0}^{(\eta+1)\Delta}\Pr \big( \acce| \bn_a = z \big) \nonumber \\
        &\quad \times \bigg( \delta(z - (\eta-1)\Delta) \times \int_{0}^{(\eta-1)\Delta} g_0(z) \,dz\bigg) \,dz \nonumber\\
        &\quad +2\int_{0}^{(\eta+1)\Delta}\Pr \big( \acce| \bn_a = z \big) \nonumber \\
        &\quad \times \bigg( g_0(z)  \big(u(z,(\eta-1)\Delta) - u(z,(\eta+1)\Delta)\big)\bigg) \,dz
        \nonumber\\
        &= 2 \Pr \big( \acce| \bn_a = (\eta-1)\Delta \big) \int_{0}^{(\eta-1)\Delta} g_0(z) \,dz \nonumber \\
        &\quad+ 2\int_{(\eta-1)\Delta}^{(\eta+1)\Delta}\Pr \big( \acce| \bn_a = z \big) g_0(z) \,dz
        \nonumber \\
        &\overset{(b)}{=} 2 \int_{0}^{(\eta-1)\Delta} g_0(z) \,dz \nonumber \\
        &\quad+ 2\int_{(\eta-1)\Delta}^{(\eta+1)\Delta}\Pr \big( \acce| \bn_a = z \big) g_0(z) \,dz \nonumber \\
        & \overset{(c)}{=} \mathsf{PA} \left( g_0(.), \pare \right)\label{g_0_relation_to_g_2} \\
        &\overset{(d)}{\geq} \mathsf{PA} \left( g_1(.), \pare \right) \label{new_noise_same_acc}
    \end{align}
    where (a) follows from  \eqref{def_of_g_0}, and (b) follows from the fact that acceptance rule is $|\mathbf{y}_1 - \mathbf{y}_2| \leq \eta \Delta $, or equivalently $|\mathbf{n}_a - \mathbf{n}_h| \leq \eta\Delta$, and we assume that $\Pr (|\bn_h| > \Delta) = 0$, which implies that $\Pr \big( \acce| \bn_a = (\eta-1)\Delta \big) = 1$, and (c)
    follows from \eqref{semi_general_acc_symmetric}, and (d) follows from \eqref{new_noise_bounded_acc}.
    
    On the other hand, based on \eqref{mse_mean_general_proxy}, we have
 \begin{align}\label{new_noise_same_mse}
        &\mathsf{MSE}\big(\mathsf{mean}(.,.), g_2(.), \pare \big) \nonumber\\
        &=\frac{1}{2\mathsf{PA} \left( g_2(.), \pare \right)}\int_{0}^{(\eta+1)\Delta} \int_{-\Delta}^{\Delta} (x+z)^2\nonumber \\
        &\quad \times \Pr \big( \acce| \bn_a = z , \bn_h = x)f_{\bn_h}(x)g_2(z)\,dx  \,dz 
        \nonumber\\
        &\overset{(a)}{=} \frac{1}{2\mathsf{PA} \left( g_2(.), \pare \right)} \int_{0}^{(\eta+1)\Delta} 
        \int_{-\Delta}^{\Delta} (x+z)^2 \nonumber \\
        &\quad \times \Pr \big( \acce| \bn_a = z , \bn_h = x)f_{\bn_h}(x)\,dx \nonumber \\
        &\quad \times \left( \delta(z - (\eta-1)\Delta) \int_{0}^{(\eta-1)\Delta} g_0(z) \,dz \right) \,dz
        \nonumber \\
        &+\frac{1}{2\mathsf{PA} \left( g_2(.), \pare \right)} \int_{0}^{(\eta+1)\Delta} 
        \int_{-\Delta}^{\Delta} (x+z)^2 \nonumber\\
        &\quad \times \Pr \big( \acce| \bn_a = z , \bn_h = x)f_{\bn_h}(x)\,dx \nonumber\\
        &\quad \times \left( g_0(z)\left(u(z,(\eta-1)\Delta) - u(z,(\eta+1)\Delta)\right) \right) \,dz
        \nonumber\\
        &\overset{(b)}{=} \frac{1}{2\mathsf{PA} \left( g_2(.), \pare \right)} 
        \int_{-\Delta}^{\Delta} (x+(\eta-1)\Delta)^2 \nonumber\\
        &\quad \times \Pr \big( \acce| \bn_a = (\eta-1)\Delta , \bn_h = x)
        f_{\bn_h}(x)\,dx \nonumber\\
        &\quad \times \left( \int_{0}^{(\eta-1)\Delta} g_0(z) \,dz \right) \nonumber \\
        &+\frac{1}{2\mathsf{PA} \left( g_2(.), \pare \right)}\int_{(\eta-1)\Delta}^{(\eta+1)\Delta} \int_{z-\eta\Delta}^{\Delta} (x+z)^2f_{\bn_h}(x)g_0(z)\,dx  \,dz \nonumber \\
        &\overset{(c)}{=}\frac{1}{2\mathsf{PA} \left( g_2(.), \pare \right)}((\eta-1)^2 + \sigma^2_{\bn_h}) \int_{0}^{(\eta-1)\Delta} g_0(z) \,dz \nonumber \\
        &+\frac{1}{2\mathsf{PA} \left( g_2(.), \pare \right)}\int_{(\eta-1)\Delta}^{(\eta+1)\Delta} \int_{z-\eta\Delta}^{\Delta} (x+z)^2f_{\bn_h}(x)g_0(z)\,dx  \,dz \nonumber \\
        &\geq \frac{1}{2\mathsf{PA} \left( g_2(.), \pare \right)} \int_{0}^{(\eta-1)\Delta} (z^2 + \sigma^2_{\bn_h})g_0(z) \,dz \nonumber \\
        &+\frac{1}{2\mathsf{PA} \left( g_2(.), \pare \right)}\int_{(\eta-1)\Delta}^{(\eta+1)\Delta} \int_{z-\eta\Delta}^{\Delta} (x+z)^2f_{\bn_h}(x)g_0(z)\,dx  \,dz 
        \nonumber \\
        &\overset{(d)}{=}\frac{1}{2\mathsf{PA} \left( g_0(.), \pare \right)} \int_{0}^{(\eta-1)\Delta} (z^2 + \sigma^2_{\bn_h})g_0(z) \,dz \nonumber \\
        &+\frac{1}{2\mathsf{PA} \left( g_0(.), \pare \right)}\int_{(\eta-1)\Delta}^{(\eta+1)\Delta} \int_{z-\eta\Delta}^{\Delta} (x+z)^2f_{\bn_h}(x)g_0(z)\,dx  \,dz 
        \nonumber \\
        &\overset{(e)}{=}  \mathsf{MSE}\big(\mathsf{mean}(.,.), g_0(.), \pare \big), \nonumber \\
        & \overset{(f)}{=}  \mathsf{MSE}\big(\mathsf{mean}(.,.), g_1(.), \pare \big)
\end{align}
where (a) follows from \eqref{def_of_g_0}, (b) follows from \eqref{mse_fixed_noisev2}, (c) follows from \eqref{mse_fixed_noisev1}, (d) follows from \eqref{g_0_relation_to_g_2}, (e) follows from \eqref{general_symmetric_mse}, (f) follows from \eqref{new_noise_bounded_mse}. 
    
Combining \eqref{new_noise_same_acc} and \eqref{new_noise_same_mse} completes the proof of Lemma \ref{lemma:bounded_noise_existence}.

\section{Proof of Lemma \ref{lemma:exact_acc_noise_existence}}\label{proof:lemma:exact_acc_noise_existence}
Let $g_1(.)$ be a satisfying noise of Lemma \ref{lemma:bounded_noise_existence} and $\mathsf{PA} \left( g_1(.), \pare \right) = \alpha_1 > \alpha$.
    Define   satisfying noise of Lemma \ref{lemma:bounded_noise_existence} $g_2(.)$, where $g_2(z) = \frac{\alpha}{\alpha_1}g_1(z)$, for all $|z| \leq (\eta+1)\Delta$, $g_2((\eta+2)\Delta) = \frac{\alpha_1-\alpha}{2\alpha_1}\delta(z - (\eta+2)\Delta)$, and otherwise, $g_2(z) = 0$. Based on \eqref{general_symmetric_acc}, one can verify that
    \begin{align}\label{new_noise_exactprob_acc}
        \mathsf{PA} \left( g_2(.), \pare \right) = \frac{\alpha}{\alpha_1}\mathsf{PA} \left( g_1(.), \pare \right) = \alpha.
    \end{align}
    Additionally,
    based on \eqref{general_symmetric_mse}, we have
\begin{align}\label{new_noise_exactprob_mse}
    \mathsf{MSE}\big(\mathsf{mean}(.,.), g_2(.), \pare \big) = \mathsf{MSE}\big(\mathsf{mean}(.,.), g_1(.), \pare \big).
\end{align}
    Combining \eqref{new_noise_exactprob_acc} and \eqref{new_noise_exactprob_mse} completes the proof of Lemma \ref{lemma:exact_acc_noise_existence}.

\section{\texorpdfstring{Characterizing $h_{\eta}(z)$: An special Case}{X}}\label{Characterizing_h_eta}
In this appendix we characterize $h_{\eta}(z)$ and its concave envelop, for the specific case of $\mathbf{n}_h \sim \text{unif}[-\Delta, \Delta]$. Recall that $k_{\eta}(z) \triangleq \int_{z-\eta\Delta}^{\Delta} f_{\bn_h}(x)  \,dx$, where $z \in [(\eta-1)\Delta, (\eta+1)\Delta]$. For the case of $\mathbf{n}_h \sim \text{unif}[-\Delta, \Delta]$, it implies that
\begin{align}
    k_{\eta}(z) = \frac{(\eta+1)\Delta-z}{2\Delta}.
\end{align}
Thus, for $0 \leq q \leq 1$ we have
\begin{align}
    k^{-1}_{\eta}(q) = \Delta(\eta+1-2q).
\end{align}
On the other hand, we have
\begin{align}
    \nu_{\eta}(z) &= \int_{z-\eta\Delta}^{\Delta} (x+z)^2f_{\bn_h}(x)\,dx \nonumber \\
    &= \frac{(\Delta+z)^3 - (2z-\eta \Delta)^3}{6\Delta}.
\end{align}
Therefore, we have
\begin{align}
    h_{\eta}(q) &= \nu_{\eta}(k_{\eta}^{-1} (q)) \nonumber \\
    &= \Delta^2 \frac{(\eta+2-2q)^3 - (\eta+2-4q)^3}{6}.
\end{align}
This implies that
\begin{align}
    \frac{d}{dq}h_{\eta}(q) = \Delta^2(28q^2 - 12q(\eta+2) + (\eta+2)^2),
\end{align}
and 
\begin{align}\label{second_derivitive}
    \frac{d^2}{dq^2}h_{\eta}(q) = \Delta^2(56q - 12\eta - 24).
\end{align}
Based on \eqref{second_derivitive}, for the case of $\eta \geq \frac{8}{3}$, and $0 \leq q \leq 1$, we have
\begin{align}
    \frac{d^2}{dq^2}h_{\eta}(q) &= \Delta^2(56q - 12\eta - 24) \nonumber \\
    &\leq \Delta^2(56 - 12\frac{8}{3} - 24) = 0.
\end{align}
Therefore, for the case of $\eta \geq \frac{8}{3}$, and $0 \leq q \leq 1$, $h_{\eta}(q)$ is a concave function, leading to  $h^*_{\eta}(q) = h_{\eta}(q)$. On the other hand, for the case of $2 \leq \eta < \frac{8}{3}$, based on \eqref{second_derivitive}, one can easily verify that for the case of $\frac{12\eta+24}{56} < q \leq 1$,  $h_{\eta}(q)$ is convex, and for the case of $0 \leq q \leq \frac{12\eta+24}{56}$, it is concave. In order to find the concave envelop, i.e., $h_{\eta}^*(q)$ for this case, we draw a line passing through the point $\big(1,  h_{\eta}(1) \big) $  and find the point of tangency with $h_{\eta}(q)$. More precisely, we need to solve the following equation for $q$:
\begin{align}\label{tengency_equation}
  h_{\eta}(q) - h_{\eta}(1) = \big(\frac{d}{dq}h_{\eta}(q)\big)  (q-1).
\end{align}
Note that
\begin{align}\label{touching_point_equation}
    &\big( h_{\eta}(q) - h_{\eta}(1)\big)  - \big(\frac{d}{dq}h_{\eta}(q)\big)  (q-1) \nonumber \\
    &=  \Delta^2(q-1)\frac{28q^2 - 2q(9\eta+4) + 3\eta^2 - 6\eta + 4}{3} \nonumber \\
    &\quad -\big(\frac{d}{dq}h_{\eta}(q)\big)  (q-1) \nonumber \\
    & = \Delta^2(q-1) \big( \frac{28q^2 - 2q(9\eta+4) + 3\eta^2 - 6\eta + 4}{3} \nonumber \\
    &\quad - 28q^2 - 12q(\eta+2) + (\eta+2)^2 \big) \nonumber \\
    & = \frac{2\Delta^2(q-1)^2}{3}(-28q + 9\eta+4).
\end{align}
Based on \eqref{touching_point_equation} the point of tangency is $\frac{9\eta+4}{28}$. Using this, one can easily calculate the concave envelop using the details of the proof of Lemma \ref{lemma:lower_bound_of_mean}.  In particular, if $\eta \geq \frac{8}{3}$ or $0 \leq q \leq \frac{9\eta+4}{28}$, we have $h^*(q) = h(q)$. Otherwise, we have 
    \begin{align}
        h^*_{\eta}(q) = \frac{h_{\eta}(1) - h_{\eta}(\frac{9\eta+4}{28})}{1 - \frac{9\eta+4}{28}} (q - \frac{9\eta+4}{28}) + h_{\eta}(\frac{9\eta+4}{28}).
    \end{align}

\section{Proof of Lemma \ref{lemma:more_on_remark:mean_is_good_for_all_best_response}}\label{appendix:lemma:more_on_remark:mean_is_good_for_all_best_response}
Here in this appendix, we prove that for all of the choices of the adversary's noise distributions that achieve an equilibrium, the $\mathsf{mean}$ estimator is optimum, within a vanishing gap. More precisely, for any $\eta \in \Lambda_{\DC}$ and $g^*(.) \in \mathcal{B}^{\eta}_{\mathsf{AD}}$ we have 
\begin{align}
     &\frac{-(\eta^2+4)(\eta+2)\Delta^3}{M} \nonumber \\
     &\leq \mathsf{MMSE}\big( g^*(.) , \pare\big) - \mathsf{MSE}\big(\mathsf{mean}(.,.), g^*(.) , \pare\big)\leq 0.
    \end{align}
    To prove this, note that if $g^*(.) \in \Bar{\Lambda}_{\mathsf{AD}}$, i.e., it is symmetric, based on Lemma \ref{bound_mse_symmetric_mean} this claim is valid.

    Now consider the case that $g^*(.)$ is not symmetric.  Based on definition \eqref{J_definition}
    \begin{align}\label{belonging_to_j}
        \bigl( 
    \mathsf{MMSE}\big(g^*(.),  \eta \big), 
    \mathsf{PA} \left( g^*(.), \pare \right)\big) \in \mathcal{J}_{\eta}.
    \end{align}
    Based on Lemma \ref{lemmaJSC}, \eqref{belonging_to_j} implies that 
    \begin{align}\label{belonging_to_c}
        \bigl( 
    \mathsf{MMSE}\big(g^*(.),  \eta \big), 
    \mathsf{PA} \left( g^*(.), \pare \right)\big) \in \mathcal{C}_{\eta}.
    \end{align}
    Let $\alpha \triangleq \PA (g^*(.), \eta)$. Based on \eqref{defining_the_Set_C_eta}, \eqref{C_definition}, and \eqref{belonging_to_c}  we have
    \begin{align}\label{best_noise_mmse}
       \mathsf{MMSE}\big(g^*(.),  \eta \big) =  \underset{\gdot \in \Lambda_{\mathsf{AD}}}{\max} ~ \underset{\mathsf{PA} \left( \gdot, \pare \right) \geq \alpha}{\mathsf{MMSE}\left(\gdot, \pare \right)}
    \end{align}
    
    Let\footnote{For the case that   \eqref{best_symmetric_noise_definition_appendix} has more than one solution, we just pick one them randomly.}
    \begin{align}
        g^*_s(.) &= \underset{\gdot \in \Bar{\Lambda}_{\mathsf{AD}}}{\arg \max} ~ \underset{\mathsf{PA} \left( \gdot, \pare \right) \geq \alpha}{\mathsf{MMSE}\big( \gdot, \pare \big)}, \label{best_symmetric_noise_definition_appendix} \\
        g_{\textrm{sym}}(z) &\triangleq \frac{g^*(z) + g^*(-z)}{2}.
    \end{align}
    
To complete the proof, note that
    \begin{align}\label{non_symmetic_mean_is_best}
         \mathsf{MMSE}\big(g^*(.),  \pare \big) 
    &\leq \mathsf{MSE}\big(\mathsf{mean}(.,.), g^*(.),  \pare \big) \nonumber \\
    &\overset{(a)}{=}\mathsf{MSE}\big(\mathsf{mean}(.,.), g^*_{\textrm{sym}}(.), \pare \big) \nonumber \\
    &\overset{(b)}{\leq}\underset{\gdot \in \Bar{\Lambda}_{\mathsf{AD}}}{\max} ~ \underset{\mathsf{PA} \left( \gdot, \pare \right) \geq \alpha}{\mathsf{MSE}\big(\mathsf{mean}(.,.), \gdot, \pare \big)} \nonumber \\
    &\overset{(c)}{=} \mathsf{MSE}\big(\mathsf{mean}(.,.), g^*_s(.), \pare \big) \nonumber \\
    & \overset{(d)}{\leq} \mathsf{MMSE}\big( g^*_s(.), \pare \big) + \frac{(\eta^2+4)(\eta+2)\Delta^3}{M} \nonumber \\
    & \overset{(e)}{\leq} \mathsf{MMSE}\big(g^*(.),  \pare \big) + \frac{(\eta^2+4)(\eta+2)\Delta^3}{M} 
    \end{align}

    where (a) follows from Lemma \ref{lemma:making_symmetry}, (b) follows from the fact that $g^*_{\textrm{sym}}(.) \in \Bar{\Lambda}_{\mathsf{AD}}$, (c) follows from \eqref{best_symmetric_noise_definition_appendix}, (d) follows from Lemma \ref{bound_mse_symmetric_mean}, (e) follows from \eqref{best_noise_mmse} and \eqref{best_symmetric_noise_definition_appendix}.

    Note that \eqref{non_symmetic_mean_is_best} implies that
    \begin{align}
        \mathsf{MMSE}\big(g^*(.),  \pare \big) &\leq \mathsf{MSE}\big(\mathsf{mean}(.,.), g^*_s(.), \pare \big) \nonumber \\ &\leq 
        \mathsf{MMSE}\big(g^*(.),  \pare \big) + \frac{(\eta^2+4)(\eta+2)\Delta^3}{M},
    \end{align}
    which completes the proof.

\end{document}